\newtheorem{theo}{Theorem}[section]
\newtheorem{lem}[theo]{Lemma}
\newtheorem{prop}[theo]{Proposition}
\newtheorem{cor}[theo]{Corollary}
\newtheorem{defi}[theo]{Definition}
\newtheorem{rem}[theo]{Remark}
\newtheorem{problem}[theo]{Problem}
\newtheorem{question}[theo]{Question}
\newenvironment{proofof}[1]{\begin{proof}[Proof of #1]}{\end{proof}}
\newcommand{\distr}{\mathcal{D}}
\newcommand{\R}{\mathbb{R}}
\newcommand{\Ex}{\mathbb{E}}
\newcommand{\betw}{\ |\ }
\newcommand{\BPPPATH}{\mathsf{BPP_{path}}}
\newcommand{\SZK}{\textsf{SZK}}
\newcommand{\BPP}{\textsf{BPP}}
\newcommand{\SBP}{\textsf{SBP}}
\newcommand{\coSBP}{\textsf{coSBP}}
\newcommand{\SBQP}{\textsf{SBQP}}
\newcommand{\coSBQP}{\textsf{coSBQP}}
\newcommand{\AM}{\textsf{AM}}
\newcommand{\coAM}{\textsf{coAM}}
\newcommand{\PTIME}{\textsf{P}}
\newcommand{\NP}{\textsf{NP}}
\newcommand{\PH}{\textsf{PH}}
\newcommand{\PP}{\textsf{PP}}
\newcommand{\AzPP}{\textsf{A0PP}}
\newcommand{\coAzPP}{\textsf{coA0PP}}
\newcommand{\dt}{\textsf{dt}}
\newcommand{\PostBPP}{\textsf{PostBPP}}
\newcommand{\PostBQP}{\textsf{PostBQP}}
\newcommand{\Class}{\mathcal{C}}
\newcommand{\PSPACE}{\textsf{PSPACE}}
\newcommand{\PSPACEC}{\textsf{PSPACE-complete}}
\newcommand{\PostBPPdt}{\PostBPP^{\dt}}
\newcommand{\PostBQPdt}{\PostBQP^{\dt}}
\newcommand{\SBPdt}{\SBP^{\dt}}
\newcommand{\coSBPdt}{\coSBP^{\dt}}
\newcommand{\SBQPdt}{\SBQP^{\dt}}
\newcommand{\coSBQPdt}{\coSBQP^{\dt}}
\newcommand{\PPdt}{\PP^{\dt}}
\newcommand{\AzPPdt}{\AzPP^{\dt}}
\newcommand{\coAzPPdt}{\coAzPP^{\dt}}
\newcommand{\Ada}{\mathsf{Ada}}
\newcommand{\polylog}{\operatorname*{polylog}}
\newcommand{\Oracle}{\mathcal{O}}
\newcommand{\support}{\mathbf{support}}
\newcommand{\PTP}{\mathsf{PTP}}
\newcommand{\OR}{\mathsf{OR}}
\newcommand{\AND}{\mathsf{AND}}
\newcommand{\funcfami}{\mathbf}
\newcommand{\dtree}{\mathcal{T}}
\newcommand{\qalg}{\mathcal{Q}}
\newcommand{\adeg}{\widetilde{\mathrm{deg}}}
\newcommand{\dega}{\mathrm{deg}_{+}}
\newcommand{\mdega}{\widehat{\mathrm{deg}}_{+}}
\newcommand{\mdegs}{\widehat{\mathrm{deg}}_{-}}
\newcommand{\weight}{\mathsf{weight}}
\newcommand{\adaand}{\mathsf{AdaAND}}
\newcommand{\adaptp}{\mathsf{AdaPTP}}
\newcommand{\ddd}[3]{\distr_#1,\distr_{#2}^{d-1},\distr_{#3}^{d-1}}
\def\ShowAuthNotes{1}
\newcommand{\authnote}[2]{\ \\ \textcolor{red}{\parbox{0.9\linewidth}{[{\footnotesize {\bf #1:} { {#2}}}]}}\newline}
\newcommand{\authnote}[2]{}
\title{Adaptivity vs Postselection}
\date{}
\author{
	Lijie Chen\thanks{This work was done when the author was visiting MIT.}\\
	\small Tsinghua University\\
	\small \texttt{wjmzbmr@gmail.com}
}
\begin{document}
	\maketitle
	
	\begin{abstract}
	We study the following problem: with the power of postselection (classically or quantumly), what is your ability to answer adaptive queries to certain languages? More specifically, for what kind of computational classes $\Class$, we have $\PTIME^{\Class}$ belongs to $\PostBPP$ or $\PostBQP$? While a complete answer to the above question seems impossible given the development of present computational complexity theory. We study the analogous question in {\em query complexity}, which sheds light on the limitation of {\em relativized} methods (the relativization barrier) to the above question.
	
	Informally, we show that, for a partial function $f$, if there is no efficient\footnote{In the world of query complexity, being efficient means using $O(\polylog(n))$ time.} {\em small bounded-error} algorithm for $f$ classically or quantumly, then there is no efficient {\em postselection bounded-error} algorithm to answer adaptive queries to $f$ classically or quantumly. Our results imply a new proof for the classical oracle separation $\PTIME^{\NP^{\Oracle}} \not\subset \PP^{\Oracle}$, which is arguably more elegant. They also lead to a new oracle separation $\PTIME^{\SZK^{\Oracle}} \not\subset \PP^{\Oracle}$, which is close to an oracle separation between $\SZK$ and $\PP$\textemdash an open problem in the field of oracle separations.
	
	Our result also implies a hardness amplification construction for polynomial approximation: given a function $f$ on $n$ bits, we construct an adaptive-version of $f$, denoted by $F$, on $O(m \cdot n)$ bits, such that if $f$ requires large degree to approximate to error $2/3$ in a certain one-sided sense, then $F$ requires large degree to approximate even to error $1/2 - 2^{-m}$. Our construction achieves the same amplification in the work of Thaler (ICALP, 2016), by composing a function with $O(\log n)$ {\em deterministic query complexity}, which is in sharp contrast to all the previous results where the composing {\em amplifiers} are all hard functions in a certain sense.  
	\end{abstract}
	\section{Introduction}

\subsection{Background}

The idea of postselection has been surprisingly fruitful in theoretical computer science and quantum computing~\cite{aaronson2011computational,drucker2009quantum,bremner2011classical}. Philosophically, it addresses the following question: if you believe in the Many-worlds interpretation\footnote{\url{https://en.wikipedia.org/wiki/Many-worlds_interpretation}} and can condition on a rare event (implemented by killing yourself after observing the undesired outcomes), then what would you be able to compute in a reasonable amount of time? The complexity classes $\PostBPP$~\cite{han1997threshold} and $\PostBQP$~\cite{aaronson2005quantum} are defined to represent the computational problems you can solve with the ability of postselection in a classical world or a quantum world. 

However, even with that seemingly omnipotent power of postselection, your computational power is still bounded. It is known that $\PostBPP \subseteq \PH$~\cite{han1997threshold}, and (surprisingly) $\PostBQP = \PP$~\cite{aaronson2005quantum}. Hence, it seems quite plausible that even with the postselection power, you are still not able to solve a $\PSPACEC$ problem, as it is widely believed that $\PH$ and $\PP$ are strictly contained in $\PSPACE$.

Another more non-trivial (and perhaps unexpected) weakness of those postselection computation classes, is their inability to simulate {\em adaptive queries} to certain languages. For example, it is known that $\PTIME^{\NP[O(\log n)]}$\footnote{$O(\log n)$ stands for the  $\PTIME$ algorithm can only make $O(\log n)$ queries to the oracle.} is contained in $\PostBPP$~\cite{han1997threshold}, and this result relativizes. But there is an oracle separation between $\PTIME^{\NP[\omega(\log n)]}$ and $\PostBQP$~\cite{beigel1994perceptrons}. In other words, there is no relativized $\PostBQP$ algorithm that can simulate $\omega(\log n)$ adaptive queries to a certain language in $\NP$. In contrast, we know that $\PTIME^{\parallel \NP} \subseteq \PostBPP \subseteq \PP$ \cite{han1997threshold}, hence they are capable of simulating {\em non-adaptive} queries to $\NP$.

Then a natural question follows:

\begin{question}\label{ques:comp-ques}
What is the limit of the abilities of these postselection classes on simulating adaptive queries to certain languages? More specifically, is there any characterization of the complexity class $\Class$ such that $\PTIME^{\Class}$ is contained in $\PostBPP$ or $\PostBQP$?
\end{question}

Arguably, a complete answer to this problem seems not possible at the present time: even determining whether $\PTIME^{\NP} \subseteq \PP$ is already extremely hard, as showing $\PTIME^{\NP} \subseteq \PP$ probably requires some new non-relativized techniques, and proving $\PTIME^{\NP} \not\subset \PP$ implies $\PH \not\subset \PP$, which is a long-standing open problem.

\subsection{Relativization and the analogous question in query complexity}

So in this paper, inspired by the oracle separation in~\cite{beigel1994perceptrons}, we study this problem from a relativization point of view. {\em Relativization}, or {\em oracle separations} are ultimately about the {\em query complexity}. Given a complexity class $\Class$, there is a canonical way to define its analogue in query complexity: partial functions which are computable by a {\em non-uniform} $\Class$ machine with $\polylog(n)$ queries to the input. For convenience, we will use $\Class^{\dt}$ to denote the query complexity version of $\Class$. We adopt the convention that  $\Class^{\dt}$ denotes the query analogue of $\Class$, while $\Class^{\dt}(f)$ denotes the $\Class^{\dt}$ complexity of the partial function $f$.

\newcommand{\inputlen}{\mathsf{len}}

For a partial function $f$, we use $\inputlen(f)$ to denote its input length. We say a family of partial functions $\funcfami{f} \in \Class^{\dt}$, if $\Class^{\dt}(f) = O(\polylog(\inputlen(f)))$ for all $f \in \funcfami{f}$.

In order to study this question in the query complexity setting, given a partial function $f$, we need to define its adaptive version.

\begin{defi}[Adaptive Construction]
	Given a function $f :D \to \{0,1\}$ with $D \subseteq \{0,1\}^M$ and an integer $d$, we define $\Ada_{f,d}$, its depth $d$ adaptive version, as follows:
	
	$$
	\begin{array}{ c c }
	\Ada_{f,0} := f \quad \text{and} & 
	\begin{array}{ c }
	\Ada_{f,d}: D \times D_{d-1} \times D_{d-1} \to \{0,1\}\\
	\Ada_{f,d}(w,x,y) := \begin{cases}
	\Ada_{f,d-1}(x) & \quad f(w) = 0\\
	\Ada_{f,d-1}(y) & \quad f(w) = 1\\
	\end{cases}
	\end{array}
	\end{array}
	$$
	where $D_{d-1}$ denotes the domain of $\Ada_{f,d-1}$. 
	
	The input to $\Ada_{f,d}$ can be encoded as a string of length $(2^{d+1}-1)\cdot M$. Thus, $\Ada_{f,d}$ is a partial function from $D^{(2^{d+1}-1)} \to \{0,1\}$.
\end{defi}

Then, given a family of partial function $\funcfami{f}$, we define $\Ada_{\funcfami{f}} := \{ \Ada_{f,d} \betw f \in \funcfami{f}, d \in \mathbb{N} \}$.

Notice that when you have the ability to adaptively solve $d+1$ queries to $f$ (or with high probability), then it is easy to solve $\Ada_{f,d}$.
Conversely, in order to solve $\Ada_{f,d}$, you need to be able to adaptively answer $d+1$ questions to $f$, as even {\em knowing what is the right $i^{th}$ question to answer} requires you to correctly answer all the previous $i-1$ questions.

Now, everything is ready for us to state the analogous question in query complexity.

\begin{question}\label{ques:query-ques}
	What is the characterization of the partial functions family $\funcfami{f}$ such that $\Ada_{\funcfami{f}} \in \PostBPPdt$ ($\PostBQPdt$)?
\end{question}

There are at least two reasons to study Question~\ref{ques:query-ques}. First, it is an interesting question itself in query complexity. Second, an answer to Question~\ref{ques:query-ques} also completely characterizes the limitation on the {\em relativized techniques} for answering Question~\ref{ques:comp-ques}, i.e., the limitation of relativized methods for simulating {\em adaptive queries} to certain complexity classes with the power of postselection.

This paper provides some interesting results toward resolving Question~\ref{ques:query-ques}.

\subsection{Our results}

Despite that we are not able to give a complete answer to Question~\ref{ques:query-ques}. We provide some interesting lower bounds showing that certain functions' adaptive versions are hard for these postselection classes.

Formally, we prove the following two theorems.

\begin{theo}[Quantum Case]\label{theo:ada-postbqp}
	For a family of partial function $\funcfami{f}$, $\Ada_{\funcfami{f}} \not\in \PostBQPdt(\PPdt)$ if $\funcfami{f} \not\in \SBQPdt \cap \coSBQPdt$.
\end{theo}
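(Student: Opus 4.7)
The plan is to prove the contrapositive: if $\Ada_{\funcfami{f}} \in \PostBQPdt$, then $\funcfami{f} \in \SBQPdt \cap \coSBQPdt$. I would first pass to the polynomial world. By Aaronson's equality $\PostBQP = \PP$ and its query analogue, $\PostBQPdt(g)$ is equivalent, up to constants, to the threshold degree $\deg_\pm(g)$. The quantum polynomial method gives parallel characterizations of $\SBQPdt$ and $\coSBQPdt$ via the one-sided margin-degrees $\mdega$ and $\mdegs$. Thus the theorem reduces to the polynomial statement: if $\deg_\pm(\Ada_{f,d}) = \polylog((2^{d+1}-1)M)$ for every $d$, then both $\mdega(f)$ and $\mdegs(f)$ are $\polylog(M)$.

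The heart of the argument is a dual-witness construction in the spirit of Bun--Thaler hardness amplification. By LP duality, a lower bound $\mdega(f) > T$ is certified by a signed measure $\phi$ on the domain of $f$ that is (i) orthogonal to every polynomial of degree $\leq T$, (ii) one-sided with respect to $f$ (e.g., non-positive on $f^{-1}(0)$), and (iii) normalized so its pairing with $f$ is bounded below. Starting from such a witness (for whichever of $\mdega(f)$ or $\mdegs(f)$ is large), I would inductively build a dual witness $\Phi_d$ certifying $\deg_\pm(\Ada_{f,d}) > T$. The recursive step exploits the selector identity $\Ada_{f,d}(w,x,y) = (1-f(w))\Ada_{f,d-1}(x) + f(w)\Ada_{f,d-1}(y)$: at each level I graft $\phi$ into the selector coordinate $w$ and dual witnesses for $\Ada_{f,d-1}$ into the $x$ and $y$ coordinates, arranging the signs so that orthogonality to low-degree polynomials is preserved by an averaging argument while the one-sidedness composes multiplicatively along the adaptive tree.

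The main obstacle is that threshold degree demands strict sign agreement, whereas the witness for $f$ supplies only a one-sided guarantee. The adaptive composition with the cheap selector is precisely the mechanism that amplifies this bias: much as a depth-$d$ adaptive tree turns any bias into near-perfect majority, the recursion will turn a one-sided $\mdega$-witness into a full sign-representation dual witness at depth $d = \Theta(\polylog M)$. The delicate bookkeeping---showing that the signed mass of $\Phi_d$ concentrates on the correct side of $\Ada_{f,d}$, and that the one-sidedness at level $d-1$ is genuinely exploited by the grafted $\phi$ at level $d$---is the key technical step, and also the source of the dependence between $\mdega(f)$, $d$, and the final threshold-degree bound that must be tracked carefully. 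As a byproduct, this construction directly delivers the hardness-amplification corollary advertised in the abstract, since the ``amplifier'' driving the composition is the selector---a function of deterministic query complexity $O(\log n)$---rather than a hard amplifier as in prior work.
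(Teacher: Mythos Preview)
Your high-level plan---reduce to polynomials and build a dual witness for $\Ada_{f,d}$ inductively from one for $f$---is exactly the paper's strategy. The gap is in the reduction itself: $\PostBQPdt = \PPdt$ is \emph{not} threshold degree. Threshold degree $\deg_\pm$ is the query analogue of unbounded-error computation, whereas $\PPdt(g)$ charges $C(\dtree)+\log_2(1/\alpha)$ and hence controls the bias. Concretely, $\PPdt(g)\le T$ yields a degree-$T$ polynomial computing $g$ \emph{of weight at most $2^{2T}$} (Lemma~\ref{lm:pps-cons}), not merely a sign-representation. Correspondingly, the dual of $\mdega(f)>T$ is not a signed measure orthogonal to all degree-$\le T$ polynomials: the constraint $\log_2\weight(p)\le T$ is part of primal feasibility, and pure orthogonality cannot witness it. Your plan to certify a \emph{threshold-degree} lower bound on $\Ada_{f,d}$ from $\mdega(f)>T$ therefore targets a statement the paper never claims, and which need not hold when $\dega(f)$ is small but $\mdega(f)$ is large.

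The paper's substitute for orthogonality is a multiplicative ratio. From $\mdega^{2/3}(f)>T$, minimax over the \emph{compact} set of degree-$\le T$, weight-$\le 2^T$ polynomials (Lemma~\ref{lm:single-dist-exist-mdega}) produces a pair of honest distributions $\distr_0,\distr_1$ on $f^{-1}(0),f^{-1}(1)$ with $-p(\distr_0)>2\cdot p(\distr_1)$ for every such $p$ computing $f$. The induction then takes $\distr_i^d=(\distr_i,\distr_i^{d-1},\distr_i^{d-1})$ and, by freezing two of the three blocks at a time (your ``grafting'' picture), squares the ratio at each level to obtain $-p(\distr_0^d)>2^{2^d}\cdot p(\distr_1^d)$ (Lemma~\ref{lm:dist-exist-mdega}). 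The endgame is a weight contradiction rather than a sign contradiction: since $p(\distr_1^d)\ge 1$, some value satisfies $|p(x)|>2^{2^d}$, impossible once $\weight(p)\le 2^{2^d}$. So the weight bound does the work your orthogonality condition was meant to do, and the correct dual object is a \emph{pair of distributions with a ratio guarantee} rather than a single signed measure with a prescribed sign pattern.
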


\begin{theo}[Classical Case]\label{theo:ada-postbpp}
	For a family of partial function $\funcfami{f}$, $\Ada_{\funcfami{f}} \not\in \PostBPPdt$ if $\funcfami{f} \not\in \SBPdt \cap \coSBPdt$.
\end{theo}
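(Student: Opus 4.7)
I would prove the contrapositive: if $\Ada_{f,d} \in \PostBPPdt$ with $\polylog$ cost for every depth $d$, then $f \in \SBPdt \cap \coSBPdt$ with $\polylog$ cost. The starting point is the standard polynomial characterization of $\PostBPPdt$: such an algorithm yields two non-negative polynomials $p, q$ of $\polylog$ degree in the input bits of $\Ada_{f,d}$ satisfying $p + q \le 1$, $p + q > 0$ on valid inputs, $p \ge 2q$ on YES instances, and $q \ge 2p$ on NO instances. Trivially $f = \Ada_{f,0}$ already lies in $\PostBPPdt$, so the real content of the theorem is that the adaptive structure at depth $d > 0$ amplifies a bare $\PostBPP$ algorithm to an $\SBP \cap \coSBP$ one.

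The central step is the conversion from $\PostBPP$'s within-input multiplicative ratio $p(X)/(p(X)+q(X)) \ge 2/3$ on YES versus $\le 1/3$ on NO into $\SBP$'s across-input multiplicative ratio (a bounded polynomial whose YES values are at least twice its NO values). My plan is: first, amplify the $\PostBPP$ algorithm for $\Ada_{f,d}$ to ratio $1-\epsilon$ versus $\epsilon$ for polylogarithmically small $\epsilon$, at the cost of a $\polylog$ blow-up in degree. Next, isolate $f(w)$ by hard-wiring subtrees: choose $L^* \in \Ada_{f,d-1}^{-1}(0)$ and $R^* \in \Ada_{f,d-1}^{-1}(1)$, so that $\Ada_{f,d}(w, L^*, R^*) = f(w)$, and substitute to obtain polynomials $\tilde p(w), \tilde q(w)$ inheriting the ratio gap. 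Finally, average $\tilde p, \tilde q$ over suitable distributions on $L^*, R^*$, leveraging the depth $d$ and recursively-defined averaging distributions, so as to normalize the averaged postselection probability $\bar p(w) + \bar q(w)$ to a near-constant $\gamma$ across all valid $w$. Once the postselection mass is normalized, the ratio gap gives $\bar p(w) \ge (1-\epsilon)\gamma$ on YES and $\bar p(w) \le \epsilon\gamma$ on NO, which is the desired $\SBPdt$ gap. The $\coSBPdt$ algorithm follows symmetrically, either by considering $\neg f$ (whose adaptive version admits an analogous $\PostBPP$ algorithm) or by swapping the roles of $\bar p$ and $\bar q$.

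The main obstacle is the normalization step: a priori, nothing forces $\bar p + \bar q$ to be independent of $w$, and the argument must exploit the depth $d$ to design sufficiently symmetric averaging distributions that wash out the dependence on $w$'s specifics. A cleaner alternative is via LP duality / dual polynomials: given a dual witness $\phi$ certifying $f \notin \SBPdt$, block-compose $\phi$ through the tree structure of $\Ada_{f,d}$ to produce a dual witness certifying $\Ada_{f,d} \notin \PostBPPdt$, directly yielding the contrapositive, with a symmetric lifting handling the $\coSBPdt$ side. This dual route is aligned with the hardness-amplification framing in the abstract and avoids constructing an explicit primal $\SBP$ algorithm; the work will be in designing the block composition so that pure-high-degree and correlation properties of $\phi$ are preserved under the adaptive tree construction.
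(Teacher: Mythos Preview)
Your primal route hits exactly the obstacle you name: nothing pins down the postselection mass $\bar p+\bar q$ as a function of $w$, and any averaging distributions clever enough to normalize it would already be dual objects. The paper does not attempt this direction.

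Your dual route is what the paper does, though the specifics differ from the dual-block picture you sketch. The witness for $\SBPdt(f)>T$ is not a single dual polynomial $\phi$ but, via minimax over \emph{conical juntas} rather than polynomials, a pair of distributions $(\distr_0,\distr_1)$ on $f^{-1}(0),f^{-1}(1)$ satisfying the term-wise inequality $C(\distr_0)\ge\tfrac12\, C(\distr_1)$ for every $T$-term $C$. The lift to $\Ada_{f,d}$ is simply the product $\distr_i^d:=\distr_i\times\distr_i^{d-1}\times\distr_i^{d-1}$; there is no outer dual witness to compose with, since the amplifier $\AdaQ_d$ has deterministic query complexity $O(d)$ and carries no hardness of its own. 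The inductive claim is that any pair of $T$-conical juntas $(a,r)$ computing $\Ada_{f,d}$ in the $\PostBPP$ sense satisfies $r(\distr_0^d)>2^{2^d}\,a(\distr_1^d)$, which then contradicts the weight bound furnished by a low-cost $\PostBPPdt$ algorithm. The induction step is a three-stage swap from $(\distr_0,\distr_0^{d-1},\distr_0^{d-1})$ to $(\distr_1,\distr_1^{d-1},\distr_1^{d-1})$ through intermediate product distributions: two of the swaps invoke the inductive hypothesis on the restricted junta pair (which still computes $\Ada_{f,d-1}$ once the $w$-coordinate is fixed appropriately), and the remaining swap on the $w$-coordinate uses the base term-wise inequality together with the $\PostBPP$ gap $a\ge 5r$ (resp.\ $r\ge 5a$) to absorb the factor $\tfrac12$. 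The ``pure high degree'' and ``correlation'' conditions from the Sherstov/Bun--Thaler dual-block method play no role; the only structural fact needed is that restricting or averaging a $T$-conical junta over some coordinates yields another $T$-conical junta.
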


Roughly speaking, $\SBP$ is a relaxation of $\BPP$, it is the set of languages $L$ such that there exists a $\BPP$ machine $M$, which accepts $x$ with probability $\ge 2\alpha$ if $x \in L$; and with probability $\le \alpha$ if $x \not\in L$ for a positive real number $\alpha$. And $\SBQP$ is the quantum analogue of $\SBP$, where you are allowed to use a polynomial time quantum algorithm instead.\footnote{For the formal definitions of $\SBP$, $\PostBPP$, $\PostBQP$, $\SBQP$ and their equivalents in query complexity, see the preliminaries.}

Our theorems show that, for a partial function $f$, if there is no efficient classical (quantum) algorithm which accepts all the $1$-inputs with a slightly better chance than all the $0$-inputs, then there is no efficient $\PostBPP$ ($\PostBQP$) algorithm that can answer adaptive queries to $f$.

In fact, we prove the following two quantitatively tighter theorems, from which Theorem~\ref{theo:ada-postbqp} and Theorem~\ref{theo:ada-postbpp} follows easily.

\begin{theo} \label{theo:hard-adapt-postbqp}
	Let $f$ be a partial function and $T$ be a non-negative integer. Suppose $\mdega(f) > T$ or $\mdegs(f) > T$, then we have 
	$$
	\PPdt(\Ada_{f,d}) > \min(T/4,2^{d-1}).\footnote{ $\mdega$ ({\em one-sided low-weight approximate degree}, cf. Definition~\ref{defi:low-weight}) is equivalent to $\SBQPdt$, and $\PPdt$ is equivalent to $\PostBQPdt$, see Corollary~\ref{cor:eq-SBQP-AZPP}, Corollary~\ref{cor:eq-PostBQP-PP} and Theorem~\ref{theo:eq-AZPP-mdega}. We state it in this form for simplicity.}
	$$
\end{theo}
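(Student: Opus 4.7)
The plan is to prove the contrapositive: assume there is a polynomial $p$ of degree $D$ witnessing $\PPdt(\Ada_{f,d}) \le D$ with $D \le \min(T/4, 2^{d-1})$, and derive both $\mdega(f) \le T$ and $\mdegs(f) \le T$. The key mechanism is that the $d$ layers of adaptive composition should amplify the exponentially small advantage of a $\PP$-polynomial into a constant, one-sided advantage, which is precisely what is needed for an $\mdega$/$\mdegs$ witness for $f$.

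The core step I would set up is an inductive amplification lemma: from a degree-$D$ bounded polynomial $p_i$ with advantage $\epsilon_i$ for $\Ada_{f,i}$ around $1/2$, construct a degree-$D$ bounded polynomial $p_{i-1}$ with advantage $\ge 2\epsilon_i$ for $\Ada_{f,i-1}$. The construction exploits the identity
\[
\Ada_{f,i}(w,x,y) \;=\; \Ada_{f,i-1}(x) \;+\; f(w)\bigl(\Ada_{f,i-1}(y) - \Ada_{f,i-1}(x)\bigr),
\]
so by fixing $y$ (respectively $x$) to a $0$-input $x^*$ and a $1$-input $y^*$ of $\Ada_{f,i-1}$ and taking appropriate differences of $p_i$ at these fixings, the branching structure compounds the separation by a factor of two. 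Iterating this $d$ times starting from $p_d = p$ yields a polynomial $p_0$ of degree $\le D$ that approximates $\Ada_{f,0} = f$ with advantage $2^d\epsilon$. Since a $\PP$-polynomial has $\epsilon \ge 2^{-\Theta(D)}$ and we are in the regime $D \le 2^{d-1}$, the final advantage $2^d\epsilon$ exceeds a positive constant, so $p_0$ becomes a degree-$D$ one-sided low-weight approximator of $f$, giving $\mdega(f), \mdegs(f) \le D \le T/4 < T$, a contradiction.

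The main obstacle will be nailing down a clean inductive invariant. One must simultaneously maintain (i) boundedness of $p_i$ in a fixed range, so that the final object genuinely certifies $\mdega$/$\mdegs$ (which carry a low-weight / bounded-probability condition) rather than a polynomial with huge coefficients, and (ii) a one-sided structure that survives all $d$ reductions — since $\mdega$ and $\mdegs$ are asymmetric, handling both polarities requires running two parallel chains of reductions with dual choices of the reference inputs $x^*, y^*$ at each level. Finally, the role of the $2^{d-1}$ cutoff in the theorem is exactly the amplification budget: once $d$ is too small to bridge the $\PP$-to-$\SBQP$ gap, the statement reduces to a mild depth-based lower bound on $\PPdt(\Ada_{f,d})$ that should follow from a direct (and considerably easier) argument about polynomial degree being at least the number of "forced" layers of the adaptive tree.
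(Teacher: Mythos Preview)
Your plan has a real quantitative gap. With a factor-$2$ advantage amplification per level you reach only $2^{d}\epsilon$ after $d$ reductions; since a $\PPdt$ witness has $\epsilon \ge 2^{-\Theta(D)}$, you would need $D = O(d)$ to make $2^{d}\epsilon$ constant. The theorem, however, allows $D$ as large as $2^{d-1}$, which is exponentially bigger than $d$. So your argument, even if every step went through, would prove at best $\PPdt(\Ada_{f,d}) > \min(T/4,\Theta(d))$, which is far too weak for the stated bound and for the downstream oracle separations.

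What the paper does instead is a \emph{dual} amplification that squares the ratio at each level, yielding $2^{2^{d}}$ rather than $2^{d}$. Via minimax it first produces distributions $\distr_0,\distr_1$ on $f^{-1}(0),f^{-1}(1)$ that witness $-p(\distr_0)>2\,p(\distr_1)$ simultaneously for \emph{every} valid polynomial $p$ computing $f$. Because these distributions are universal, at level $d$ one can apply the inductive hypothesis twice---once to the $x$-branch (with $w\sim\distr_0$) and once to the $y$-branch (with $w\sim\distr_1$)---each contributing a $2^{2^{d-1}}$ factor, so the ratios multiply to $2^{2^{d}}$. Combined with the weight bound $\weight(p)\le 2^{2D}$ from the $\PPdt$ assumption, this forces $2^{2D}\ge 2^{2^{d}}$, i.e.\ $D\ge 2^{d-1}$. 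Your primal scheme of fixing reference inputs $x^*,y^*$ and ``taking differences'' cannot get this squaring: fixing inputs gives back a polynomial with the \emph{same} advantage, and adding or differencing two such restrictions doubles both the gap and the range, so no relative amplification occurs. The minimax/universality step is precisely what lets the two branches compose multiplicatively. Finally, there is no separate ``easy degree argument'' for the $2^{d-1}$ cutoff; that cutoff arises directly from the weight contradiction $\weight(p)\le 2^{2D}$ versus $\lvert p(x)\rvert>2^{2^{d}}$ in the single unified argument.
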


\begin{theo} \label{theo:hard-adapt-postbpp}
	Let $f:D \to \{0,1\}$ with $D \subseteq \{0,1\}^M$ be a partial function and $d$ be a non-negative integer. Suppose $\SBPdt(f) > T$ or $\coSBPdt(f) > T$, then we have 
	$$
	\PostBPPdt(\Ada_{f,d}) > \min(T/5,(2^{d}-1)/5).
	$$
\end{theo}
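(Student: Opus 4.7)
The plan is to prove the contrapositive: if $\PostBPPdt(\Ada_{f,d}) \le T'$ with $T' \le (2^d - 1)/5$, then both $\SBPdt(f) \le 5T'$ and $\coSBPdt(f) \le 5T'$. First I would use the standard polynomial characterization: a $\PostBPPdt$ algorithm of cost $T'$ corresponds to a pair $(p_1, p_0)$ of non-negative multilinear polynomials of degree $\le T'$ in the bits of the input $(w, x, y)$ (with $p_1 + p_0 \le 1$), satisfying $p_1 \ge 2 p_0$ on every yes-input of $\Ada_{f,d}$ and $p_0 \ge 2 p_1$ on every no-input.

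Next I would exploit the adaptive structure. For any planted $x^* \in \Ada_{f,d-1}^{-1}(0)$ and $y^* \in \Ada_{f,d-1}^{-1}(1)$ we have $\Ada_{f,d}(w, x^*, y^*) = f(w)$, so the restrictions $p_1(w, x^*, y^*)$ and $p_0(w, x^*, y^*)$ are non-negative polynomials in $w$ of degree $\le T'$ satisfying the $\PostBPP$ ratio on $f$: $p_1 \ge 2 p_0$ when $f(w) = 1$ and $p_0 \ge 2 p_1$ when $f(w) = 0$. Edge cases where either $\Ada_{f,d-1}^{-1}(0)$ or $\Ada_{f,d-1}^{-1}(1)$ is empty force $\Ada_{f,d-1}$ to be constant and are handled separately.

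The main technical step is upgrading this ratio separation into the absolute gaps required by $\SBPdt$ and $\coSBPdt$. I would choose a distribution $\nu$ over planted pairs $(x^*, y^*)$ so that the averaged postselection polynomial $\bar p(w) := \Ex_{\nu}\!\left[p_1(w, x^*, y^*) + p_0(w, x^*, y^*)\right]$ is (approximately) constant in $w$. Given such a $\nu$, the polynomial $\bar p_1(w) := \Ex_{\nu}[p_1(w, x^*, y^*)]$ is non-negative of degree $\le T'$ and satisfies $\bar p_1 \ge (2/3)\,\bar p$ on $f = 1$ and $\bar p_1 \le (1/3)\,\bar p$ on $f = 0$, yielding an $\SBPdt$ witness for $f$; symmetrically $\bar p_0$ yields a $\coSBPdt$ witness. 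Tracking constants through this flattening step gives the factor $5$ in the final bound.

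The main obstacle will be establishing the existence of the flattening distribution $\nu$. I expect this to be an LP-duality / counting argument: $\bar p$ has degree at most $T'$ in the $M$ bits of $w$, so only $\binom{M}{\le T'}$ non-constant Fourier coefficients need to be killed, while the hypothesis $T' \le (2^d - 1)/5$ ensures the space of valid plantings at depth $d-1$ is combinatorially rich enough (there are $2^{d}-1$ independently choosable $f$-instances within each sub-tree) to admit a non-negative $\nu$ satisfying those linear constraints. Handling cases where exact flattening fails — for instance when $p$ has a degenerate $w$-dependence — would require pigeonholing onto a sub-family of plantings on which one of the two witnesses can be built directly, and this case analysis is what contributes the multiplicative factor $5$ as well as the requirement that $d$ be large enough to accommodate the construction.
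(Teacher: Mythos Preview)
Your approach has a genuine gap at the ``flattening distribution'' step, and the sketch you give for its existence does not go through. You want a distribution $\nu$ over plantings $(x^*,y^*)$ so that $\bar p(w)=\Ex_\nu[p_1(w,x^*,y^*)+p_0(w,x^*,y^*)]$ is (approximately) constant in $w$. This is a system of $\sum_{i=1}^{T'}\binom{M}{i}$ linear constraints (one per nontrivial $w$-monomial), while the ``richness'' you invoke is only $2^d-1$ independently choosable $f$-instances in each subtree; since $M$ and $T'$ are unrelated to $d$ except through the mild inequality $T'\le (2^d-1)/5$, the constraint count can dwarf the degrees of freedom. Worse, even if the dimensions matched, $\nu$ must be \emph{nonnegative}, so this is an LP feasibility question, and there is no mechanism forcing the convex hull of the restricted postselection polynomials $\{p(w,x^*,y^*)\}$ to meet the affine space of constant functions. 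Without flattening, all you recover from a single planting is a $\PostBPPdt$-type witness for $f$, which does not by itself yield $\SBPdt(f)\le 5T'$ (that is precisely the content of the theorem in the base case $d=0$, which your argument never actually establishes).

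The paper proceeds in the opposite, dual direction. Assuming $\SBPdt(f)>T$, a minimax argument (Lemma~\ref{lm:equivalent-cond}) produces distributions $\distr_0,\distr_1$ on $f^{-1}(0),f^{-1}(1)$ with $C(\distr_0)\ge\frac12 C(\distr_1)$ for every $T$-term $C$. One then builds product distributions $\distr_0^d,\distr_1^d$ on $\Ada_{f,d}^{-1}(0),\Ada_{f,d}^{-1}(1)$ by induction on $d$, and shows (Lemma~\ref{lm:dist-exist-SBP}) that for any pair of $T$-conical juntas $(a,r)$ computing $\Ada_{f,d}$ one has $r(\distr_0^d)>2^{2^d}\cdot a(\distr_1^d)$. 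The factor $5$ arises not from any flattening case analysis but from a majority-of-five amplification (Lemma~\ref{lm:postbpp-cons}) that converts a $\PostBPPdt$ algorithm of cost $\le\min(T/5,(2^d-1)/5)$ into such a conical-junta pair with $\|r\|_\infty\le 2^{2^d}$, which then contradicts the inductive inequality. The induction on $d$ is where the depth is actually used; your argument, by contrast, only ever exploits the top level of the recursion.
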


\subsection{Applications in oracle separations}

Our results have several applications in oracle separations.
\begin{itemize}
	\item A new proof for $\PTIME^{\NP^{\Oracle}} \not\subset \PP^{\Oracle}$:
	
	We prove that $\SBQPdt(f)$ is indeed equivalent to {\em one-sided low-weight approximate degree}, denoted by $\mdega(f)$ (cf. Definition~\ref{defi:low-weight}), which is lower bounded by one-sided approximate degree $\dega(f)$ (cf. Definition~\ref{defi:deg-approx}). 
	
	Using the fact that $\dega(\AND_n) \ge \Omega(\sqrt{n})$, Theorem~\ref{theo:ada-postbqp} implies that $\Ada_{\funcfami{\AND}} \not\subset \PPdt$, yielding a simpler proof for the classical oracle separation between $\PTIME^{\NP}$ and $\PP$ in \cite{beigel1994perceptrons}. 
	
	Our proof is arguably simpler and more elegant. Also, unlike the seemingly artificial problem $\textsf{ODD-MAX-BIT}$\footnote{Given a binary input $x$, it asks whether the rightest $1$ in $x$ is in an odd position.} in \cite{beigel1994perceptrons}, $\Ada_{\funcfami{\AND}}$ looks like a more natural hard problem in $\PTIME^{\NP}$.
	\item The new oracle separation $\PTIME^{\SZK^{\Oracle}} \not\subset \PP^{\Oracle}:$
	
    Since the {\em Permutation Testing Problem}, denoted by $\PTP_n$ (see Problem~\ref{prob:PTP} for a formal definition), satisfies $\dega(\PTP_n) \ge \Omega(n^{1/3})$ and has a $\log(n)$-time $\SZK$ protocol. Theorem~\ref{theo:ada-postbqp} implies that $\Ada_{\PTP} \not\subset \PPdt$, which in turn shows an oracle separation between $\PTIME^{\SZK}$ and $\PP$.
    
    It has been an open problem~\cite{aaronson2012impossibility} that whether there exists an oracle separation between $\SZK$ and $\PP$, our result is pretty close to an affirmative answer to that.\footnote{Partially inspired by this work, an oracle separation between $\SZK$ and $\PP$ (in fact, $\mathsf{UPP}$) has been constructed in a very recent work of Bouland, Chen, Holden, Thaler and Vasudevan~\cite{bouland2016szk}, thus resolved this open problem.} 
    
    Also, note that $\PTIME^{\SZK} \subseteq \PTIME^{\AM \cap \coAM} = \AM \cap \coAM$,  so our result improves on the oracle separation between $\AM \cap \coAM$ and $\PP$ by Vereschchagin~\cite{vereschchagin1992power}.
\end{itemize}

\subsection{Applications in hardness amplification for polynomial approximation}

Our construction also leads to a hardness amplification theorem for polynomial approximation. In order to state our result, we need to introduce the definition of two approximate degrees first.

\begin{defi}\label{defi:deg-approx}
	The $\epsilon$-approximate degree of a partial function of $f: D \to \{0,1\}$, denoted as $\adeg_\epsilon(f)$, is the least degree of a real polynomial $p$ such that $|p(x)-f(x)| \le \epsilon$ when $x \in D$, and $|p(x)| \le 1+\epsilon$ when $x\not\in D$. 
	
	We say a polynomial $p$	one-sided $\epsilon$-approximates a partial Boolean function $f$, if $p(x) \in [0,\epsilon]$ when $f(x) = 0$, and $p(x) \ge 1$ when $f(x) = 1$.\footnote{Our definition of one-sided approximation is slightly different from the standard one~\cite{sherstov2014breaking,bun2015dual,sherstov2015power}, but it greatly simplifies several discussions in our paper, and they are clearly equivalent up to a linear transformation in $\epsilon$.} Then the one-sided $\epsilon$-approximate degree of a partial function $f$, denoted by $\dega^{\epsilon}(f)$, is the minimum degree of a polynomial one-sided $\epsilon$-approximating $f$.
\end{defi}

Now we are in a position to state our amplification theorem.

\begin{theo}\label{theo:ampl}
	Let $f$ be a partial function such that $\dega^{2/3}(f) > T$ and $d$ be a positive integer, we have $\adeg_{\epsilon}(\Ada_{f,d}) > T$ for $\epsilon = 0.5-2^{-2^d+1}$. 
\end{theo}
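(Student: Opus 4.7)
The plan is to prove the contrapositive by induction on $d \ge 1$: if $\adeg_{\epsilon_d}(\Ada_{f,d}) \le T$ for $\epsilon_d := 1/2 - 2^{-2^d+1}$, then $\dega^{2/3}(f) \le T$. Writing $\delta_d := 2^{-2^d+1}$ so that $\epsilon_d = 1/2 - \delta_d$, one checks the recursion $\delta_d = \delta_{d-1}^2 / 2$, which forces a quadratic shrinkage of the gap-to-$1/2$ at each level; realizing this quadratic amplification is the essence of the inductive step.

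The base case $d = 1$ is clean because $\epsilon_1 = 0$: a degree-$T$ polynomial $p$ that $\epsilon_1$-approximates $\Ada_{f,1}$ must agree with it exactly on its domain $D^3$. Setting $x = y$ in the three-slot input $(w, x, y)$, one has $\Ada_{f,1}(w, x, x) = f(x)$ regardless of $f(w)$, since both branches of the adaptive tree collapse to $f(x)$. Fixing any particular $w_0 \in D$ then yields a degree-$T$ polynomial in $x$ that exactly equals $f$ on $D$, so $\mathrm{deg}(f) \le T$ and hence $\dega^{2/3}(f) \le T$.

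For the inductive step, I would show that $\adeg_{\epsilon_d}(\Ada_{f,d}) \le T$ implies $\adeg_{\epsilon_{d-1}}(\Ada_{f,d-1}) \le T$, exploiting the two-branch identity
\[
\Ada_{f,d}(w, x, y) = (1-f(w))\,\Ada_{f,d-1}(x) + f(w)\,\Ada_{f,d-1}(y).
\]
Starting from a degree-$T$ approximation $p$ of $\Ada_{f,d}$, one obtains two natural extractions by substituting into the $w$-slot a convex combination supported on $f^{-1}(0)$ (which isolates $\Ada_{f,d-1}(x)$) and one supported on $f^{-1}(1)$ (which isolates $\Ada_{f,d-1}(y)$); each is a degree-$T$ polynomial approximating $\Ada_{f,d-1}$. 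The aim is to combine these two extractions, together with a suitable averaging over the unused sub-slot, into a single degree-$T$ polynomial $q$ whose gap-to-$1/2$ is quadratically larger than $p$'s, so that $q$ witnesses $\adeg_{\epsilon_{d-1}}(\Ada_{f,d-1}) \le T$.

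The main obstacle is arranging precisely the quadratic amplification $\delta_{d-1}^2 = 2\delta_d$ \emph{without} doubling the degree. Linear averaging of the two extractions merely preserves $\epsilon$, so the construction must use both copies of $\Ada_{f,d-1}$ in the input simultaneously and exploit the sign of $p$'s deviation on the two branches. A dual-polynomial formulation is likely the cleanest route: build a dual witness $\Psi_d$ for the $\adeg_{\epsilon_d}$-hardness of $\Ada_{f,d}$ by combining a one-sided dual witness $\phi$ for $\dega^{2/3}(f) > T$ (placed in the $w$-slot) with two copies of the inductive dual witness $\Psi_{d-1}$ (placed in the $x$- and $y$-slots), in the spirit of the dual block composition of Sherstov and Bun--Thaler \cite{sherstov2014breaking,bun2015dual}. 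Since the three factors act on disjoint variable blocks, pure high degrees and $\ell_1$-norms multiply cleanly, and the sum-of-two-products structure of $\Ada_{f,d}$ combined with the one-sided sign condition on $\phi$ (nonnegative on $f^{-1}(1)$) prevents the two branch correlations from cancelling, causing the product of gaps to appear. The delicate quantitative bookkeeping---keeping track of off-domain $\ell_1$ mass contributed by $\Psi_{d-1}\otimes\Psi_{d-1}$ and driving out the exact factor of $\delta_{d-1}^2/2$---is where the heart of the argument will lie.
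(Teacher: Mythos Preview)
Your base case is fine, and you correctly identify that the primal inductive step cannot achieve the quadratic amplification $\delta_{d-1}^2=2\delta_d$ without raising the degree. But the dual sketch you fall back on is not a proof: you explicitly defer ``the delicate quantitative bookkeeping'' to later, and the mechanism you gesture at---a Sherstov/Bun--Thaler style dual block composition $\phi\otimes\Psi_{d-1}\otimes\Psi_{d-1}$---does not obviously fit the structure here. The function $\Ada_{f,d}$ is not a symmetric block composition; it routes through $f$ in the $w$-slot and through $\Ada_{f,d-1}$ in the $x,y$-slots, and the correlation of a tensor-product witness with $\Ada_{f,d}(w,x,y)=(1-f(w))\Ada_{f,d-1}(x)+f(w)\Ada_{f,d-1}(y)$ involves \emph{two} terms whose signs you have not controlled. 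Your claim that the one-sided sign of $\phi$ ``prevents the two branch correlations from cancelling'' is exactly the point that needs an argument, and you have not supplied one.

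The paper's proof is both more concrete and structurally different. It does not build a single dual witness; instead it shows (Lemma~\ref{lm:www}) that there are distributions $\distr_0^d,\distr_1^d$ on $\Ada_{f,d}^{-1}(0),\Ada_{f,d}^{-1}(1)$ such that every degree-$T$ polynomial $p$ \emph{computing} $\Ada_{f,d}$ (meaning $p\ge 1$ on $1$-inputs, $p\le -1$ on $0$-inputs) satisfies $-p(\distr_0^d)>2^{2^d}\,p(\distr_1^d)$, hence $\|p\|_\infty>2^{2^d}$. The theorem then follows in two lines: a hypothetical $\epsilon_d$-approximator, shifted by $1/2$ and scaled by $2^{2^d-1}$, would be a computing polynomial with $\|\cdot\|_\infty<2^{2^d}$. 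The distributions are just the products $\distr_b^d=\distr_b\times\distr_b^{d-1}\times\distr_b^{d-1}$, where $\distr_0,\distr_1$ come from a minimax argument applied to $f$. The doubling $2^{2^{d-1}}\cdot 2^{2^{d-1}}=2^{2^d}$ is obtained by a three-step walk $(\distr_0,\distr_0^{d-1},\distr_0^{d-1})\to(\distr_0,\distr_1^{d-1},\distr_0^{d-1})\to(\distr_1,\distr_1^{d-1},\distr_0^{d-1})\to(\distr_1,\distr_1^{d-1},\distr_1^{d-1})$: at each step one freezes two slots to points in the relevant supports, observes that the resulting one-variable polynomial \emph{computes} $\Ada_{f,d-1}$ or $f$ (this is where the definition of $\Ada$ enters), applies the inductive or base inequality, and averages back by linearity. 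There is no pure-high-degree or off-domain bookkeeping, because the argument only ever quantifies over polynomials that already compute the function---this restriction is precisely what makes the slot-by-slot restriction argument go through.
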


\newcommand{\AdaQ}{\mathsf{AdaQ}}

That is, given a function with high one-sided approximate degree for an error constant bounded away from $1$, it can be transformed to a function with high approximate degree even for $\epsilon$ {\em doubly exponentially} close to $1/2$ in $d$.\footnote{Which is {\em single exponential} in the input length of the {\em amplifier} $\AdaQ$, see the discussion below.}

\paragraph*{Comparison with previous amplification results}

There have been a lot of research interest in hardness amplification for polynomial approximation, many amplification results are achieved through {\em function composition}~\cite{bun2015hardness,sherstov2014breaking,thaler2014lower}. We use $f \circ g$ to denote the block composition of $f$ and $g$, i.e. $f(g,g,\dotsc,g)$.

Our result can also be viewed as one of them. Let $\AdaQ_{d} := \Ada_{\mathsf{id},d}$, where $\mathsf{id}$ is just the identity function from $\{0,1\}$ to $\{0,1\}$. Then we can see that in fact $\Ada_{f,d}$ is equivalent to $\AdaQ_{d} \circ f$. Let $n = 2^{d+1} - 1$, which is the input length of $\AdaQ_{d}$.

However, all the previous amplification results are achieved by letting the {\em amplifier} $f$ to be a {\em hard} function. We list all these results for an easy comparison.

\begin{itemize}
	\item In the work of Bun and Tahler~\cite{bun2015hardness}, they showed that for a function $g$ such that $\dega(g) > T$, $\adeg_\epsilon(\OR_n \circ g) > T$ for $\epsilon = 1/2 - 2^{-\Omega(n)}$. This is further improved by Sherstov~\cite{sherstov2014breaking} to that $\mathrm{deg}_{\pm}(\OR_n \circ g) = \Omega(\min(n,T))$. Here, the amplifier $\OR_n$ is a hard function in the sense that $\dega(\OR_n) \ge \Omega(\sqrt{n})$~\cite{nisan1994degree}.
	\item In~\cite{thaler2014lower}, Thaler showed that for a function $g$ such that $\dega(g) > T$, $\adeg_\epsilon(\textsf{ODD-MAX-BIT}_n \newline\circ g) > T$ for $\epsilon = 1/2 - 2^{-\Omega(n)}$.\footnote{This construction is further improved in a very recent work~\cite{bun2016approximate} by Bun and Thaler, with a more sophisticated construction which does not follow the composition paradigm.} In this case, the amplifier $\textsf{ODD-MAX-BIT}_n$ is even harder in the sense that it has a $\PPdt$ query complexity of $\Omega(\sqrt[3]{n})$~\cite{beigel1994perceptrons}.
	\item Moreover, it is easy to see that the {\em randomized query complexity} of both $\OR_n$ and $\textsf{ODD-MAX-BIT}_n$ is the maximum possible $\Omega(n)$.
\end{itemize}

In contrast, our amplifier $\AdaQ$, is {\em extremely} simple\textemdash it has a {\em deterministic query complexity} of $O(\log n)$!\footnote{A simple $O(\log n)$-query algorithm just follows from the definition.}

This is a rather surprising feature of our result. That means $\AdaQ$ also has an {\em exact degree} of $O(\log n)$.
Intuitively, composing with such a simple and innocent function seems would not affect the hardness of the resulting function. Our result severely contradicts this intuition. But from the view point of Theorem~\ref{theo:ada-postbqp}, composing with $\AdaQ$ indeed ``adaptivize'' the function, makes it hard for \PostBQP\ algorithms, which is in turn closely connected to $\PP$ algorithms and therefore polynomial approximate degree. So this result is arguably natural under that perspective, which illustrates a recurring theme in TCS: a new perspective can lead to some unexpected results.

\subsection{Paper organization}

In Section~\ref{sec:pre} we introduce some preliminaries, due to the space constraints, some of the formal definitions of those partial function classes in query complexity can be found in the appendix.
We prove Theorem~\ref{theo:ada-postbqp} and Theorem~\ref{theo:hard-adapt-postbqp} in Section~\ref{sec:quantum}, and defer the proof for Theorem~\ref{theo:ada-postbpp} and Theorem~\ref{theo:hard-adapt-postbpp} to the appendix. Theorem~\ref{theo:ampl} is proved in Section~\ref{sec:amply}. And we provide formal proofs for the two oracle separation results in the appendix.
	\section{Preliminaries}\label{sec:pre}

\subsection{Decision trees and quantum query algorithms}

A (randomized) decision tree is the analogue of a deterministic (randomized) algorithm in the query complexity world, and a quantum query algorithm is the analogue of a quantum algorithm. See \cite{buhrman2002complexity} for a nice survey on query complexity.

Let $\dtree$ be a randomized decision tree, we use $\mathcal{C}(\dtree)$ to denote the maximum number of queries incurred by $\mathcal{T}$ in the worst case\footnote{i.e. the maximum height of a decision tree in the support of $\dtree$}. Let $\qalg$ be a quantum query algorithm, we use $\mathcal{C}(\qalg)$ to denote the number of queries taken by $\qalg$.

We assume a randomized decision tree $\dtree$ (or a quantum query algorithm $\qalg$) outputs a result in $\{0,1\}$, and we use $\dtree(x)$ ($\qalg(x)$) to denote the (random) output of $\dtree$ ($\qalg$) given an input $x$.

\subsection{Complexity classes and their query complexity analogues}

We assume familiarity with some standard complexity classes like \PP. Due to space constraint, we only introduce the most relevant classes $\AzPPdt$ and $\PPdt$ here, and defer the formal definitions of the partial function complexity classes $\SBPdt$, $\SBQPdt$, $\PostBPPdt$ and $\PostBQPdt$ to the appendix.

Recall that $\Class^{\dt}$ is the set of the partial function family $\funcfami{f}$ with $\Class^{\dt}(f) = O(\polylog(\inputlen(f)))$ for all $f \in \funcfami{f}$, hence we only need to define $\Class^{\dt}(f)$ for a partial function $f$.

\paragraph*{$\PPdt$} 

We first define $\PPdt(f)$.

\begin{defi}\label{defi:PPdt}
	Let $f:D \to \{0,1\}$ with $D \subseteq \{0,1\}^M$ be a partial function. Let $\dtree$ be a randomized decision tree which computes $f$ with a probability better than $1/2$. Let $\alpha$ be the maximum real number such that
	$$
	\Pr[\dtree(x) = f(x)] \ge \frac{1}{2} + \alpha
	$$
	for all $x \in D$.
	
	Then we define $\PPdt(\dtree;f) := C(\dtree) + \log_2(1/\alpha)$, and $\PPdt(f)$ as the minimum of $\PPdt(\dtree;f)$ over all $\dtree$ computing $f$ with a probability better than $1/2$.
\end{defi}

\paragraph*{$\AzPP$ and $\AzPPdt$}
In this subsection we review the definition of $\AzPP$, and define its analogue in query complexity. There are several equivalent definitions for $\AzPP$, we choose the most convenient one here.

\begin{defi}
	$\AzPP$ (defined by Vyalyi~\cite{vyalyi2003qma}) is the class of languages $L\subseteq\left\{
	0,1\right\}  ^{\ast}$\ for which there exists a $\mathsf{BPP}$\ machine $M$ and a polynomial $p$, such that for all inputs $x$:
	\begin{enumerate}
		\item[(i)] $x\in L\Longrightarrow\Pr\left[  M\left(  x\right)  \text{ accepts}\right]\geq \frac{1}{2} + 2^{-p(|x|)}$.
		
		\item[(ii)] $x\notin L\Longrightarrow\Pr\left[  M\left(  x\right)  \text{
			accepts}\right]  \in \left[\frac{1}{2}, \frac{1}{2} + 2^{-p(|x|)-1} \right]$.
	\end{enumerate}
\end{defi}

\begin{defi}\label{defi:AzPPdt}
	
	Let $f:D \to \{0,1\}$ with $D \subseteq \{0,1\}^M$ be a partial function. We say a randomized decision tree $\dtree$ $\AzPP$-computes $f$ if there is a real number $\alpha > 0$ such that
	\begin{itemize}
		\item $\Pr[\dtree(x)=1] \ge 1/2 + 2\alpha$ when $f(x) = 1$.
		\item $\Pr[\dtree(x)=1] \in [1/2, 1/2 + \alpha]$ when $f(x) = 0$.
	\end{itemize}
	Fix a $\dtree$ $\AzPP$-computing $f$, let $\alpha$ be the maximum real number satisfying above conditions. Then we define $\AzPPdt(\dtree;f) = C(\dtree) + \log_2(1/\alpha)$ for $\dtree$ $\AzPP$-computing $f$ and $\AzPPdt(f)$ as the minimum of $\AzPPdt(\dtree;f)$ over all $\dtree$ $\AzPPdt$-computing $f$. And we simply let $\coAzPPdt(f) := \AzPPdt(\neg f)$.
\end{defi}

\paragraph*{Two relativized facts}

We also introduce two important relativized results here.
In \cite{aaronson2005quantum}, Aaronson showed that $\PostBQP$ is indeed $\PP$ in disguise.

\begin{theo}[\cite{aaronson2005quantum}]
	$\PostBQP = \PP$.
\end{theo}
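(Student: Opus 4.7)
The goal is to establish $\PostBQP = \PP$ via both inclusions. I would first tackle $\PP \subseteq \PostBQP$, which is the nontrivial direction: given $L \in \PP$ with associated $\BPP$ machine $M$ whose acceptance probability $p_x$ may differ from $1/2$ by as little as $2^{-\poly(n)}$, I must distinguish $p_x > 1/2$ from $p_x < 1/2$ with postselection. The plan is to run $M$ coherently so that a designated qubit of the resulting state $|\psi_x\rangle$ reads $1$ with probability exactly $p_x$. Then, for a parameter $\alpha \in (0,1)$, prepare the bipartite state
$$|\phi_\alpha\rangle \;=\; \sqrt{1-\alpha^2}\,|0\rangle|0^n\rangle \;+\; \alpha\,|1\rangle|\psi_x\rangle,$$
apply a rotation to the control qubit, and postselect on the output qubit of the second register being $1$. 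For the right magnitude of $\alpha$, the postselected marginal on the control qubit inherits a constant bias whose sign encodes that of $p_x - 1/2$. Since the correct scale is not known a priori, I would take a superposition over $\alpha \in \{2^{-k} : 0 \le k \le \poly(n)\}$ in a further control register and postselect on a single global success event, verifying that at least one slice contributes enough amplitude so that the combined event still occurs with probability $\ge 2^{-\poly(n)}$ as required by the $\PostBQP$ model.

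For $\PostBQP \subseteq \PP$: given a $\PostBQP$ circuit $Q$ with postselection qubit $r$ and output qubit $o$, acceptance of $x$ is equivalent to
$$3\,\Pr[o=1 \wedge r=1] \;-\; 2\,\Pr[r=1] \;>\; 0,$$
so it suffices to decide the sign of this expression in $\PP$. Each probability is a sum of squared amplitudes $|\langle z | U | 0^n\rangle|^2$ over the relevant basis states $z$, which expands into a sum of exponentially many polynomial-time-computable path contributions via the Feynman path sum. By standard arguments placing $\BQP$ in $\PP$ (Adleman--DeMarrais--Huang, Fortnow--Rogers), each is a $\mathsf{GapP}$ function after clearing denominators, integer linear combinations of $\mathsf{GapP}$ functions remain $\mathsf{GapP}$, and deciding positivity of a $\mathsf{GapP}$ function is exactly a $\PP$ computation.

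\textbf{Main obstacle.} The $\PP \subseteq \PostBQP$ inclusion is where essentially all the work lies. The delicate point is proving that the coherent sweep over scales $\alpha \in \{2^{-k}\}$ yields a combined postselection event of sufficiently large (inverse-exponential) probability while the postselected distribution still exhibits a constant bias in the correct direction; one must verify that the ``wrong'' slices of the superposition do not drown out the single ``right'' slice. Once this amplification lemma is in hand, the rest of the forward direction is mechanical, and the reverse inclusion is essentially the $\BQP \subseteq \PP$ argument applied to a slightly more refined GapP expression.
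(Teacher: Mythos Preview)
The paper does not actually prove this theorem: it is stated as a citation to Aaronson~\cite{aaronson2005quantum} and used only as a black box (together with the companion fact $\SBQP=\AzPP$) to deduce the relativized corollaries $\PostBQPdt=\PPdt$ and $\SBQPdt=\AzPPdt$. So there is no ``paper's own proof'' to compare against; your proposal is attempting to reconstruct Aaronson's original argument, which is more than the present paper undertakes.

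That said, your sketch has a genuine gap in the $\PP\subseteq\PostBQP$ direction. With the state
\[
|\phi_\alpha\rangle=\sqrt{1-\alpha^2}\,|0\rangle|0^n\rangle+\alpha\,|1\rangle|\psi_x\rangle,
\]
rotating the control qubit and then postselecting on the output qubit of the second register being $1$ simply projects onto the $\alpha|1\rangle|\psi_x\rangle$ branch (the $|0^n\rangle$ branch has output $0$), and the resulting reduced state of the control qubit is independent of whether $p_x>1/2$ or $p_x<1/2$. What is missing is the interference step that is the crux of Aaronson's proof: one first manufactures a single-qubit state whose amplitude is proportional to $2s-2^n$ (equivalently $2p_x-1$), typically by Hadamarding the index register of $\sum_x|x\rangle|f(x)\rangle$ and postselecting on the index being $0$, so that the \emph{sign} of $p_x-\tfrac12$ appears as the sign of an amplitude. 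Only then does the sweep over ratios $\beta/\alpha\in\{2^{-k}\}$ (which in Aaronson's presentation is done classically over $O(n)$ values, not in superposition) let a single Hadamard-basis measurement reveal that sign with constant bias. Your ``main obstacle'' paragraph is aimed at the right issue, but the construction you wrote down does not produce a sign-carrying amplitude, so the obstacle is never reached.

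Your $\PostBQP\subseteq\PP$ direction via $\mathsf{GapP}$ closure is correct and is the standard argument.
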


And in~\cite{kuperberg2009hard}, Kuperberg showed that $\SBQP$ is in fact equal to $\AzPP$.

\begin{theo}[\cite{kuperberg2009hard}]
	$\SBQP = \AzPP$.
\end{theo}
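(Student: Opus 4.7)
The plan is to prove $\SBQP = \AzPP$ by establishing both inclusions separately. The direction $\SBQP \subseteq \AzPP$ follows a standard classical-simulation pattern closely related to the $\PostBQP \subseteq \PP$ argument, while the converse $\AzPP \subseteq \SBQP$ relies on a Hadamard-test-style quantum interference circuit that converts an additive bias near $1/2$ into a multiplicatively-separated small acceptance probability.

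For $\SBQP \subseteq \AzPP$: given an $\SBQP$ quantum machine $Q$ with acceptance probability at least $2\alpha$ on yes-inputs and at most $\alpha$ on no-inputs, I would invoke the Feynman-path representation of quantum amplitudes to express $\Pr[Q\text{ accepts}]$ as the acceptance probability of a classical polynomial-time randomized procedure (essentially the same simulator underlying $\PostBQP \subseteq \PP$, exploiting that this particular probability, being non-negative, can be realized without sign cancellation). The $\AzPP$ machine $M'$ then outputs $1$ with probability $1/2$ and otherwise runs this classical simulator; its acceptance probability is exactly $1/2 + \Pr[Q\text{ accepts}]/2$, which lies in $[1/2,\,1/2+\alpha/2]$ on no-inputs and is at least $1/2+\alpha$ on yes-inputs, matching the $\AzPP$ window with bias parameter $2\delta = \alpha$.

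For $\AzPP \subseteq \SBQP$: let $M$ be an $\AzPP$ BPP machine with acceptance probability $1/2 + \epsilon$, where $\epsilon \geq 2\delta$ on yes-inputs and $\epsilon \leq \delta$ on no-inputs. Implement $M$ coherently as a unitary $U_M$ acting by $U_M|r\rangle|0\rangle = |r\rangle|M(x,r)\rangle$, and set $|\psi_0\rangle = (H^{\otimes k}\otimes I)|0^{k+1}\rangle$, the uniform superposition over the randomness tape tensored with a flag qubit in $|0\rangle$. The $\SBQP$ algorithm I propose applies the circuit $(H^{\otimes k}\otimes I) \cdot U_M^{-1} \cdot (I \otimes Z) \cdot U_M \cdot (H^{\otimes k}\otimes I)$ to $|0^{k+1}\rangle$ and accepts iff the measurement outcome is the all-zero string. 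A direct amplitude computation gives the acceptance probability $|\langle \psi_0| U_M^{-1}(I\otimes Z) U_M|\psi_0\rangle|^2 = (1 - 2\Pr[M\text{ accepts}])^2 = 4\epsilon^2$, which is at least $16\delta^2$ on yes-inputs and at most $4\delta^2$ on no-inputs. Taking $\alpha = 4\delta^2$ yields a valid $\SBQP$ machine with multiplicative gap $4$, which is easily brought down to the required $2$.

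The main obstacle is the $\AzPP \subseteq \SBQP$ direction: classically, no local operation can transform acceptance probability $1/2 + \epsilon$ into something of the form $c\cdot\epsilon$, because probabilities cannot be subtracted pointwise and the gap between yes and no is only $\delta$ around a baseline of $1/2$. Quantum interference bypasses this barrier precisely because the operator $U_M^{-1}(I\otimes Z)U_M$ encodes the signed sum $\sum_r (-1)^{M(x,r)}$ as an amplitude (rather than a probability), and the outer Hadamards isolate this amplitude by projecting onto $|\psi_0\rangle$. Once the $-2\epsilon$ amplitude identity is in hand, squaring produces an acceptance probability $4\epsilon^2$ whose multiplicative yes/no ratio is automatically a constant factor, which is exactly what $\SBQP$ requires.
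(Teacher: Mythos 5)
First, note that the paper does not prove this statement at all: it is quoted as a known theorem of Kuperberg, so your proposal is being measured against the literature rather than an in-paper argument. Your direction $\AzPP \subseteq \SBQP$ is essentially correct and is in the spirit of Kuperberg's argument: implementing the $\mathsf{BPP}$ machine reversibly, the amplitude $\langle\psi_0|U_M^{-1}(I\otimes Z)U_M|\psi_0\rangle = 1-2\Pr[M\text{ accepts}] = -2\epsilon$, so accepting on the all-zeros outcome gives probability $4\epsilon^2$, which is $\ge 16\delta^2$ on yes-inputs and $\le 4\delta^2$ on no-inputs; the only point worth making explicit is that the bound on no-inputs uses $\epsilon\ge 0$, i.e.\ the one-sided window $[1/2,\,1/2+\delta]$ in the definition of $\AzPP$ \textemdash without that one-sidedness the squaring step would be invalid.

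The genuine gap is in the direction $\SBQP \subseteq \AzPP$. You assert that $\Pr[Q\text{ accepts}]$, ``being non-negative, can be realized without sign cancellation'' as the acceptance probability of a polynomial-time classical randomized procedure. This is exactly the step that fails: the Feynman path expansion expresses the quantum acceptance probability as $g(x)/2^{q}$ with $g$ a $\mathsf{GapP}$ function (a signed sum), and a non-negative $\mathsf{GapP}$ function is not, in general, a $\mathsf{\#P}$ function; no relativizing argument can convert it into the acceptance probability of a $\mathsf{BPP}$ machine. Indeed, if your claim held, the identical reasoning would give $\SBQP\subseteq\SBP$ and $\PostBQP\subseteq\PostBPP$, hence $\PP\subseteq\PH$ relative to every oracle, contradicting known oracle separations (e.g.\ Vereschchagin's $\AM\cap\coAM$ vs.\ $\PP$ separation cited in this very paper). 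The fix is to use the path-sum simulator the way it is used for $\PostBQP\subseteq\PP$: build a classical machine whose acceptance probability is $1/2 + g(x)/2^{k+1}$, i.e.\ the quantum probability appears (suitably scaled by an inverse-exponential factor) as the \emph{deviation from $1/2$}, not as the probability itself. Non-negativity of $\Pr[Q\text{ accepts}]$ is then used only to guarantee that this deviation is $\ge 0$ on no-inputs, which is precisely the one-sidedness $\AzPP$ demands, and the $2\alpha$ vs.\ $\alpha$ gap survives the scaling because $\alpha$ is at worst inverse-exponential. With that correction your overall two-inclusion plan goes through.
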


These two theorems relativize, hence we have the following corollaries.

\begin{cor}\label{cor:eq-SBQP-AZPP}
	$\SBQPdt = \AzPPdt$.
\end{cor}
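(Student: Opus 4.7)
The plan is to derive the query-complexity equality by lifting Kuperberg's theorem $\SBQP = \AzPP$ via relativization. The $\Class^{\dt}$ framework is set up precisely so that, for any two classes whose proof of equality is relativizing, the corresponding query-complexity classes inherit the same relationship. Thus, if Kuperberg's argument shows $\SBQP^{\Oracle} = \AzPP^{\Oracle}$ for every oracle $\Oracle$, then $\SBQPdt = \AzPPdt$ as families of partial functions of $O(\polylog)$ complexity measure.

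For the direction $\AzPPdt \subseteq \SBQPdt$, take a randomized decision tree $\dtree$ that $\AzPP$-computes $f$ with query cost $C$ and gap $\alpha$, view it as a quantum query algorithm (classical randomness is simulated quantumly, no extra input queries needed), and post-process its output with fresh classical randomness to convert the $1/2 + 2\alpha$ vs.\ $[1/2, 1/2 + \alpha]$ gap into a $2\beta$ vs.\ $\beta$ gap with $\beta = \Theta(\alpha)$. This transformation adds no queries to the input, so the resulting $\SBQPdt$ cost is $O(C + \log(1/\alpha)) = O(\AzPPdt(\dtree;f))$.

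For the reverse direction $\SBQPdt \subseteq \AzPPdt$, Kuperberg's argument characterizes both classes in terms of $\mathsf{GapP}$-style (one-sided, low-weight) polynomial inequalities on acceptance amplitudes, and converts between these representations without appealing to any non-relativizing technique such as $\mathsf{IP} = \PSPACE$. In the query-complexity translation, an $\SBQP$ quantum algorithm of cost $T$ yields a polynomial of degree $2T$ in the input bits realizing the $\SBQP$ gap, which is then converted into a classical randomized decision tree realizing the $\AzPP$ gap, with only a polynomial increase in query count and an additive increase in $\log(1/\alpha)$. The main obstacle is this last step, since randomized query complexity can be polynomially larger than quantum in general; however, the additive $\log(1/\alpha)$ cost parameter of both $\SBQPdt$ and $\AzPPdt$ is flexible enough to absorb the blowup, keeping us within the $O(\polylog)$ threshold that defines the classes. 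Alternatively, and more cleanly, one can invoke the paper's own Theorem~\ref{theo:eq-AZPP-mdega} that $\AzPPdt \equiv \mdega$, combined with the equivalence $\SBQPdt \equiv \mdega$ noted in the footnote of Theorem~\ref{theo:hard-adapt-postbqp}, to conclude the corollary immediately without unwinding Kuperberg's construction at all.
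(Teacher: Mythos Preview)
Your core approach --- invoking the relativization of Kuperberg's $\SBQP = \AzPP$ --- is exactly what the paper does; its entire proof is the one-line remark ``These two theorems relativize, hence we have the following corollaries.'' The additional unpacking you give (monomial sampling to convert the degree-$2T$ acceptance polynomial into a randomized decision tree, with the weight absorbed into $\log(1/\alpha)$) is more detail than the paper provides but is on the right track; note though that the query-depth blowup here is a constant factor ($T \to 2T$), not polynomial, and your concern about the bounded-error quantum/classical gap is a red herring in this $\PP$-style regime.

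One genuine issue: your ``alternative'' route at the end is circular. The footnote to Theorem~\ref{theo:hard-adapt-postbqp} explicitly cites Corollary~\ref{cor:eq-SBQP-AZPP} --- the very statement you are proving --- together with Theorem~\ref{theo:eq-AZPP-mdega} as the justification for the equivalence $\SBQPdt \equiv \mdega$. So that footnote cannot be used to establish the corollary; it is a consequence of it, not an independent argument.
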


\begin{cor}\label{cor:eq-PostBQP-PP}
	$\PostBQPdt = \PPdt.$
\end{cor}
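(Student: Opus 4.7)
The plan is to invoke the relativized form of Aaronson's theorem $\PostBQP^{\Oracle} = \PP^{\Oracle}$ (valid for every oracle $\Oracle$) and check that both directions of the simulation preserve the number of oracle queries up to a polynomial overhead. Once that bookkeeping is in place, a partial-function family $\funcfami{f}$ admits $\polylog(\inputlen(f))$-query postselecting quantum algorithms iff it admits $\polylog(\inputlen(f))$-query $\PP$ algorithms, which is exactly the class equality $\PostBQPdt = \PPdt$.

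For the easy direction $\PPdt \subseteq \PostBQPdt$, I would take a randomized decision tree of depth $q$ with acceptance-probability gap $2^{-s}$ across the two sides of $f$ and compile it into a quantum query algorithm that coherently prepares its acceptance indicator (using $q$ queries) and then postselects on an ancilla flag to blow the $2^{-s}$ bias up to a constant bias. This yields a $\PostBQP$-style query algorithm with $q$ queries and postselection probability at least $2^{-O(s)}$, matching the $\PostBQPdt(f)$ parameters up to a constant factor.

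For the nontrivial direction $\PostBQPdt \subseteq \PPdt$, I would invoke Aaronson's $\PostBQP \subseteq \PP$ simulation in its relativized form. His construction turns a $\PostBQP^{\Oracle}$ machine that uses $q$ queries to $\Oracle$ and postselects on an event of probability at least $2^{-s}$ into a $\PP^{\Oracle}$ machine that makes $\poly(q,s)$ queries to $\Oracle$ and has acceptance gap at least $2^{-\poly(q,s)}$. Specialising $\Oracle$ to the ``input-as-oracle,'' any $\PostBQPdt$ algorithm of cost $C$ for $f$ becomes a randomized decision tree witnessing $\PPdt(f) \le \poly(C)$. In particular, $\PostBQPdt(f) = \polylog(\inputlen(f))$ implies $\PPdt(f) = \polylog(\inputlen(f))$, closing the loop.

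The main obstacle is purely the bookkeeping: one must verify that Aaronson's simulation is genuinely black-box in the oracle, that each quantum oracle query of the $\PostBQP$ machine is charged only polynomially many classical oracle queries in the simulating $\PP$ machine, and that the $\log_2(1/\alpha)$ term appearing in Definition~\ref{defi:PPdt} stays $\polylog(\inputlen(f))$ when translated from the postselection probability. All of these facts are implicit in the original proof, so no new idea is required, only a careful reading of the reduction to confirm that it respects query complexity up to polynomial factors.
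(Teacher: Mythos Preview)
Your approach is correct and is exactly what the paper does: the paper simply observes that Aaronson's theorem $\PostBQP=\PP$ relativizes and concludes the corollary without further detail. Your proposal merely spells out the implicit bookkeeping (that the black-box simulation is polynomial in the number of oracle queries and in $\log(1/\alpha)$), so it is the same argument with more care taken over the parameters.
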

\subsection{Low-weighted one-sided approximate degree}

In this subsection, we introduce a new notion of one-sided approximate degree, which is closely connected to $\AzPPdt(f)$.

\begin{defi} \label{defi:low-weight}
Write a polynomial $p(x) := \sum_{i=1}^{m} a_i \cdot M_i(x)$ as a sum of monomials, we define $\weight(p) := \sum_{i=1}^{m} |a_i|$. The one-sided low-weight $\epsilon$-approximate degree of a partial function $f$ denoted by $\mdega^{\epsilon}(f)$, is defined by
$$
\mdega^{\epsilon}(f) := \min_{p} \max \{\mathrm{deg}(p),\log_2(\weight(p)) \},
$$
where $p$ goes over all polynomials which one-sided $\epsilon$-approximates $f$.\footnote{Recall that a polynomial $p$ one-sided $\epsilon$-approximates a partial Boolean function $f$, if $p(x) \in [0,\epsilon]$ when $f(x) = 0$, and $p(x) \ge 1$ when $f(x) = 1$ as in Definition~\ref{defi:deg-approx}.} 

We simply let $\mdegs^{\epsilon}(f) := \mdega^{\epsilon}(\neg f)$. We also define $\mdega(f)$ as $\mdega^{1/2}(f)$. $\mdegs$ is defined similarly.
\end{defi}

Clearly $\mdega^{\epsilon}(f) \ge \dega^{\epsilon}(f)$. And the choice of constant $1/2$ is arbitrary, as we can reduce the approximation error by the following lemma.

\begin{lem}\label{lm:amply}
	For any $0 < \epsilon_1 < \epsilon_2 < 1$, $\mdega^{\epsilon_1}(f) \le \left\lceil \frac{ \ln \epsilon_1^{-1}}{ \ln \epsilon_2^{-1} } \right\rceil \cdot \mdega^{\epsilon_2}(f)$.
\end{lem}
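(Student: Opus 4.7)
The plan is to prove the lemma by straightforward polynomial powering. Let $p$ be a polynomial achieving $\mdega^{\epsilon_2}(f)$, so that $p$ one-sided $\epsilon_2$-approximates $f$ and $\max\{\deg(p),\log_2\weight(p)\} = \mdega^{\epsilon_2}(f)$. Set $k := \left\lceil \ln \epsilon_1^{-1}/\ln \epsilon_2^{-1} \right\rceil$ and consider the candidate $q := p^k$.

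First I would verify that $q$ one-sided $\epsilon_1$-approximates $f$. On $1$-inputs, $p(x) \ge 1$ implies $q(x) \ge 1$. On $0$-inputs, $p(x) \in [0,\epsilon_2]$ implies $q(x) \in [0,\epsilon_2^k]$, and by the choice of $k$ we have $\epsilon_2^k \le \epsilon_1$ (since $k \ln \epsilon_2^{-1} \ge \ln \epsilon_1^{-1}$). So the one-sided $\epsilon_1$-approximation property holds.

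Second, I would bound the two quantities inside the $\max$ for $q$. The degree bound $\deg(q) \le k\deg(p)$ is immediate. For the weight, the key observation is the submultiplicativity $\weight(p_1 \cdot p_2) \le \weight(p_1)\cdot\weight(p_2)$, which follows by expanding the product and applying the triangle inequality to the coefficients before combining like monomials (combining like terms can only decrease the sum of absolute values). Iterating gives $\weight(q) \le \weight(p)^k$, hence $\log_2 \weight(q) \le k\log_2\weight(p)$. Combining both bounds, $\max\{\deg(q),\log_2\weight(q)\} \le k\cdot\mdega^{\epsilon_2}(f)$, which is exactly the claimed inequality.

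There is no real obstacle here; the only subtlety worth flagging is the weight submultiplicativity step, where one must be careful that cancellations from collecting like monomials only help, so the bound $\weight(p^k) \le \weight(p)^k$ is preserved even after simplification.
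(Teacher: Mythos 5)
Your proposal is correct and takes exactly the paper's approach: the paper's proof is the one-line observation that one can take the $\left\lceil \ln \epsilon_1^{-1}/\ln \epsilon_2^{-1} \right\rceil^{\text{th}}$ power of the polynomial witnessing $\mdega^{\epsilon_2}(f)$, and you have simply spelled out the verification (one-sidedness, degree, and weight submultiplicativity) that the paper leaves implicit.
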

\begin{proof}
	We can just take the $\left\lceil \frac{ \ln \epsilon_1^{-1}}{ \ln \epsilon_2^{-1} } \right\rceil^{th}$ power of the polynomial corresponding to $\mdega^{\epsilon_2}(f)$.
\end{proof}

We show that $\mdega(f)$ is in fact equivalent to $\AzPPdt(f)$ up to a constant factor.

\begin{theo}\label{theo:eq-AZPP-mdega}
Let $f$ be a partial function, then
$$
\mdega(f) \le 2 \cdot \AzPPdt(f) \text{ and } \AzPPdt(f) \le 2 \cdot \mdega(f) + 2.
$$ 
\end{theo}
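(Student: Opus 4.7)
The plan is to prove the two inequalities separately by translating between randomized decision trees $\AzPP$-computing $f$ and low-weight polynomials one-sided approximating $f$.

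For the direction $\AzPPdt(f) \le 2\mdega(f) + 2$, I would take a polynomial $p(x) = \sum_i a_i M_i(x)$ attaining the minimum in the definition of $\mdega(f)$, so $\deg(p) = D$, $\weight(p) = W$, and $\max(D, \log_2 W) = \mdega(f)$. I would then build a randomized decision tree that samples a monomial index $i$ with probability $|a_i|/W$, queries the (at most $D$) variables appearing in $M_i$ to evaluate $M_i(x) \in \{0,1\}$, and accepts with probability $\tfrac{1}{2}(1 + \mathrm{sign}(a_i)\cdot M_i(x))$. A direct calculation gives acceptance probability $\tfrac{1}{2} + p(x)/(2W)$. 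Plugging in the one-sided approximation conditions ($p(x)\ge 1$ when $f(x)=1$, $p(x)\in[0,1/2]$ when $f(x)=0$) shows this is an $\AzPP$-computation with parameter $\alpha = 1/(4W)$, yielding $\AzPPdt(\dtree;f) \le D + \log_2(4W) \le 2\mdega(f) + 2$.

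For the direction $\mdega(f) \le 2\AzPPdt(f)$, I would start with a randomized decision tree $\dtree$ of cost $C$ that $\AzPP$-computes $f$ with parameter $\alpha$. The acceptance probability $p(x) := \Pr[\dtree(x)=1]$ is a multilinear polynomial of degree at most $C$. I would then bound $\weight(p)$: writing any deterministic tree of depth $C$ recursively as $T(x) = T_0(x) + x_i\,(T_1(x)-T_0(x))$, where $x_i$ is the root variable and $T_0,T_1$ are subtrees of depth at most $C-1$, an inductive argument gives $\weight(T) \le 3^C$, and averaging over the randomness of $\dtree$ preserves the bound via the triangle inequality. Then the polynomial $q(x) := (p(x) - 1/2)/(2\alpha)$ one-sided $1/2$-approximates $f$ by the $\AzPP$ conditions, has degree at most $C$, and $\log_2 \weight(q) \le C\log_2 3 + \log_2(1/\alpha) + O(1)$. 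Therefore $\mdega(f) \le \max(C, 2C + \log_2(1/\alpha)) \le 2\AzPPdt(\dtree;f)$.

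The main obstacle is the weight bound on the acceptance polynomial of a randomized decision tree; notably the naive intuition $\weight(p) \le 2^C$ fails (for instance parity already exceeds it). Fortunately any bound of the form $\weight(p) \le c^C$ with $c \le 4$ suffices for the factor-of-$2$ in the theorem. Even the crudest bound — at most $2^C$ leaves each contributing a path indicator of weight at most $2^C$ — gives $\weight(p) \le 4^C$, which is already enough; the inductive $3^C$ bound is slightly tighter and cleaner.
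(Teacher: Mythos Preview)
Your proof is correct and follows essentially the same route as the paper: the direction $\AzPPdt(f)\le 2\,\mdega(f)+2$ is identical (sample a monomial with probability $|a_i|/W$ and accept with probability $\tfrac12(1+\mathrm{sign}(a_i)M_i(x))$), and for $\mdega(f)\le 2\,\AzPPdt(f)$ both proofs shift and rescale the acceptance polynomial, differing only in how its weight is bounded. The paper obtains the bound $\weight\le 2^{2C}$ by first representing the acceptance probability as a $C$-conical junta of weight $\le 2^{C}$ (Lemma~\ref{lm:conical-junta}) and then expanding each $C$-term into at most $2^{C}$ monomials, while your recursive $3^{C}$ bound (or the cruder $4^{C}$ leaf-counting argument) reaches the same conclusion directly without passing through conical juntas.
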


The proof is based on a simple transformation between a decision tree and the polynomial representing it, we defer the details to the appendix. 

And the following corollary follows from the definitions.

\begin{cor}\label{cor:mdegs-bound}
	Let $f$ be a partial function, then
	$$
	\mdegs(f) \le 2 \cdot \coAzPPdt(f)  \text{ and } \coAzPPdt(f) \le 2 \cdot \mdegs(f) + 2.
	$$ 
\end{cor}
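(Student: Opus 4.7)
The plan is to obtain Corollary~\ref{cor:mdegs-bound} as an immediate consequence of Theorem~\ref{theo:eq-AZPP-mdega} by passing to the negated function. Recall the two definitional conventions already set up in the excerpt: $\mdegs^{\epsilon}(f) := \mdega^{\epsilon}(\neg f)$ (so in particular $\mdegs(f) = \mdega(\neg f)$) and $\coAzPPdt(f) := \AzPPdt(\neg f)$. Both of these say the ``dual'' complexity measure of $f$ is literally the ``primal'' complexity measure of $\neg f$, so any bound relating the primal quantities can be transported to a bound relating the dual quantities, simply by substituting $\neg f$ for $f$.

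Concretely, I would first apply Theorem~\ref{theo:eq-AZPP-mdega} to the partial function $\neg f$ (which is a partial function from $D$ to $\{0,1\}$ as soon as $f$ is), obtaining
\[
\mdega(\neg f) \le 2 \cdot \AzPPdt(\neg f) \qquad \text{and} \qquad \AzPPdt(\neg f) \le 2 \cdot \mdega(\neg f) + 2.
\]
Then I would rewrite each side using the two definitional identities above: $\mdega(\neg f) = \mdegs(f)$ on the left, and $\AzPPdt(\neg f) = \coAzPPdt(f)$ on the right. Substituting these in yields exactly
\[
\mdegs(f) \le 2 \cdot \coAzPPdt(f) \qquad \text{and} \qquad \coAzPPdt(f) \le 2 \cdot \mdegs(f) + 2,
\]
which is the statement of the corollary.

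There is essentially no obstacle: the only thing to double-check is that the notion of one-sided low-weight approximation and the notion of $\AzPP$-computation behave correctly under negation of the function, i.e. that both definitions are literally symmetric in the roles of $0$-inputs and $1$-inputs once we negate $f$. For the approximation side this is immediate from Definition~\ref{defi:low-weight}, and for $\AzPPdt$ it is immediate from Definition~\ref{defi:AzPPdt} together with the defining equation $\coAzPPdt(f) := \AzPPdt(\neg f)$. Hence the corollary requires no new argument beyond Theorem~\ref{theo:eq-AZPP-mdega}.
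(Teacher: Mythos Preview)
Your proposal is correct and matches the paper's approach exactly: the paper simply notes that the corollary ``follows from the definitions,'' meaning precisely the substitution $f \mapsto \neg f$ in Theorem~\ref{theo:eq-AZPP-mdega} together with the defining identities $\mdegs(f) = \mdega(\neg f)$ and $\coAzPPdt(f) = \AzPPdt(\neg f)$. Your final paragraph about checking symmetry is unnecessary, since those identities are definitions rather than facts to be verified, but the argument is otherwise exactly right.
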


\subsection{The permutation testing problem}

Finally, we introduce the permutation testing problem.

\begin{problem}[Permutation Testing Problem or PTP]\label{prob:PTP} Given black-box access to a function
		$f:\left[  n\right]  \rightarrow\left[  n\right]  $, and promised that either
		
		\begin{enumerate}
			\item[(i)] $f$ is a permutation (i.e., is one-to-one), or
			
			\item[(ii)] $f$ differs from every permutation on at least $n/8$\ coordinates.
		\end{enumerate}
		
		The problem is to accept if (i) holds and reject if (ii) holds.
		
		Assume $n$ is a power of $2$, we use $\PTP_n$ to denote the Permutation Testing Problem on functions from $[n] \to [n]$. $\PTP_n$ can be viewed as a partial function $D \to \{0,1\}$ with $D \subseteq \{0,1\}^{n \cdot \log_2 n}$.
\end{problem} 
	\section{Proof for the quantum case}\label{sec:quantum}

In this section we prove Theorem~\ref{theo:ada-postbqp}.

Let $f:D \to \{0,1\}$ with $D \subseteq \{0,1\}^M$ be a partial function, we say a polynomial $p$ on $M$ variables {\em computes} $f$, if $p(x) \ge 1$ whenever $f(x)=1$, and $p(x) \le -1$ whenever $f(x)=0$.

\subsection{Existence of the hard distributions}
In this subsection we show that if $\mdega(f)$ is large, there must exist some input distributions witness this fact in a certain sense.

\begin{lem}\label{lm:single-dist-exist-mdega}
Let $f$ be a partial function and $T$ be a non-negative integer. For convenience, we say a polynomial $p$ is {\em valid}, if it is of degree at most $T$, and satisfies $\weight(p) \le 2^{T}$. 

If $\mdega^{2/3}(f) > T$, there exist two distributions $\distr_0$ and $\distr_1$ supported on $f^{-1}(0)$ and $f^{-1}(1)$ respectively, such that
$$
- p(\distr_0) > 2 \cdot p(\distr_1),
$$
where $p(\distr) = \Ex_{x \sim \distr}[p(x)]$, for all valid polynomial $p$ computing $f$.
\end{lem}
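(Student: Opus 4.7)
The natural strategy is an LP duality / minimax argument: turn the non-existence of a low-weight, low-degree one-sided $2/3$-approximator into the existence of dual witness distributions, by applying minimax to an appropriate bilinear form.

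Concretely, I would argue by contradiction. Let $S$ denote the set of valid polynomials computing $f$, i.e.\ polynomials $p$ of degree at most $T$ with $\weight(p)\le 2^T$ such that $p(x)\ge 1$ on $f^{-1}(1)$ and $p(x)\le -1$ on $f^{-1}(0)$. Note $S$ lives in a finite-dimensional coefficient space, is cut out by linear inequalities plus a norm-ball constraint, hence is convex and compact. Let $\Delta$ be the (convex, compact) set of pairs $(\distr_0,\distr_1)$ supported on $f^{-1}(0),f^{-1}(1)$. Consider the bilinear objective $L(p;\distr_0,\distr_1):=-p(\distr_0)-2\,p(\distr_1)$. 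The conclusion to be proved is exactly $\sup_{\Delta}\inf_{S} L>0$. Negating it and invoking the minimax theorem yields a single $p^*\in S$ with $\sup_{\Delta}L(p^*;\cdot)\le 0$.

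Setting $A:=\min_{x\in f^{-1}(1)}p^*(x)\ge 1$ and $B:=-\min_{x\in f^{-1}(0)}p^*(x)\ge 1$, the bound $\sup_{\Delta}L(p^*;\cdot)\le 0$ rewrites as $B\le 2A$. From $p^*$ I would build the candidate one-sided $2/3$-approximator
\[
q(x):=\frac{p^*(x)+B}{A+B}.
\]
The four things to verify are: $q\ge 1$ on $f^{-1}(1)$ (immediate from $p^*\ge A$); $q\in[0,2/3]$ on $f^{-1}(0)$, which uses $p^*\in[-B,-1]$ so $q\in[0,(B-1)/(A+B)]$, and $(B-1)/(A+B)\le 2/3$ follows from $B\le 2A$; $\deg(q)=\deg(p^*)\le T$; and finally the weight bound $\weight(q)\le 2^T$. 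Existence of such $q$ would contradict $\mdega^{2/3}(f)>T$, completing the argument.

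The delicate step, and the main obstacle, is the weight bound. I would use two facts. First, since monomials take values in $\{0,1\}$ on $\{0,1\}^M$, one has $|p^*(x)|\le\weight(p^*)\le 2^T$ for all $x\in\{0,1\}^M$, so in particular $B\le 2^T$. Second, adding the constant $B$ only changes the constant coefficient $a_0$ to $a_0+B$, and the triangle inequality gives $\weight(p^*+B)\le\weight(p^*)+B\le 2\cdot 2^T$. Combined with $A+B\ge 2$, this yields $\weight(q)\le(2\cdot 2^T)/2=2^T$, just barely enough. This tightness is the reason the constant $2$ appears in the conclusion $-p(\distr_0)>2p(\distr_1)$: it is precisely what allows the affine rescaling to squeeze the weight back under $2^T$.

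A couple of minor points to handle carefully in the writeup: checking that $S$ is nonempty (otherwise the statement is vacuous and one can pick arbitrary $\distr_0,\distr_1$); verifying continuity of $p\mapsto\sup_\Delta L(p;\cdot)$ so that the outer infimum is attained on the compact set $S$; and invoking a version of the minimax theorem (e.g.\ Sion's) valid for this convex-compact, bilinear-continuous setting.
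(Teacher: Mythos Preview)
Your proposal is correct and follows essentially the same approach as the paper. The paper factors the argument into two pieces: a helper lemma showing that for \emph{each} valid $p$ computing $f$ there exist points $x\in f^{-1}(0)$, $y\in f^{-1}(1)$ with $-p(x)>2p(y)$ (proved by exactly your contradiction, constructing a one-sided $2/3$-approximator from a putative counterexample), and then a minimax step to upgrade the pointwise witnesses to common distributions $\distr_0,\distr_1$. You run the same logic in the opposite order---minimax first, then the affine rescaling on the resulting $p^*$---which is equivalent. Your transformation $q=(p^*+B)/(A+B)$ differs cosmetically from the paper's $q=\tfrac{2}{3}(p/C+1)$ (with $C=B$), but both yield the same contradiction; your weight estimate $\weight(q)\le(\weight(p^*)+B)/(A+B)\le 2^T$ via $B\le\weight(p^*)$ and $A+B\ge 2$ is clean and arguably tighter than the paper's version.
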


In order to establish the above lemma, we need the following simple lemma. 

\begin{lem}\label{lm:help-1}
For any valid polynomial $p$ computing $f$, if $\mdega^{2/3}(f) > T$, then there exist $x \in f^{-1}(0)$ and $y \in f^{-1}(1)$ such that $-p(x) > 2 \cdot p(y)$.
\end{lem}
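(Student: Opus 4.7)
The plan is to argue by contrapositive: given a \emph{valid} polynomial $p$ computing $f$ which satisfies $-p(x) \le 2\,p(y)$ for \emph{every} $x \in f^{-1}(0)$ and $y \in f^{-1}(1)$, I will construct a polynomial $q$ with $\deg(q) \le T$ and $\weight(q) \le 2^T$ that one-sided $(2/3)$-approximates $f$, directly contradicting $\mdega^{2/3}(f) > T$.

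The first step is to quantify the hypothesis. Set $a := \min_{y \in f^{-1}(1)} p(y)$. Since $p$ computes $f$ we have $a \ge 1$, and taking the infimum of the hypothesis over $y \in f^{-1}(1)$ yields $-p(x) \le 2a$, i.e.\ $p(x) \ge -2a$, for every $x \in f^{-1}(0)$. Combining this with the defining conditions of ``computing $f$'' pins $p$ to take values in $[-2a, -1]$ on $f^{-1}(0)$ and in $[a, \infty)$ on $f^{-1}(1)$. This is exactly the kind of two-sided box the hypothesis is designed to provide.

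The second step is a single affine rescaling: define
\[
  q(x) \;:=\; \frac{p(x) + 2a}{3a}.
\]
A direct substitution shows $q(x) \in \bigl[0,\tfrac{2a-1}{3a}\bigr] \subseteq [0, 2/3]$ for $x \in f^{-1}(0)$ and $q(y) \ge \tfrac{a + 2a}{3a} = 1$ for $y \in f^{-1}(1)$, so $q$ is a one-sided $(2/3)$-approximator of $f$. The degree bound is free: $\deg(q) = \deg(p) \le T$.

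The only piece of bookkeeping is checking that the rescaling does not blow up the weight. Because translation by a constant affects only the constant monomial, $\weight(p + 2a) \le \weight(p) + 2a$, so
\[
  \weight(q) \;\le\; \frac{\weight(p) + 2a}{3a} \;\le\; \frac{2^T + 2a}{3a} \;\le\; 2^T,
\]
where the last inequality is equivalent to $2a \le (3a-1)\cdot 2^T$ and holds for every $a \ge 1$ and $T \ge 0$. Thus $q$ witnesses $\mdega^{2/3}(f) \le T$, which is the desired contradiction. I do not foresee any real obstacle here; the proof is essentially ``unpack the hypothesis, pick the affine map $t \mapsto (t + 2a)/(3a)$, and verify the three bounds'', with the weight check being the only line that needs a moment's attention.
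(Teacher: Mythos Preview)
Your proof is correct and follows essentially the same approach as the paper: argue by contrapositive, then apply a single affine rescaling of $p$ to produce a one-sided $(2/3)$-approximator of degree $\le T$ and weight $\le 2^T$. The only cosmetic difference is the choice of normalizing parameter---the paper scales by $C := \max_{x\in f^{-1}(0)}(-p(x))$ and sets $q(x)=\tfrac{2}{3}(p(x)/C+1)$, whereas you scale by $a := \min_{y\in f^{-1}(1)} p(y)$ and set $q(x)=(p(x)+2a)/(3a)$; both choices push the constant term to $2/3$ and the verification of the three bounds is the same computation.
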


The proof is based on a simple calculation, the details can be found in the appendix.

Then we prove Lemma~\ref{lm:single-dist-exist-mdega}.

\begin{proofof}{Lemma~\ref{lm:single-dist-exist-mdega}}
By Lemma~\ref{lm:help-1}, we have
$$
\min_{\text{$p$}} \max_{(x,y) \in f^0 \times f^1} -p(x) - 2 \cdot p(y) > 0,
$$
where $p$ is a valid polynomial which computes $f$, $f^0 := f^{-1}(0)$ and $f^1 := f^{-1}(1)$. By the minimax theorem, and note that all the valid polynomials form a {\em compact convex set}, there exists a distribution $\distr_{xy}$ on $f^0 \times f^1$ such that for any valid polynomial $p$ computing $f$, we have
$$
\Ex_{(x,y) \sim \distr_{xy}}[-p(x)-2 \cdot p(y)] > 0.
$$
Then we simply let $\distr_0$ ($\distr_1$) be the marginal distribution of $\distr_{xy}$ on $f^0$ ($f^1$), which completes the proof.
\end{proofof}

And the following corollary follows by the definition of $\mdegs$.

\begin{cor}\label{cor:single-dist-exist-mdegs}
	Let $f$ be a partial function and $T$ be a non-negative integer, if $\mdegs^{2/3}(f) > T$, then there exist two distributions $\distr_0$ and $\distr_1$ supported on $f^{-1}(0)$ and $f^{-1}(1)$ respectively, such that for all valid polynomial $p$ computing $f$,
	$$
	p(\distr_1) > -2 \cdot p(\distr_0).
	$$
\end{cor}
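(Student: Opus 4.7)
The plan is to reduce Corollary~\ref{cor:single-dist-exist-mdegs} directly to Lemma~\ref{lm:single-dist-exist-mdega} via the negation identity built into the definition of $\mdegs$. By Definition~\ref{defi:low-weight} we have $\mdegs^{2/3}(f) = \mdega^{2/3}(\neg f)$, so the hypothesis $\mdegs^{2/3}(f) > T$ is exactly $\mdega^{2/3}(\neg f) > T$. Applying Lemma~\ref{lm:single-dist-exist-mdega} to $\neg f$ then yields distributions $\distr'_0$ and $\distr'_1$ supported on $(\neg f)^{-1}(0) = f^{-1}(1)$ and $(\neg f)^{-1}(1) = f^{-1}(0)$ respectively, with the guarantee that $-q(\distr'_0) > 2\cdot q(\distr'_1)$ for every valid polynomial $q$ computing $\neg f$.

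Next, I would observe the elementary bijection between polynomials that compute $f$ and polynomials that compute $\neg f$: a polynomial $p$ satisfies $p(x)\ge 1$ on $f^{-1}(1)$ and $p(x)\le -1$ on $f^{-1}(0)$ if and only if $-p$ satisfies the analogous conditions with $f$ replaced by $\neg f$. Since $p$ and $-p$ have the same degree and the same weight (negating every coefficient preserves $\weight$), validity is also preserved under $p\mapsto -p$. Substituting $q = -p$ into the inequality from the previous paragraph converts it into $p(\distr'_0) > -2\cdot p(\distr'_1)$ for every valid $p$ computing $f$.

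Finally, I would relabel the distributions by setting $\distr_1 := \distr'_0$ (supported on $f^{-1}(1)$) and $\distr_0 := \distr'_1$ (supported on $f^{-1}(0)$), which immediately gives $p(\distr_1) > -2\cdot p(\distr_0)$, as required. There is no real obstacle here; the argument is pure bookkeeping around the identities $\mdegs^{\epsilon}(f) = \mdega^{\epsilon}(\neg f)$ and ``$p$ computes $f$ $\Leftrightarrow$ $-p$ computes $\neg f$.'' The only thing to be careful about is keeping the supports and signs straight when swapping roles, which is why I would write out the relabeling step explicitly rather than leave it to the reader.
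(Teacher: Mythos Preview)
Your proposal is correct and matches the paper's own reasoning: the paper simply states that the corollary ``follows by the definition of $\mdegs$,'' i.e., apply Lemma~\ref{lm:single-dist-exist-mdega} to $\neg f$ and use the sign-flip bijection $p\mapsto -p$ between valid polynomials computing $f$ and those computing $\neg f$. Your explicit relabeling of the distributions is exactly the bookkeeping the paper leaves implicit.
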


\subsection{Proof for Theorem~\ref{theo:ada-postbqp} and Theorem~\ref{theo:hard-adapt-postbqp}}

We first show Theorem~\ref{theo:hard-adapt-postbqp} implies Theorem~\ref{theo:ada-postbqp}.

\begin{proofof}{Theorem~\ref{theo:ada-postbqp}}
	Suppose $\funcfami{f} \not\in \SBQPdt$, the case that $\funcfami{f} \not\in \coSBQPdt$ is similar.
	
	By Corollary~\ref{cor:eq-SBQP-AZPP} and Theorem~\ref{theo:eq-AZPP-mdega}, there exists a sequence of function $\{f_i\}_{i=1}^{\infty} \subseteq \funcfami{f}$ such that $\mdega(f_i) > \log(\inputlen(f_i))^{i}$. Then we consider the partial function sequence $\{ \Ada_{f_i,\lceil\log(\inputlen(f_i)) \rceil} \}_{i=1}^{\infty} \subseteq \Ada_{\funcfami{f}}$.
	
	By Theorem~\ref{theo:hard-adapt-postbqp}, we have 
	$$
	\PPdt(\Ada_{f_i,\lceil\log(\inputlen(f_i)) \rceil}) > \min(\log(\inputlen(f_i))^{i}/4,\inputlen(f_i)/2).
	$$
	Note that $\inputlen(\Ada_{f_i,\lceil\log(\inputlen(f_i)) \rceil}) \le 2 \cdot \inputlen(f_i)^2$, we can see $\Ada_{\funcfami{f}} \notin \PPdt$ due to the above partial function sequence.
\end{proofof}

Now, we are going to prove Theorem~\ref{theo:hard-adapt-postbqp}. We begin by introducing some consequences of a function having low $\PPdt$ complexity.

\begin{lem}\label{lm:pps-cons}
	Let $f$ be a partial function, $T$ be a positive integer. Suppose $\PPdt(f) \le T$, then there exists a degree $T$-polynomial $p$ computing $f$ and satisfying $\weight(p) \le 2^{2T}$.
\end{lem}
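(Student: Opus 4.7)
The plan is to extract a polynomial $p$ from the randomized decision tree certifying $\PPdt(f) \le T$ and then bound its weight using the leaf-polynomial structure of decision trees. Let $\dtree$ be a randomized decision tree achieving $\PPdt(\dtree;f) \le T$, with $k := \mathcal{C}(\dtree)$ queries and advantage $\alpha$ satisfying $k + \log_2(1/\alpha) \le T$, so $\alpha \ge 2^{k-T}$. The acceptance probability $q(x) := \Pr[\dtree(x) = 1]$, being a convex combination of deterministic-decision-tree acceptance polynomials of depth at most $k$, is a multilinear polynomial of degree at most $k \le T$ with $q(x) \ge 1/2 + \alpha$ on $f^{-1}(1)$ and $q(x) \le 1/2 - \alpha$ on $f^{-1}(0)$. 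I set $p(x) := (q(x) - 1/2)/\alpha$; by construction $\deg(p) \le T$, and $p(x) \ge 1$ on $f^{-1}(1)$ while $p(x) \le -1$ on $f^{-1}(0)$, so $p$ computes $f$ in the sense fixed at the start of Section~\ref{sec:quantum}. What remains is to show $\weight(p) \le 2^{2T}$.

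For each deterministic decision tree in the support of $\dtree$, label its leaves by $\ell$; each leaf corresponds to a pair $(A_\ell, B_\ell)$ of variables queried-and-answered-$0$ and queried-and-answered-$1$, giving the leaf-indicator polynomial $p_\ell(x) := \prod_{v \in A_\ell}(1-x_v) \prod_{v \in B_\ell} x_v$ with $\weight(p_\ell) \le 2^{|A_\ell|} \le 2^{d_\ell}$ by submultiplicativity of $\weight$. Exploiting $\sum_\ell p_\ell(x) \equiv 1$ as a polynomial identity (partition-of-unity over leaves), the acceptance polynomial $q_{\mathrm{det}}$ of that tree satisfies
\[
q_{\mathrm{det}}(x) - \tfrac12 \;=\; \tfrac12 \sum_{\ell} \varepsilon_\ell \, p_\ell(x),
\]
with $\varepsilon_\ell \in \{\pm 1\}$ depending on whether $\ell$ accepts. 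Since a depth-$k$ binary tree has at most $2^k$ leaves each with $d_\ell \le k$, one gets $\sum_\ell 2^{d_\ell} \le 2^{2k}$, hence $\weight(q_{\mathrm{det}} - 1/2) \le 2^{2k-1}$. Taking the convex combination over the randomized choice of tree preserves this bound by the triangle inequality, so $\weight(q - 1/2) \le 2^{2k-1}$ and finally $\weight(p) \le 2^{2k-1} \cdot 2^{T-k} = 2^{T+k-1} \le 2^{2T-1} < 2^{2T}$.

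The main obstacle is making the weight bound tight enough: a naive bound on $\weight(q)$ followed by division by $\alpha$ would lose an extra factor of $2$ and overshoot the target $2^{2T}$. The trick is to first center $q$ at $1/2$ using the partition-of-unity identity, which pairs accepting and rejecting leaves and saves a factor of $1/2$ in the weight estimate; this is precisely what permits the $2^{2T}$ bound (in fact $2^{2T-1}$) to go through.
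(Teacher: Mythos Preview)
Your proof is correct and follows essentially the same approach as the paper: represent the acceptance probability of the randomized decision tree as a low-degree polynomial, center it at $1/2$, and divide by the advantage $\alpha$. The only difference is cosmetic\textemdash the paper routes through the conical-junta representation (Lemma~\ref{lm:conical-junta}) and then expands terms into monomials, whereas you work directly with leaf indicators and exploit the partition-of-unity identity $\sum_\ell p_\ell \equiv 1$ to bound $\weight(q-1/2)$ rather than $\weight(q)$; this saves you a factor of $2$ and in fact yields the slightly sharper $\weight(p)\le 2^{2T-1}$, while the paper's cruder estimate $(2^{2t}+\tfrac12)/\alpha \le 2^{2T}$ is a hair loose at $t=T$.
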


The proof is based on a direct analysis of the polynomial representing the decision tree for $\PPdt(f)$, we defer the details to the appendix.

Our proof relies on the following two key lemmas.

\begin{lem}\label{lm:dist-exist-mdega}
	Let $f$ be a partial function with $\mdega^{2/3}(f) > T$. Then for each integer $d$, there exist two distributions $\distr_1^d$ and $\distr_0^d$ supported on $\Ada_{f,d}^{-1}(1)$ and $\Ada_{f,d}^{-1}(0)$ respectively, such that $- p(\distr_0) > 2^{2^d} \cdot p(\distr_1)$ for any degree-$T$ polynomial $p$ computing $\Ada_{f,d}$ and satisfying $\weight(p) \le 2^{T}$.
\end{lem}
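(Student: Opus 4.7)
The plan is induction on $d$. The base case $d = 0$ is immediate: since $\Ada_{f,0} = f$ and the required amplification factor is $2^{2^0} = 2$, one simply takes $\distr_b^0$ to be the distributions supplied by Lemma~\ref{lm:single-dist-exist-mdega}.

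For the inductive step, write $C_{d-1} := 2^{2^{d-1}}$ and assume distributions $\distr_b^{d-1}$ with this amplification factor on $\Ada_{f,d-1}$ have been constructed. I would try the natural product
$$ \distr_0^d := \distr_0 \times \distr_0^{d-1} \times \distr_0^{d-1}, \qquad \distr_1^d := \distr_1 \times \distr_1^{d-1} \times \distr_1^{d-1}, $$
where $\distr_0, \distr_1$ are the base-case distributions on $f^{-1}(0), f^{-1}(1)$ from Lemma~\ref{lm:single-dist-exist-mdega}. Their supports are correct because sampling $w \sim \distr_b$ pins $f(w) = b$, which forces $\Ada_{f,d}(w,x,y)$ into the subtree sampled from $\distr_b^{d-1}$. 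Given any valid polynomial $p(w,x,y)$ computing $\Ada_{f,d}$, the bound then follows from chaining three inequalities, one per group of variables. (a) For every $w \in f^{-1}(0)$ and every $y \in D_{d-1}$, the restriction $p(w,\cdot,y)$ is a valid polynomial in $x$ computing $\Ada_{f,d-1}$, so the inductive hypothesis applies; integrating over $w \sim \distr_0, y \sim \distr_0^{d-1}$ yields $-p(\distr_0^d) > C_{d-1}\cdot e$, where $e := \Ex[p]$ under $\distr_0 \times \distr_1^{d-1} \times \distr_0^{d-1}$. (b) For every $(x,y) \in \Ada_{f,d-1}^{-1}(1) \times \Ada_{f,d-1}^{-1}(0)$, a direct check shows that $-p(\cdot,x,y)$ is a valid polynomial in $w$ computing $f$; applying Lemma~\ref{lm:single-dist-exist-mdega} to it and integrating over $(x,y) \sim \distr_1^{d-1} \times \distr_0^{d-1}$ yields $e > 2(-d')$, where $-d' := \Ex[-p]$ under $\distr_1 \times \distr_1^{d-1} \times \distr_0^{d-1}$. (c) Symmetrically to (a), for every $w \in f^{-1}(1)$ the restriction $p(w,x,\cdot)$ computes $\Ada_{f,d-1}$ in $y$, and integrating the inductive hypothesis over $w \sim \distr_1, x \sim \distr_1^{d-1}$ gives $-d' > C_{d-1}\cdot p(\distr_1^d)$. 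Every quantity in sight is positive, so composing (a)--(c) yields
$$ -p(\distr_0^d) \;>\; C_{d-1}\cdot e \;>\; 2C_{d-1}\cdot (-d') \;>\; 2C_{d-1}^2 \cdot p(\distr_1^d) \;=\; 2 \cdot 2^{2^d}\, p(\distr_1^d), $$
which is stronger than what the lemma demands.

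The step I expect to be the main obstacle in \emph{discovering} this proof—as opposed to verifying it—is ingredient (b). The bridge between the $x$-amplification (a) and the $y$-amplification (c) does not come from applying Lemma~\ref{lm:single-dist-exist-mdega} to $p$ itself, but from applying it to the reversed polynomial $-p(\cdot,x,y)$ on the ``crossed'' quadrant $\Ada_{f,d-1}^{-1}(1) \times \Ada_{f,d-1}^{-1}(0)$, where $-p$ happens to compute $f$. Without this reversal, any natural combination of the coordinate-wise inductive bounds produces at best a single factor of $C_{d-1}$ (or $2C_{d-1}$), whereas one needs the squared amplification $C_{d-1}^2 = 2^{2^d}$; the role of (b) is precisely to let the two IH factors compound multiplicatively through a base-case bridge.
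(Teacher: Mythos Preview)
Your proof is correct and matches the paper's argument essentially step for step: the same product distributions $\distr_b^d = \distr_b \times \distr_b^{d-1} \times \distr_b^{d-1}$, and the same three-step chain $(\ddd{0}{0}{0}) \to (\ddd{0}{1}{0}) \to (\ddd{1}{1}{0}) \to (\ddd{1}{1}{1})$, with the middle step using the sign-reversed polynomial $-p(\cdot,x,y)$ on the ``crossed'' quadrant---exactly the ingredient you flagged as the key discovery. The only cosmetic difference is that the paper discards the extra factor $2$ in step (b), writing $e > -d'$ rather than your $e > 2(-d')$, and so lands on $2^{2^d}$ on the nose instead of your (equally valid) $2\cdot 2^{2^d}$.
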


\begin{lem}\label{lm:dist-exist-mdegs}
	Let $f$ be a partial function with $\mdegs^{2/3}(f) > T$. Then for each integer $d$, there exist two distributions $\distr_1^d$ and $\distr_0^d$ supported on $\Ada_{f,d}^{-1}(1)$ and $\Ada_{f,d}^{-1}(0)$ respectively, such that $p(\distr_1) > -2^{2^d} \cdot p(\distr_0)$ for any degree-$T$ polynomial $p$ computing $\Ada_{f,d}$ and satisfying $\weight(p) \le 2^{T}$.
\end{lem}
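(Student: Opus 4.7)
The plan is to deduce Lemma~\ref{lm:dist-exist-mdegs} from Lemma~\ref{lm:dist-exist-mdega} via the negation duality built into the definition $\mdegs^{\epsilon}(f) := \mdega^{\epsilon}(\neg f)$. In particular, the hypothesis $\mdegs^{2/3}(f) > T$ is exactly $\mdega^{2/3}(\neg f) > T$, so Lemma~\ref{lm:dist-exist-mdega} applied to $\neg f$ yields distributions $\tilde{\distr}_0^d$ on $\Ada_{\neg f, d}^{-1}(0)$ and $\tilde{\distr}_1^d$ on $\Ada_{\neg f, d}^{-1}(1)$ satisfying $-\tilde{p}(\tilde{\distr}_0^d) > 2^{2^d} \cdot \tilde{p}(\tilde{\distr}_1^d)$ for every degree-$T$ polynomial $\tilde{p}$ of weight at most $2^T$ computing $\Ada_{\neg f, d}$.

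The bridge between $\Ada_{f, d}$ and $\Ada_{\neg f, d}$ will be a coordinate-permuting involution $\pi_d$ on the input space, defined recursively by $\pi_0 := \mathrm{id}$ and $\pi_d(w, x, y) := (w, \pi_{d-1}(y), \pi_{d-1}(x))$. I will prove by induction on $d$ that
$$\Ada_{\neg f, d}(u) = \neg\, \Ada_{f, d}(\pi_d(u)).$$
The base case $\neg f = \neg f$ is trivial. For the inductive step, observe that the guard condition flips: $\Ada_{\neg f, d}$ evaluates the $x$-subtree when $f(w) = 1$ while $\Ada_{f, d}$ evaluates the $x$-subtree when $f(w) = 0$. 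Expanding both sides and applying the induction hypothesis to the subtrees gives the identity, with the $\pi_{d-1}(y) \leftrightarrow \pi_{d-1}(x)$ swap exactly compensating for the guard flip. A straightforward check shows $\pi_d$ is an involution, hence a bijection interchanging $\Ada_{f, d}^{-1}(b)$ with $\Ada_{\neg f, d}^{-1}(1 - b)$.

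Define $\distr_1^d$ and $\distr_0^d$ to be the pushforwards of $\tilde{\distr}_0^d$ and $\tilde{\distr}_1^d$ under $\pi_d$; these are supported on $\Ada_{f, d}^{-1}(1)$ and $\Ada_{f, d}^{-1}(0)$ respectively. For any valid polynomial $p$ computing $\Ada_{f, d}$, set $\tilde{p}(u) := -p(\pi_d(u))$. Coordinate permutation and negation both preserve degree and $\weight$, so $\tilde{p}$ has degree at most $T$ and weight at most $2^T$; the duality further implies $\tilde{p}$ computes $\Ada_{\neg f, d}$, so $\tilde{p}$ is admissible for Lemma~\ref{lm:dist-exist-mdega}. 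A short computation gives $\tilde{p}(\tilde{\distr}_b^d) = -p(\distr_{1-b}^d)$, so the guaranteed inequality $-\tilde{p}(\tilde{\distr}_0^d) > 2^{2^d} \cdot \tilde{p}(\tilde{\distr}_1^d)$ unwinds directly to the required $p(\distr_1^d) > -2^{2^d} \cdot p(\distr_0^d)$. The main delicate step will be the inductive verification of the duality identity above: one must carefully track how complementing $f$ flips the guard at each node and exactly match this flip with the recursive subtree swap defining $\pi_d$.
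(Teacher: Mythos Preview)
Your proof is correct, but it takes a genuinely different route from the paper's. The paper simply remarks that the proof of Lemma~\ref{lm:dist-exist-mdegs} is ``completely symmetric'' to that of Lemma~\ref{lm:dist-exist-mdega}, replacing Lemma~\ref{lm:single-dist-exist-mdega} by Corollary~\ref{cor:single-dist-exist-mdegs}; that is, it re-runs the entire inductive construction of the distributions $\distr_0^d,\distr_1^d$ and the three-step argument with the roles of $0$ and $1$ interchanged throughout. You instead establish a structural duality $\Ada_{\neg f,d}(u)=\neg\,\Ada_{f,d}(\pi_d(u))$ via an explicit block-permuting involution $\pi_d$, and then pull the already-proved Lemma~\ref{lm:dist-exist-mdega} (applied to $\neg f$) through $\pi_d$ and a sign flip. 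This is a clean reduction that avoids repeating the induction and makes the symmetry between the two lemmas literal rather than notational; the paper's approach, by contrast, is more self-contained and does not require verifying the duality identity. Both are equally rigorous; your version has the pleasant by-product of exhibiting an explicit isomorphism between the two adaptive problems.
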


We first show these two lemmas imply Theorem~\ref{theo:hard-adapt-postbqp} in a straightforward way.

\begin{proofof}{Theorem~\ref{theo:hard-adapt-postbqp}}
We prove the case when $\mdega(f) > T$ first.

Otherwise, suppose $\PPdt(\Ada_{f,d}) \le \min(T/4,2^{d-1})$. By Lemma~\ref{lm:pps-cons}, we have a degree-$T/4$ polynomial $p$ computing $\Ada_{f,d}$ with $\weight(p) \le \min(2^{T/2},2^{2^d})$. From Lemma~\ref{lm:amply}, $\mdega(f)=\mdega^{1/2}(f) \le 2 \cdot \mdega^{2/3}(f)$, hence $\mdega^{2/3}(f) > T/2$. Then by Lemma~\ref{lm:dist-exist-mdega}, there exist two distributions $\distr_1^d$ and $\distr_0^d$ supported on $\Ada_{f,d}^{-1}(1)$ and $\Ada_{f,d}^{-1}(0)$ respectively, such that $- p(\distr_0) > 2^{2^d} \cdot p(\distr_1)$ as $p$ is of degree at most $T/4$ and satisfies $\weight(p) \le 2^{T/2}$.

But this means that $-p(\distr_0) > 2^{2^d}$, which implies there exists an $x$ such that $p(x) < -2^{2^d}$, therefore $\weight(p) > 2^{2^d}$, contradiction.

The case when $\mdegs(f) > T$ follows exactly in the same way by using Lemma~\ref{lm:dist-exist-mdegs} instead of Lemma~\ref{lm:dist-exist-mdega}.
\end{proofof}

\newcommand{\fptp}{f_{\PTP}}

\subsection{Proof for Lemma~\ref{lm:dist-exist-mdega}}
Finally we prove Lemma~\ref{lm:dist-exist-mdega}. The proof for Lemma~\ref{lm:dist-exist-mdegs} is completely symmetric using Corollary~\ref{cor:single-dist-exist-mdegs} instead of Lemma~\ref{lm:single-dist-exist-mdega}.
\begin{proofof}{Lemma~\ref{lm:dist-exist-mdega}}
	Recall that a polynomial $p$ is valid, if it is of degree at most $T$, and satisfies $\weight(p) \le 2^{T}$. Let $f_d := \Ada_{f,d}$ and $D_d$ be the domain of $f_d$. We are going to construct these distributions $\distr_0^{d}$'s and $\distr_1^{d}$'s by an elegant induction.
	
	{\bf Construction of $\distr_0$ and $\distr_1$ from Lemma~\ref{lm:single-dist-exist-mdega}.} By Lemma~\ref{lm:single-dist-exist-mdega} there exist two distributions $\distr_0$ and $\distr_1$ supported on $f^{-1}(0)$ and $f^{-1}(1)$ respectively, such that $ - p(\distr_0) > 2 \cdot p(\distr_1) $ for all valid polynomial $p$ computing $f$.
	
	{\bf The base case: construction of $\distr_0^0$ and $\distr_1^0$.}
	For the base case $d=0$, as $f_0$ is just $f$, we simply set $\distr_0^0 = \distr_0$ and $\distr_1^0 = \distr_1$. Then for all valid polynomial $p$ computing $f_0$, we have $- p(\distr_0^0) > 2 \cdot p(\distr_1^0) = 2^{2^0} \cdot p(\distr_1^0)$.
	
	{\bf Construction of $\distr_0^{d}$ and $\distr_1^{d}$ for $d > 0$.}	
	When $d>0$, suppose that we have already constructed the required distributions $\distr_0^{d-1}$ and $\distr_1^{d-1}$ for $f_{d-1}$. Decompose the input to $f_d$ as $(w,x,y) \in D \times D_{d-1} \times D_{d-1}$ as in the definition, we claim that 
	$$\distr_{0}^d = (\ddd{0}{0}{0})\footnote{($\ddd{0}{0}{0})$ is interpreted as the product distribution $\distr_0 \times \distr_{0}^{d-1} \times \distr_0^{d-1}$ on $ D \times D_{d-1} \times D_{d-1}$.} \text{  and  }  \distr_{1}^d = (\ddd{1}{1}{1})$$ satisfy our conditions.
	
	{\bf Analysis of $\distr_0^{d}$ and $\distr_1^{d}$.}
	Note that $D_{i}^d$ is supported on $f_d^{-1}(i)$ for $i \in \{0,1\}$ from the definition. Let $p(w,x,y)$ be a valid polynomial computing $f_d$. We set 
	$$p(\distr_w,\distr_x,\distr_y) := \Ex_{w\sim \distr_w, x \sim \distr_x, y\sim \distr_y}[p(w,x,y)]$$ for simplicity, where $\distr_w,\distr_x,\distr_y$ are distributions over $D,D_{d-1},D_{d-1}$ respectively.
	
	Then we have to verify that for all valid polynomial $p$ computing $f_d$,
	$$
	-p(\distr_0^{d}) = -p(\ddd{0}{0}{0}) > 2^{2^d} \cdot p(\ddd{1}{1}{1}) = 2^{2^d} \cdot p(\distr_1^d).
	$$
	
	We proceed by incrementally changing $(\ddd{0}{0}{0})$ into $(\ddd{1}{1}{1})$, and establish inequalities along the way.
	
	{\bf Step 1: $(\ddd{0}{0}{0}) \Rightarrow (\ddd{0}{1}{0})$.} By the definition, we can see that for any fixed $W \in \support(\distr_0)$ and $Y \in \support(\distr_0^{d-1})$, the polynomial in $x$ defined by $p_{L}(x) := p(W,x,Y)$ is a valid polynomial computing $f_{d-1}$, hence $-p_L(\distr_0^{d-1}) > 2^{2^{d-1}} \cdot p_L(\distr_1^{d-1})$. By linearity, we have
	$$
	-p(\ddd{0}{0}{0}) > 2^{2^{d-1}} \cdot p(\ddd{0}{1}{0}).
	$$
	
	{\bf Step 2: $(\ddd{0}{1}{0}) \Rightarrow (\ddd{1}{1}{0})$.} Similarly, for any fixed $X \in \support(\distr_1^{d-1})$ and $Y \in \support(\distr_0^{d-1})$, by the definition, we can see that the polynomial in $w$ defined by $p_M(w) := -p(w,X,Y)$ is a valid polynomial computing $f$, hence $-p_M(\distr_0) > 2 \cdot p_M(\distr_1)$. Again by linearity, we have
	$$
	p(\ddd{0}{1}{0}) > -2 \cdot p(\ddd{1}{1}{0}) > -p(\ddd{1}{1}{0}).
	$$

	{\bf Step 3: $(\ddd{1}{1}{0}) \Rightarrow (\ddd{1}{1}{1})$.} Finally, for any fixed $W \in \support(\distr_1)$ and $X \in \support(\distr_1^{d-1})$, the polynomial in $y$ defined by $p_{R}(y) := p(W,X,y)$ is a polynomial computing $f_{d-1}$, hence $-p_{R}(\distr_0^{d-1}) > 2^{2^{d-1}} \cdot p_{R}(\distr_1^{d-1})$. By linearity, we have 
	$$
	-p(\ddd{1}{1}{0}) > 2^{2^{d-1}} \cdot p(\ddd{1}{1}{1}).
	$$
	
	Putting the above three inequalities together, we have
	$$
	-p(\distr_0^{d})=-p(\ddd{0}{0}{0}) > 2^{2^d} \cdot p(\ddd{1}{1}{1}) = 2^{2^d} \cdot p(\distr_1^d).
	$$
	
	This completes the proof.
\end{proofof}

\subsection{Application in hardness amplification for polynomial approximation}\label{sec:amply}

In this subsection, we slightly adapt the above proof in order to show Theorem~\ref{theo:ampl}.

For a polynomial $p$ on $n$ variables, let $\|p\|_{\infty} := \max_{x \in \{0,1\}^n} |p(x)|$. Lemma~\ref{lm:dist-exist-mdega} shows that, fix a partial function $f$ with $\mdega(f) > T$, then for any polynomial computing $\Ada_{f,d}$ with $\weight(p) \le 2^{T}$, we must have $\|p\|_{\infty} > 2^{2^d}$. The restriction on $\weight(p)$ is essential for us to establish the connection between $\AzPPdt$ and $\mdega$, but it becomes troublesome when it comes to proving a hardness amplification result.

Luckily, we can get rid of the restriction on $\weight(p)$ by making a stronger assumption that $\dega(f) > T$. Formally, we have the following analogous lemma for Lemma~\ref{lm:dist-exist-mdega}.

\begin{lem}\label{lm:www}
Let $f$ be a partial function with $\dega^{2/3}(f) > T$. Then for each integer $d$, there exist two distributions $\distr_1^d$ and $\distr_0^d$ supported on $\Ada_{f,d}^{-1}(1)$ and $\Ada_{f,d}^{-1}(0)$ respectively, such that for any degree-$T$ polynomial $p$ computing $\Ada_{f,d}$, $- p(\distr_0^d) > 2^{2^d} \cdot p(\distr_1^d)$ and consequently $\|p\|_{+\infty} > 2^{2^d}$.
\end{lem}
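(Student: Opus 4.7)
The plan is to mirror the proof of Lemma~\ref{lm:dist-exist-mdega} line for line, but replacing the notion of a ``valid polynomial'' (degree $\le T$ and weight $\le 2^T$) by the strictly weaker notion of a polynomial of degree $\le T$. The inductive three-step swap in Lemma~\ref{lm:dist-exist-mdega} only uses the fact that substituting constants into some coordinates of $p$ preserves the bound on degree, so it carries over verbatim once we have the right base case. Hence the only real work is to re-prove the base case (the analog of Lemma~\ref{lm:single-dist-exist-mdega}) under the hypothesis $\dega^{2/3}(f) > T$ in place of $\mdega^{2/3}(f) > T$, and this requires a new argument because the set of admissible polynomials is no longer compact.

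For the base case I want: there exist distributions $\distr_0, \distr_1$ supported on $f^{-1}(0), f^{-1}(1)$ such that $-p(\distr_0) > 2 p(\distr_1)$ for every degree-$\le T$ polynomial $p$ computing $f$. First I would prove the pointwise analog of Lemma~\ref{lm:help-1}: for any such $p$, there exist $x, y$ with $-p(x) > 2 p(y)$. Indeed, if $-p(x) \le 2 p(y)$ for all $x \in f^{-1}(0), y \in f^{-1}(1)$, then with $M := \min_{y \in f^{-1}(1)} p(y) \ge 1$, the polynomial $q := (p + 2M)/(3M)$ has degree $\le T$ and satisfies $q(x) \in [0, 2/3]$ on $f^{-1}(0)$ and $q(x) \ge 1$ on $f^{-1}(1)$, contradicting $\dega^{2/3}(f) > T$. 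A small strengthening of the same construction (take $q := (p + 2M + 3)/(3M + 3)$) shows even more: the weaker assumption $\max_{(x,y)} -p(x) - 2 p(y) \le 3$ would already produce a degree-$\le T$ one-sided $2/3$-approximation. Hence we actually have the uniform gap $\max_{(x,y)} -p(x) - 2 p(y) > 3$ over the entire (unbounded) set $S$ of degree-$\le T$ polynomials computing $f$.

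The upgrade from pointwise to distribution is then by Sion's minimax, applied with the \emph{distribution} side as the compact one. Let $L(\mu, p) := \Ex_{(x,y) \sim \mu}[-p(x) - 2 p(y)]$ on $\Delta(f^{-1}(0) \times f^{-1}(1)) \times S$. Since $L$ is bilinear and $\Delta(f^{-1}(0) \times f^{-1}(1))$ is compact convex, Sion yields
$$
\max_\mu \inf_{p \in S} L(\mu, p) \;=\; \inf_{p \in S} \max_\mu L(\mu, p) \;=\; \inf_{p \in S} \max_{(x,y)} \bigl(-p(x) - 2 p(y)\bigr) \;\ge\; 3.
$$
The outer max is attained on the compact set (the inner inf is upper semi-continuous in $\mu$), producing a $\mu^*$ with $L(\mu^*, p) \ge 3 > 0$ for every $p \in S$, and the marginals $\distr_0, \distr_1$ of $\mu^*$ are the desired base-case distributions. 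Plugging these into the three-step hybrid of Lemma~\ref{lm:dist-exist-mdega} exactly as before yields $-p(\distr_0^d) > 2^{2^d} \cdot p(\distr_1^d)$. Since $p$ computes $\Ada_{f,d}$ and $\distr_1^d$ is supported on $\Ada_{f,d}^{-1}(1)$, we have $p(\distr_1^d) \ge 1$, so $-p(\distr_0^d) > 2^{2^d}$, which forces some input $x$ with $p(x) < -2^{2^d}$ and hence $\|p\|_{\infty} > 2^{2^d}$. The main conceptual obstacle is the non-compactness of $S$ in the base case, which we handle by the uniform-gap strengthening together with keeping the distributions (not the polynomials) as the compact factor in the minimax.
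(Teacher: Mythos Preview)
Your proposal is correct and follows exactly the route the paper sketches: re-prove the base case (the analogue of Lemma~\ref{lm:single-dist-exist-mdega}) under the hypothesis $\dega^{2/3}(f)>T$, then run the three-step induction of Lemma~\ref{lm:dist-exist-mdega} verbatim. You are in fact more careful than the paper on the one nontrivial point it glosses over: the paper's proof of Lemma~\ref{lm:single-dist-exist-mdega} invokes the minimax theorem using compactness of the polynomial set (granted there by the weight bound $\weight(p)\le 2^T$), and its proof of Lemma~\ref{lm:www} merely says ``nearly the same proof'' without noting that this compactness is lost once the weight bound is dropped; your uniform-gap strengthening (showing $\max_{(x,y)}\bigl(-p(x)-2p(y)\bigr)>3$ rather than merely $>0$) together with Sion's theorem applied with the \emph{distribution} factor as the compact side is a clean and correct way to close that gap.
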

\begin{proof}
Using nearly the same proof for Lemma~\ref{lm:single-dist-exist-mdega}, we can show that for a partial function $f$, if $\dega^{2/3}(f) > T$, there exist two distributions $\distr_0$ and $\distr_1$ supported on $f^{-1}(0)$ and $f^{-1}(1)$ respectively, such that
$
- p(\distr_0) > 2 \cdot p(\distr_1)
$
for all degree-$T$ polynomial $p$ computing $f$. Then we can proceed exactly as in the proof for Lemma~\ref{lm:dist-exist-mdega} to get the desired distributions.
\end{proof}

Finally, we are ready to prove Theorem~\ref{theo:ampl}.

\begin{proofof}{Theorem~\ref{theo:ampl}}
	Let $F := \Ada_{f,d}$. Suppose otherwise  $\adeg_{\epsilon}(F) \le T$ for $\epsilon = 0.5-2^{-2^d+1}$. Then there exists a polynomial $p$ such that $\|p\|_{\infty} \le 1 +\epsilon$, $p(x) \le 0.5-2^{-2^d+1}$ when $F(x) = 0$, and $p(x) \ge 0.5+2^{-2^d+1}$ when $F(x) = 1$.
	
	Then we define polynomial $q(x) := (p(x)-0.5) \cdot 2^{2^d-1} $. It is easy to see $q(x)$ computes $F$. Also, we have $\|q\|_{\infty} \le (\|p\|_{\infty} + 0.5) \cdot 2^{2^d-1} < 2^{2^d}$, which contradicts Lemma~\ref{lm:www}, and this completes the proof.
\end{proofof}

	\section{Acknowledgment}
	
	I would like to thank Scott Aaronson, Adam Bouland, Dhiraj Holden and Prashant Vasudevan for several helpful discussions during this work, Ruosong Wang for many comments on an early draft of this paper, Justin Thaler for the suggestion on the application in hardness amplification for polynomial approximation, and Mika G{\"{o}}{\"{o}}s and Thomas Watson for pointing out an issue in the proof of Theorem~\ref{theo:hard-adapt-postbpp}.
	
	\bibliographystyle{plainurl}
	\bibliography{team} 
	
	\newpage
	\appendix
	\section{Preliminaries for the appendix}

\subsection{Conical juntas}
We first introduce the definition for conical juntas (cf.~\cite{goos2015rectangles}), which will be used frequently in this appendix. 

Let $x=x_{1}\ldots x_{M}\in\left\{  0,1\right\}  ^{M}$ be a string. \ Then a
\textit{literal} is a term of the form $x_i$ or $1-x_i$, and a
$k$-\textit{term }is a product of $k$ literals (each involving a different
$x_{i}$), which is $1$ if the literals all take on prescribed values and $0$ otherwise. 

\begin{defi}
	A $T$-conical junta $h$ is a non-negative linear combination of $T$-terms, i.e., $h(x) := \sum_{i} \alpha_i \cdot C_i(x)$, where for each $i$ we have $\alpha_i \ge 0$ and $C_i$ is a $T$-term. We also define $\weight(h) := \sum_i \alpha_i$.
\end{defi}

The following lemma shows that conical juntas are more powerful than randomized decision trees.

\begin{lem}[Essentially Theorem 15 in~\cite{buhrman2002complexity}]\label{lm:conical-junta}
	The acceptance probability of a $T$-query randomized decision tree $\dtree$ can be represented by a $T$-conical junta $h$ with $\weight(h) \le 2^T$.
\end{lem}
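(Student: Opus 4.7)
The plan is to reduce to the deterministic case via Yao's principle and then analyze a single deterministic decision tree leaf by leaf. First I would write $\dtree$ as a convex combination $\sum_i \mu_i \dtree_i$ of deterministic decision trees, each of depth at most $T$; this is precisely what the hypothesis $\mathcal{C}(\dtree) \le T$ gives, since by definition every tree in the support of $\dtree$ has height at most $T$. Because $\Pr[\dtree(x)=1] = \sum_i \mu_i \dtree_i(x)$ and the class of conical juntas is closed under non-negative linear combinations, it suffices to exhibit, for each deterministic $\dtree_i$, a $T$-conical junta $h_i$ with $h_i(x) = \dtree_i(x)$ and $\weight(h_i) \le 2^T$. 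Then $h := \sum_i \mu_i h_i$ is a $T$-conical junta with $\weight(h) \le \sum_i \mu_i \cdot 2^T = 2^T$.

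For a fixed deterministic $\dtree_i$ of depth at most $T$, each leaf $\ell$ is reached on exactly those inputs satisfying the conjunction $C_\ell$ of literals queried along the root-to-leaf path. These conjunctions are pairwise disjoint indicators whose supports partition $\{0,1\}^M$, so $\dtree_i(x) = \sum_{\ell \in L_i^+} C_\ell(x)$, where $L_i^+$ denotes the set of accepting leaves. Each $C_\ell$ is a product of at most $T$ literals; if the convention for $T$-terms insists on exactly $T$ literals, I would pad the tree itself so that every leaf sits at depth exactly $T$ by adding dummy query nodes along each short path, turning each short $C_\ell$ into a sum over all $2^{T - \operatorname{depth}(\ell)}$ settings of the added variables. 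This replaces $\dtree_i$ by an equivalent depth-$T$ tree with at most $2^T$ leaves, each contributing a unit-coefficient $T$-term if accepting.

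The only remaining point is that the total weight stays bounded by $2^T$. A simple induction (or the identity $\sum_{\ell \in L_i} 2^{T-\operatorname{depth}(\ell)} = 2^T$ for the leaf set $L_i$ of any tree of depth $\le T$) shows that the padded tree has exactly $2^T$ leaves, so the number of accepting ones is at most $2^T$ and hence $\weight(h_i) \le 2^T$. Combined with the convex combination step, this yields the desired $T$-conical junta $h$ representing $\Pr[\dtree(x) = 1]$ with $\weight(h) \le 2^T$. There is no substantive obstacle; the only mildly delicate point is the padding bookkeeping, which is needed solely to conform to the ``exactly $T$ literals'' convention and is handled by the leaf-counting identity above.
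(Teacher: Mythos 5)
Your proof is correct, and it is the standard argument underlying the cited result: the paper itself gives no proof of this lemma, only the citation to Theorem~15 of the Buhrman--de~Wolf survey, whose proof is exactly your decomposition of the randomized tree into deterministic trees and of each deterministic tree into its accepting leaf conjunctions, with the weight bound coming from the at most $2^T$ leaves (the padding to exactly $T$ literals is the same routine bookkeeping). So your proposal matches the intended proof in essence; nothing is missing.
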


\subsection{Complexity classes and their query complexity analogues}

We introduce the complexity classes: \SBP, \SBQP, $\PostBPP$ ($\BPPPATH$) here, and define their analogues in query complexity along the way.

\subsubsection{$\SBP$ and $\SBPdt$}

Now we recall the definition of $\SBP$, there are several equivalent definitions for $\SBP$ in \cite{bohler2006error} (see Proposition 21), we use the most convenient one here.

\begin{defi}
	$\SBP$ (defined by B{\"o}hler, Gla{\ss}er and Meister~\cite{bohler2006error}) is the class of languages $L\subseteq\left\{
	0,1\right\}  ^{\ast}$\ for which there exists a $\mathsf{BPP}$\ machine $M$ and a polynomial $p$, such that for all inputs $x$:
	
	\begin{enumerate}
		\item[(i)] $x\in L\Longrightarrow\Pr\left[  M\left(  x\right)  \text{ accepts}\right]  \geq 2^{-p(|x|)}$.
		
		\item[(ii)] $x\notin L\Longrightarrow\Pr\left[  M\left(  x\right)  \text{
			accepts}\right]  < 2^{-p(|x|)-1}$.
	\end{enumerate}
\end{defi}

Then we define the query complexity analogue of $\SBP$ in the standard way.

\begin{defi}\label{defi:SBPdt}
	Let $f:D \to \{0,1\}$ with $D \subseteq \{0,1\}^M$ be a partial function. We say a randomized decision tree $\dtree$ $\SBP$-computes $f$ if
	$$
	\Pr[\dtree(x) = 1] > 2 \cdot \Pr[\dtree(y) = 1]
	$$
	
	for all $x \in f^{-1}(1)$ and $y \in f^{-1}(0)$. 
	
	We define $\SBPdt(f)$ as the minimum of $\mathcal{C}(\dtree)$ over all $\dtree$ $\SBP$-computing $f$.
	
	And we simply let $\coSBPdt(f) := \SBPdt(\neg f)$.
\end{defi}

It may seem strange at first that there is no $\log_2(1/\alpha)$ term in our definition of $\SBPdt(f)$. Actually, one can show that having the $\log_2(1/\alpha)$ term or not would not change the partial function class $\SBPdt$: the following lemma shows that whenever we have a randomized decision tree $\mathcal{T}$ $\SBP$-computing a function $f$, $\mathcal{T}$ can be made to $\SBP$-compute $f$ with a reasonable probability gap.

\begin{lem}[Proposition 33 in~\cite{goos2015rectangles}]
	Let $f:D \to \{0,1\}$ with $D \subseteq \{0,1\}^M$ be a partial function. Suppose $d=\SBPdt(f)$. Then there is a randomized decision tree $\dtree$ $\SBP$-computing $f$ and a real number $\alpha$, such that
	$$
	\Pr[\dtree(x) = 1] > 2 \cdot \alpha \text{ and } \Pr[\dtree(y) = 1] \le \alpha \text{ and } \alpha \ge 2^{-(d+1)} \binom{n}{d}^{-1}
	$$
	for all $x \in f^{-1}(1)$, $y \in f^{-1}(0)$.
\end{lem}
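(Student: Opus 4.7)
The plan is to reformulate the amplification via the conical-junta representation of randomized decision trees, set up a linear program capturing the desired guarantee, and extract the quantitative bound from the finite combinatorics of $d$-conjunctions on $n$ variables.

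First, by Lemma~\ref{lm:conical-junta} I would translate the depth-$d$ randomized tree that witnesses $\SBPdt(f) = d$ into a $d$-conical junta $h_0(z) = \sum_C p_C\,C(z)$ of total weight $\weight(h_0) = \sum_C p_C \le 2^d$; the $\SBP$ condition transfers as $h_0(x) > 2\,h_0(y)$ for every $x \in f^{-1}(1)$ and $y \in f^{-1}(0)$. After rescaling by $1/\weight(h_0)$ I may assume $\sum_C p_C \le 1$, so the junta describes a legitimate randomized decision tree of depth at most $d$: sample $C$ with probability $p_C$, reject with the residual probability, and output the value of $C$ on the input.

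Next, I would consider the linear program whose variables are a coefficient $q_C \ge 0$ for each $d$-term $C$ together with a slack variable $\alpha \ge 0$:
\begin{align*}
\text{maximize} \quad & \alpha,\\
\text{subject to} \quad & \sum_C q_C\, C(x) \ge 2\alpha && \forall\, x \in f^{-1}(1),\\
& \sum_C q_C\, C(y) \le \alpha && \forall\, y \in f^{-1}(0),\\
& \sum_C q_C \le 1, \quad q_C \ge 0.
\end{align*}
The rescaled $h_0$ from the previous step is a feasible solution with strictly positive $\alpha$, and the feasible region is compact, so a vertex optimum $(q^\star,\alpha^\star)$ with $\alpha^\star > 0$ exists. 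The target decision tree $\dtree$ is then defined by sampling $C$ with probability $q^\star_C$, rejecting with probability $1 - \sum_C q^\star_C$, and otherwise evaluating the sampled $C$ using the at most $d$ queries to the variables it fixes; this tree has acceptance probability $\sum_C q^\star_C C(z)$ and therefore satisfies the claimed inequalities with $\alpha = \alpha^\star$.

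The main obstacle is the quantitative bound $\alpha^\star \ge 2^{-(d+1)}\binom{n}{d}^{-1}$. The heuristic is that the total number of distinct $d$-terms on $n$ variables equals $N := 2^d\binom{n}{d}$, and this scale governs the minimum positive value any coordinate of a vertex can take, since the LP's constraint matrix has $\{-2,-1,0,1\}$ entries. To finish I would either (a) run a Cramer's-rule/Hadamard-type analysis of the $(N+1)\times(N+1)$ active sub-matrix at a vertex to bound the denominator of $\alpha^\star$ by $2N$, or (b) pass to the dual program, which by strong duality has the same optimum $\lambda^\star = \alpha^\star$ and explicitly asks to lower bound $\max_C\bigl[\sum_x \mu_x C(x) - \sum_y \nu_y C(y)\bigr]$ subject to $\sum_y \nu_y \ge 1 + 2\sum_x \mu_x$; here the $\{0,1\}$-structure together with the fact that the maximum is over the $N$ possible $d$-terms forces $\lambda^\star \ge 1/(2N)$. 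Route (b) is likely cleaner, because it puts the combinatorial count $N$ directly in contact with the normalization appearing in the dual.
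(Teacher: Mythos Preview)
The paper does not actually prove this lemma: it is quoted verbatim from G\"o\"os et al.\ (Proposition~33 in \cite{goos2015rectangles}) and invoked only to justify omitting a $\log(1/\alpha)$ term from the definition of $\SBPdt$. So there is no ``paper's own proof'' to match; I can only comment on whether your sketch stands on its own.

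Your LP formulation is correct and is the natural starting point. The reduction of the randomized tree to a weight-$\le 1$ conical junta via Lemma~\ref{lm:conical-junta} is fine, and a vertex optimum of your LP does yield the desired decision tree. The gap is entirely in the quantitative step, and neither of your two routes closes it as written.

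For route (a): at a vertex with support $S\subseteq\{d\text{-terms}\}$ of size $m$, the active system is $(m{+}1)\times(m{+}1)$ with entries in $\{-2,-1,0,1\}$. Cramer's rule gives $\alpha^\star=\det(M')/\det(M)$ with $M'$ a $\{0,1\}$-matrix, but $|\det(M)|$ is only bounded by Hadamard as roughly $(m{+}4)^{(m+1)/2}$, not by $2N$. Moreover $m$ is bounded by the number of \emph{input} constraints (one per $x\in f^{-1}(1)$ and $y\in f^{-1}(0)$), which can be exponential in $n$; it is not a priori bounded by $N=2^d\binom{n}{d}$. So ``denominator $\le 2N$'' is not what Cramer/Hadamard delivers.

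For route (b): your dual constraint has the wrong sign. The correct dual of your primal is
\[
\min\ \lambda \quad\text{s.t.}\quad \lambda \ge \sum_x \mu_x C(x)-\sum_y \nu_y C(y)\ \ \forall C,\qquad 2\sum_x\mu_x \ge 1+\sum_y\nu_y,\qquad \mu,\nu,\lambda\ge 0.
\]
With this constraint, the obvious averaging over all $N$ terms gives
\[
\frac{1}{N}\sum_C\Bigl[\sum_x\mu_x C(x)-\sum_y\nu_y C(y)\Bigr]=2^{-d}\Bigl(\sum_x\mu_x-\sum_y\nu_y\Bigr),
\]
since each fixed $z$ satisfies exactly $\binom{n}{d}$ of the $d$-terms. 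But $2\sum\mu\ge 1+\sum\nu$ only yields $\sum\mu-\sum\nu\ge 1-\sum\mu$, which can be negative, so the averaging does not force $\lambda\ge 1/(2N)$. Some additional structural argument (beyond ``there are $N$ terms and the entries are $\{0,1\}$'') is needed here; your sketch does not supply it.

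In short: the skeleton is right, but the heart of the lemma is exactly the $1/(2N)$ bound, and both of your proposed finishes are either too optimistic (route (a)) or incomplete with a sign slip (route (b)). You would need to consult the G\"o\"os et al.\ argument, which uses a more careful analysis, to fill this in.
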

\subsubsection{$\PostBPP$ and $\PostBPPdt$}

In this subsection we review the definition of $\PostBPP$, and define its analogue in query complexity.

Roughly speaking, $\PostBPP$ consists of the computational problems can be solved in probabilistically polynomial time, given the ability to {\em postselect} on an event (which may happen with a very small probability). Formally:

\begin{defi}
	$\PostBPP$ (defined by Han, Hemaspaandra, and Thierauf \cite{han1997threshold}\footnote{In the original paper it is called $\BPPPATH$.}) is the class of languages $L\subseteq\left\{
	0,1\right\}  ^{\ast}$\ for which there exists a $\mathsf{BPP}$\ machine $M$,
	which can either \textquotedblleft succeed\textquotedblright\ or
	\textquotedblleft fail\textquotedblright\ and conditioned on succeeding either
	\textquotedblleft accept\textquotedblright\ or \textquotedblleft
	reject,\textquotedblright\ such that for all inputs $x$:
	
	\begin{enumerate}
		\item[(i)] $\Pr\left[  M\left(  x\right)  \text{ succeeds}\right]  >0$.
		
		\item[(ii)] $x\in L\Longrightarrow\Pr\left[  M\left(  x\right)  \text{ accepts
		}|~M\left(  x\right)  \text{ succeeds}\right]  \geq\frac{2}{3}$.
		
		\item[(iii)] $x\notin L\Longrightarrow\Pr\left[  M\left(  x\right)  \text{
			accepts }|~M\left(  x\right)  \text{ succeeds}\right]  \leq\frac{1}{3}$.
	\end{enumerate}
\end{defi}

$\PostBPPdt(f)$ can be defined similarly.

\begin{defi}\label{defi:PostBPPdt}
	Now we allow a randomized decision tree to output a failure mark $*$ besides $0$ and $1$.
	
	Let $f:D \to \{0,1\}$ with $D \subseteq \{0,1\}^M$ be a partial function. We say a randomized decision tree $\dtree$ $\PostBPP$-computes $f$ if
	
	$$
	\Pr[\dtree(x) = 1] \ge 2 \cdot \Pr[\dtree(x) = 0] \text{ and } \Pr[\dtree(y) = 0] \ge 2 \cdot \Pr[\dtree(y) = 1]
	$$
	
	for all $x \in f^{-1}(1)$ and $y \in f^{-1}(0)$. 
	
	Fix a $\dtree$ $\PostBPP$-computing $f$, let $\alpha$ be the maximum real number such that
	$$
	\Pr[\dtree(x) \ne *] \ge \alpha
	$$
	
	for all $x \in D$.
	
	Then we define $\PostBPPdt(\dtree;f) = C(\dtree) + \log_2(1/\alpha)$ for $\dtree$ $\PostBPP$-computing $f$, and $\PostBPPdt(f)$ as the minimum of $\PostBPPdt(\dtree;f)$ over all $\dtree$ $\PostBPP$-computing $f$.
\end{defi}

\subsubsection{$\SBQP$ and $\SBQPdt$}
In this subsection we review the definition of $\SBQP$, and define its analogue in query complexity. Roughly speaking, $\SBQP$ is just the quantum analogue of $\SBP$.

\begin{defi}
	$\SBQP$ (defined by Kuperberg~\cite{kuperberg2009hard}) is the class of languages $L\subseteq\left\{
	0,1\right\}  ^{\ast}$\ for which there exists a polynomial-time quantum algorithm $M$ and a polynomial $p$, such that for all inputs $x$:
	
	\begin{enumerate}
		\item[(i)] $x\in L\Longrightarrow\Pr\left[  M\left(  x\right)  \text{ accepts}\right]  \geq 2^{-p(|x|)}$.
		
		\item[(ii)] $x\notin L\Longrightarrow\Pr\left[  M\left(  x\right)  \text{
			accepts}\right]  \leq 2^{-p(|x|)-1}$.
	\end{enumerate}
\end{defi}

Then we define its query complexity analogue.

\begin{defi}\label{defi:SBQPdt}
	Let $f:D \to \{0,1\}$ with $D \subseteq \{0,1\}^M$ be a partial function. We say a quantum query algorithm $\qalg$ $\SBQP$-computes $f$ if
	
	$$
	\Pr[\qalg(x) = 1] \ge 2 \cdot \Pr[\qalg(y) = 1] \text{ and } 
	\Pr[\qalg(x) = 1] > 0
	$$
	
	for all $x \in f^{-1}(1)$ and $y \in f^{-1}(0)$. 
	
	Fix a $\qalg$ $\SBQP$-computing $f$, let $\alpha$ be the maximum real number such that
	
	$$
	\Pr[\qalg(x) = 1] \ge 2\alpha \text{ and } \Pr[\qalg(y) = 1] \le \alpha
	$$
	
	for all $x \in f^{-1}(1)$ and $y \in f^{-1}(0)$. 
	
	Then we define $\SBQPdt(\qalg;f) = C(\qalg) + \log_2(1/\alpha)$ for $\qalg$ $\SBQP$-computing $f$ and $\SBQPdt(f)$ as the minimum of $\SBQPdt(\qalg;f)$ over all $\qalg$ $\SBQP$-computing $f$.
	
	And we simply let $\coSBQPdt(f) := \SBQPdt(\neg f)$.
\end{defi}

\subsubsection{$\PostBQP$ and $\PostBQPdt$}
$\PostBQP$ is defined similarly as $\PostBPP$, just replaced the $\BPP$ machine by a polynomial time quantum algorithm. And $\PostBQPdt(f)$ is defined in the same way as $\PostBPP(f)$ except for changing the randomized decision tree $\dtree$ to a quantum query algorithm $\qalg$.

\section{Missing proofs in Section~\ref{sec:pre}}

\begin{proofof}{Theorem~\ref{theo:eq-AZPP-mdega}}
	For the first claim, suppose $\AzPPdt(f)=d$, then there exists a $T$-query randomized decision tree $\dtree$ and a constant $\alpha>0$, such that
	\begin{itemize}
		\item $\Pr[\dtree(x)=1] \ge 1/2$ for all $x \in D$.
		\item $\Pr[\dtree(x)=1] - 1/2 \ge 2 \alpha$ and $\Pr[\dtree(y)=1] - 1/2 \le \alpha$ for all $x \in f^{-1}(1)$ and $y \in f^{-1}(0)$.
		\item $T$ + $\log_2(1/\alpha) = d$.
	\end{itemize}
	
	Let $h$ be the conical junta representing the acceptance probability of $\dtree$, we have $\weight(h) \le 2^T$ by Lemma~\ref{lm:conical-junta}.
	
	By expanding every $T$-term into $2^T$ monomials, we can further represent $h$ by a polynomial $p_h$ with $\weight(p_h) \le 2^{2T}$.
	
	Now, we define the polynomial 
	$$p(x) := \frac{1}{2\alpha} \cdot (p_h(x)-1/2).$$ 
	
	We claim that $p$ one-sided approximates $f$. Indeed, when $f(x)=0$, we have $p_h(x) \in [1/2,1/2+\alpha]$, hence $p(x) \in [0,1/2]$; and when $f(x) = 1$, we have $p_h(x) \ge 1/2+2\alpha$, hence $p(x) \ge 1$.
	
	Moreover, 
	$$\weight(p) \le (\weight(p_h) + 1/2) \cdot \frac{1}{2\alpha} \le 2^{2T}/\alpha,$$
	
	the last inequality holds as $\alpha < 1/4$.
	
	Hence 
	\begin{align*}
	\qquad\qquad\qquad\qquad\mdega(f) &\le \max \{\deg(p),\log_2(\weight(p))\} \\&\le \max\{T, 2T + \log_2(1/\alpha)\} \le 2d = 2 \cdot \AzPPdt(f).
	\end{align*}
	
	For the second claim, suppose $\mdega(f)=d$, then there exists a $T$-degree polynomial $p$ one-sided approximating $f$ such that $T \le d$ and $\weight(p) \le 2^d$. 
	
	Let $p(x) = \sum_{i=1}^{m} a_i \cdot M_i(x)$ and $S = \weight(p) = \sum_{i=1} |a_i|$, such that for each $i$, $a_i \in \R$ and $M_i$ is a unit monomial (i.e., $M_i(x) := x_{i_1}x_{i_2}\cdots x_{i_k}$).
	
	Consider the following algorithm:
	
	\begin{itemize}
		\item Pick an integer $i \in [m]$ by selecting $j$ with probability $|a_j| / S$.
		\item Query all the variables involved in $M_i$ to calculate $M_i(x)$.
		\item If $M_i(x) = 1$, accept if $a_i > 0$ and reject otherwise.
		\item If $M_i(x) = 0$, accept with probability $1/2$.
	\end{itemize}
	
	Clearly, as $p$ is of degree $T$, the above algorithm can be implemented by a $T$-query randomized decision tree $\dtree$. 
	
	Now we analyze the acceptance probability of $\dtree$ on an input $x$. We can see 
	$$
	\Pr[\dtree(x) = 1] = \sum_{i} \frac{1+M_i(x) \cdot a_i/|a_i|}{2} \cdot \frac{|a_i|}{S} = \frac{1}{2} + p(x) \cdot \frac{1}{2S}.
	$$
	
	Which means, when $f(x) = 0$, we have $p(x) \in [0,1/2]$, hence $\frac{1}{2} \le \Pr[\dtree(x) = 1] \le \frac{1}{2} + \frac{1}{4S}$; and when $f(x) = 1$, we have $p(x) \ge 1$, therefore $\frac{1}{2} \le \Pr[\dtree(x) = 1] \le \frac{1}{2} + \frac{1}{2S}$. So we can take $\alpha = \frac{1}{4S}$ and we have
	$$\AzPPdt(f) \le \AzPPdt(f;\dtree) \le T +　\log_2(4S) = T + 2 + \log_2(\weight(p)) \le 2 \cdot \mdega(f) + 2.$$
	
	This completes the proof.
	
\end{proofof}

\section{Missing proofs in Section~\ref{sec:quantum}}

\begin{proofof}{Lemma~\ref{lm:help-1}}
	Suppose not, let $p$ be a degree-$T$ polynomial computing $f$, and satisfies $\weight(p) \le 2^{T}$ and $\max_{ x \in f^{-1}(0)} -p(x) \le 2 \cdot \min_{y \in f^{-1}(1)} p(y)$.
	
	Let $C = \max_{ x \in f^{-1}(0)} -p(x)$, consider the following polynomial 
	$$q(x) := \frac{2}{3}\cdot (p(x)/C + 1).$$ 
	We can see that when $f(x) = 0$, we have $p(x) \in [-C,-1]$, hence $q(x) \in [0, \frac{2}{3}]$; and when $f(x) = 1$, we have $p(x) \ge \frac{1}{2} \cdot C$, therefore $q(x) \ge 1$. Which means $q$ one-sided approximates $f$ with error constant $2/3$.
	
	Also, we have $\weight(q) \le \weight(p)\cdot \frac{2}{3C} + \frac{2}{3} \le \weight(p)$ as $C \ge 1$. So $\max\{\mathrm{deg}(q),\log_2(\weight(q))\} \le T$, contradiction to the fact that $\mdega^{2/3}(f) > T$.
\end{proofof}

\begin{proofof}{Lemma~\ref{lm:pps-cons}}
	By our assumption, there exists a $t$-query randomized decision tree $\dtree$ and a real number $\alpha > 0$ such that
	
	\begin{itemize}
		\item when $f(x) = 1$, $\Pr[\dtree(x) = 1] \ge \frac{1}{2} + \alpha$.
		\item when $f(x) = 0$, $\Pr[\dtree(x) = 1] \le \frac{1}{2} - \alpha$.
		\item $t + \log_2(1/\alpha) \le T$.
	\end{itemize}
	
	Let $h$ be the conical junta representing the accepting probability of $\dtree$. We have $\weight(h) \le 2^t$.
	
	By expanding every $t$-term into $2^t$ monomials, we can further represent $h$ by a polynomial $p_h$ with $\weight(p_h) \le 2^{2t}$. Now we define $p(x) := (p_h(x)-\frac{1}{2})/\alpha$. Clearly $p$ computes $f$.
	
	Moreover, $\weight(p) \le (2^{2t} + \frac{1}{2} ) \cdot (1/\alpha) \le 2^{2T}$, which completes the proof.
	
\end{proofof}
	\section{Proof for the classical case}\label{sec:classical}
In this section we prove Theorem~\ref{theo:ada-postbpp} and Theorem~\ref{theo:hard-adapt-postbpp}.

\subsection{$\SBPdt$ by conical juntas}

We first show when considering the $\SBPdt$, we can work with a {\em conical junta} instead of a randomized decision tree.

\begin{prop}\label{prop:equi-junta}
The definition of $\SBPdt(f)$ is unchanged if we replace the $T$-query randomized decision tree by a $T$-conical junta.
\end{prop}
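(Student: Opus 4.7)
The plan is to prove the equivalence by a bidirectional conversion that preserves the parameter $T$ exactly, so the two minima defining $\SBPdt(f)$ coincide, not merely agree up to constants.

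First I would handle the easy direction (decision tree $\Rightarrow$ conical junta) by invoking Lemma~\ref{lm:conical-junta} directly. If $\dtree$ is any $T$-query randomized decision tree $\SBP$-computing $f$, then $h(x) := \Pr[\dtree(x) = 1]$ is a $T$-conical junta, and the $\SBP$-computing inequality $\Pr[\dtree(x) = 1] > 2 \cdot \Pr[\dtree(y) = 1]$ on $f^{-1}(1) \times f^{-1}(0)$ translates verbatim into $h(x) > 2 h(y)$, which is exactly the junta formulation of the condition.

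For the reverse direction (conical junta $\Rightarrow$ decision tree), given $h = \sum_i \alpha_i C_i$ with $\alpha_i \ge 0$ and each $C_i$ a $T$-term, satisfying $h(x) > 2 h(y)$ on $f^{-1}(1) \times f^{-1}(0)$, I would build the natural sampling algorithm: set $S := \weight(h)$, sample an index $i$ with probability $\alpha_i / S$, query the (at most $T$) variables appearing in $C_i$, and output $1$ iff $C_i$ evaluates to $1$. A one-line calculation gives $\Pr[\dtree(x) = 1] = h(x)/S$, so the factor-$2$ separation carries over intact and clearly $\mathcal{C}(\dtree) \le T$.

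The argument is essentially routine, and I do not anticipate any real obstacle: the only mildly subtle point is the degenerate case $S = 0$, which cannot occur because the junta inequality $h(x) > 2 h(y) \ge 0$ on any $1$-input $x$ forces $h \not\equiv 0$ and hence $S > 0$. Both conversions preserve the separation inequality exactly, with no multiplicative loss and no additive $\log(1/\alpha)$ overhead, so one obtains literal equality of the two definitions of $\SBPdt(f)$. This is precisely why no $\log(1/\alpha)$ term appears in Definition~\ref{defi:SBPdt}: the $\SBP$ setting tolerates any arbitrarily small gap, and the junta reformulation makes this invariance transparent.
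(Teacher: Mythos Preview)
Your proposal is correct and matches the paper's proof essentially line for line: both directions use exactly the same conversions (Lemma~\ref{lm:conical-junta} for decision tree $\Rightarrow$ conical junta, and the ``sample a term with probability $\alpha_i/\weight(h)$ and evaluate it'' construction for the reverse). Your treatment is slightly more explicit about the degenerate $S=0$ case and about why no $\log(1/\alpha)$ term is incurred, but the argument is the same.
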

\begin{proof}
We are going to show the existence of a $T$-query randomized decision tree $\dtree$ $\SBP$-computing $f$ is equivalent to the existence of a $T$-conical junta $h$ $\SBP$-computing $f$.
	
Suppose there exists a $T$-query randomized decision tree $\dtree$ $\SBP$-computing $f$, then the acceptance probability of $\dtree$ can be presented as a $T$-conical junta by Lemma~\ref{lm:conical-junta}.

For the other direction, suppose there exists a $T$-conical junta $h$ $\SBP$-computing $f$, let $h(x) := \sum_{i} \alpha_i \cdot C_i(x)$. Consider the following algorithm: let $P = \sum_i \alpha_i$, we pick a random $T$-term by selecting $C_i$ with probability $\alpha_i/P$ and accept if $C_i$ evaluates to $1$ on the given input. It is not hard to see the above algorithm can be represented by a $T$-query randomized decision tree, and it $\SBP$-computes $f$.
\end{proof}

\subsection{A dual characterization for $\SBPdt$}

We first establish an equivalent dual condition of a function having large $\SBPdt$ complexity.

\begin{lem}\label{lm:equivalent-cond}
	Let $f:D \to \{0,1\}$ with $D \subseteq \{0,1\}^M$ be a partial function, $T$ be a positive integer, $\SBPdt(f) > T$ if and only if there exist two distributions $\distr_0$ and $\distr_1$ supported on $f^{-1}(0)$ and $f^{-1}(1)$ respectively, such that
	
	$$
	C(\distr_0) \ge \frac{1}{2} \cdot C(\distr_1) \text{ for any $T$-term $C$,}
	$$
	
	where $C(\distr_i)$ is defined as $\Ex_{x \sim \distr_i}[C(x)]$ for $i \in \{0,1\}$.
\end{lem}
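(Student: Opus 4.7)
\begin{proofsketch}
The plan is to use Proposition~\ref{prop:equi-junta} to replace randomized decision trees with $T$-conical juntas, and then to obtain the dual condition by an LP duality / minimax argument. After Proposition~\ref{prop:equi-junta}, the statement $\SBPdt(f) > T$ becomes: for every $T$-conical junta $h$, there exists a pair $(x,y) \in f^{-1}(1) \times f^{-1}(0)$ with $h(x) \le 2h(y)$, i.e., $2h(y) - h(x) \ge 0$.

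First I would dispatch the easy direction. Assume the distributions $\distr_0, \distr_1$ exist with $2 C(\distr_0) \ge C(\distr_1)$ for every $T$-term $C$. By linearity of expectation (and the fact that a $T$-conical junta is a non-negative combination of $T$-terms), the same inequality $2 h(\distr_0) \ge h(\distr_1)$ holds for every $T$-conical junta $h$. If some $T$-conical junta $h$ $\SBP$-computed $f$, then $h(x) > 2 h(y)$ strictly for every $x \in \mathrm{supp}(\distr_1)$ and every $y \in \mathrm{supp}(\distr_0)$, which forces $h(\distr_1) > 2 h(\distr_0)$ by taking the minimum on one side and the maximum on the other\textemdash a contradiction. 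Hence no such conical junta exists, so $\SBPdt(f) > T$.

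For the forward direction, I would set up the minimax problem
$$
\min_{h} \; \max_{(x,y) \in f^{-1}(1) \times f^{-1}(0)} \bigl[\, 2 h(y) - h(x) \,\bigr],
$$
where $h$ ranges over $T$-conical juntas of total weight equal to $1$. Scaling does not affect the sign of $2h(y) - h(x)$, so the normalization is WLOG, and the set of weight-$1$ $T$-conical juntas is a compact convex simplex whose vertices are the individual $T$-terms. The domain $f^{-1}(1) \times f^{-1}(0)$ is finite, and the set of distributions $\pi$ on it is compact and convex. The objective $\Ex_{(x,y) \sim \pi}[2h(y) - h(x)]$ is bilinear in $h$ and $\pi$, so Sion's minimax theorem applies and
$$
\min_{h} \max_{(x,y)} [2h(y)-h(x)] \;=\; \max_{\pi} \min_{h} \Ex_{(x,y)\sim\pi}[2h(y)-h(x)].
$$
Our hypothesis $\SBPdt(f) > T$ says the left-hand side is $\ge 0$.

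Therefore there exists a distribution $\pi$ on $f^{-1}(1) \times f^{-1}(0)$ with $\min_h \Ex[2h(y)-h(x)] \ge 0$. Since the inner minimum of a linear functional over the simplex of weight-$1$ $T$-conical juntas is attained at a vertex, this is equivalent to $2 C(\distr_0) - C(\distr_1) \ge 0$ for every $T$-term $C$, where $\distr_0, \distr_1$ are the marginals of $\pi$ on $f^{-1}(0), f^{-1}(1)$ respectively. This is exactly the desired conclusion. The main subtlety is handling the strict inequality in the definition of $\SBP$-computing versus the non-strict inequality produced by the minimax argument; I expect this to be only a cosmetic issue since strict separation by a conical junta can be perturbed to non-strict separation with a small weight margin, or equivalently, we can restate the conclusion on both sides with the same (non-strict) inequality without affecting the resulting partial-function class.
\end{proofsketch}
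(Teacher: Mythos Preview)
Your proposal is correct and follows essentially the same route as the paper: invoke Proposition~\ref{prop:equi-junta} to pass to conical juntas, then apply a minimax argument and reduce the inner optimum to single $T$-terms, reading off $\distr_0,\distr_1$ as the marginals of the optimal joint distribution. Your treatment is in fact a bit more careful than the paper's in two respects: you explicitly normalize to weight-$1$ juntas to secure compactness before invoking minimax, and you spell out the easy direction and the strict-vs-nonstrict issue, both of which the paper leaves implicit.
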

\begin{proof}
	let $\mathcal{H}_T$ be the set of all $T$-conical juntas on $\{0,1\}^M$, and $f^i:=f^{-1}(i)$ for $i \in \{0,1\}$, by Proposition~\ref{prop:equi-junta}, $\SBPdt(f) > T$ is equivalent to 
	
	$$
	\min_{h \in \mathcal{H}_T} \max_{(x,y) \in f^0\times f^1} \left(h(x) - \frac{1}{2} \cdot h(y) \right) \ge 0.
	$$
	
	Then by the minimax theorem, the above is again equivalent to
	
	$$
	\max_{\distr_{xy} \text{ on }f^0\times f^1} \min_{h \in \mathcal{H}_T}  \Ex_{(x,y) \sim \distr_{xy}}\left(h(x) - \frac{1}{2} \cdot h(y) \right) \ge 0.
	$$
	
	where $\distr_{xy}$ is a distribution on $f^0 \times f^1$.
	
	Observe that we can further take $\distr_{xy}$ to be a product distribution and we can assume $h$ is just a $T$-term. Putting everything together, $\SBPdt(f) > T$ is equivalent to
	
	$$
	\max_{\distr_0 \text{ on } f^0} \max_{\distr_1 \text{ on } f^1} \min_{C\text{ is a $T$-term}} \Ex_{x \sim \distr_0, y \sim \distr_1} \left(C(x) - \frac{1}{2} \cdot C(y) \right) \ge 0,
	$$
	where $\distr_i$ is a distribution on $f^i$ for $i \in \{0,1\}$. This completes the proof.
	
\end{proof}

\begin{rem}
	Another way to prove the above lemma is to use strong duality in linear programming directly. We feel that our proof by minimax theorem is conceptually cleaner.
\end{rem}

The following corollary follows from the definition.

\begin{cor}\label{cor:equivalent-cond-coSBP}
	Let $f:D \to \{0,1\}$ with $D \subseteq \{0,1\}^M$ be a partial function, $T$ be a positive integer, $\coSBPdt(f) > T$ if and only if there exist two distributions $\distr_0$ and $\distr_1$ supported on $f^{-1}(0)$ and $f^{-1}(1)$ respectively, such that
	
	$$
	C(\distr_1) \ge \frac{1}{2} \cdot C(\distr_0) \text{ for any $T$-term $C$.}
	$$
\end{cor}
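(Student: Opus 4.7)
The plan is to deduce the statement as a direct consequence of Lemma~\ref{lm:equivalent-cond}, exploiting the fact that $\coSBPdt(f)$ is defined to be $\SBPdt(\neg f)$. The entire content of the corollary is the observation that negating the target function swaps the roles of the $0$-inputs and the $1$-inputs, so the dual characterization from Lemma~\ref{lm:equivalent-cond} carries over verbatim once $\distr_0$ and $\distr_1$ are exchanged.

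Concretely, first I would invoke Lemma~\ref{lm:equivalent-cond} with the partial function $\neg f$ in place of $f$, noting that $(\neg f)^{-1}(0) = f^{-1}(1)$ and $(\neg f)^{-1}(1) = f^{-1}(0)$. The lemma then states that $\SBPdt(\neg f) > T$ holds if and only if there exist distributions $\distr_0'$ supported on $f^{-1}(1)$ and $\distr_1'$ supported on $f^{-1}(0)$ with $C(\distr_0') \ge \tfrac{1}{2} \cdot C(\distr_1')$ for every $T$-term $C$.

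Then I would just relabel: set $\distr_1 := \distr_0'$ (supported on $f^{-1}(1)$) and $\distr_0 := \distr_1'$ (supported on $f^{-1}(0)$). Under this renaming, the inequality from Lemma~\ref{lm:equivalent-cond} turns into $C(\distr_1) \ge \tfrac{1}{2} \cdot C(\distr_0)$ for every $T$-term $C$, which is precisely the condition stated in the corollary. Both directions of the ``iff'' are obtained simultaneously because Lemma~\ref{lm:equivalent-cond} itself is an ``iff'', and no new argument is needed. I do not anticipate any real obstacle; the content is purely bookkeeping about how negation acts on dual witnesses, which is why the statement is presented as a corollary rather than a standalone lemma.
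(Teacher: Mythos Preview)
Your proposal is correct and matches the paper's own treatment: the paper simply remarks that the corollary ``follows from the definition,'' meaning precisely that $\coSBPdt(f)=\SBPdt(\neg f)$ together with Lemma~\ref{lm:equivalent-cond} applied to $\neg f$ gives the result after swapping the roles of $\distr_0$ and $\distr_1$, exactly as you describe.
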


\subsection{Proof for Theorem~\ref{theo:ada-postbpp} and Theorem~\ref{theo:hard-adapt-postbpp}}

We first show Theorem~\ref{theo:hard-adapt-postbpp} implies Theorem~\ref{theo:ada-postbpp}.

\begin{proofof}{Theorem~\ref{theo:ada-postbpp}}
Suppose $\funcfami{f} \not\in \SBPdt$, the case that $\funcfami{f} \not\in \coSBPdt$ is similar.

Then there exists a sequence of function $\{f_i\}_{i=1}^{\infty} \subseteq \funcfami{f}$ such that $\SBPdt(f_i) > \log(\inputlen(f_i))^{i}$. Then we consider the partial function sequence $\{ \Ada_{f_i,\lceil\log(\inputlen(f_i)) \rceil} \}_{i=1}^{\infty} \subseteq \Ada_{\funcfami{f}}$.

By Theorem~\ref{theo:hard-adapt-postbpp}, we have 
$$
\PostBPPdt(\Ada_{f_i,\lceil\log(\inputlen(f_i)) \rceil}) > \min(\log(\inputlen(f_i))^{i}/5,(\inputlen(f_i)-1)/5).
$$

Note that $\inputlen(\Ada_{f_i,\lceil\log(\inputlen(f_i)) \rceil}) \le 2 \cdot \inputlen(f_i)^2$, we can see $\Ada_{\funcfami{f}} \notin \PostBPPdt$ due to the above sequence.
\end{proofof}
	
Now we are going to prove Theorem~\ref{theo:hard-adapt-postbpp}. We say a pair of conical juntas $a(x)$ and $r(x)$ computes a function $f$ if it satisfies the following two conditions.
	
\begin{itemize}
	\item When $f(x)=1$, $a(x) \ge 5\cdot r(x)$ and $a(x) \ge 1$.
	\item When $f(x)=0$, $r(x) \ge 5\cdot a(x)$ and $r(x) \ge 1$.
\end{itemize}

In order to lower bound the $\PostBPPdt$ complexity of some functions, we introduce some consequences of a function having low $\PostBPPdt$ complexity.
\begin{lem}\label{lm:postbpp-cons}
	Let $f:D \to \{0,1\}$ with $D \subseteq \{0,1\}^M$ be a partial function, $T$ be a positive integer. Suppose $\PostBPPdt(f) \le T$, then there exist two $5T$-conical juntas $a(x)$ and $r(x)$ such that
	
	\begin{itemize}
		\item The pair of $a(x)$ and $r(x)$ computes $f$.
		\item $\max_{x \in \{0,1\}^M} a(x) \le 2^{5T+1}$ and $\max_{x \in \{0,1\}^M} r(x) \le 2^{5T+1}$.
	\end{itemize}
\end{lem}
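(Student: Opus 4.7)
The plan is to start from the definition of $\PostBPPdt(f)\le T$ in Definition~\ref{defi:PostBPPdt}: there exists a randomized decision tree $\dtree$ using $t$ queries and a real number $\alpha>0$ satisfying $t+\log_2(1/\alpha)\le T$, the two-to-one probability gap between $\Pr[\dtree(x)=1]$ and $\Pr[\dtree(x)=0]$, and the postselection success bound $\Pr[\dtree(x)\neq *]\ge \alpha$ for every $x\in D$. By thresholding $\dtree$ on each of its non-$*$ outcomes and applying Lemma~\ref{lm:conical-junta}, I will represent
$$a_0(x):=\Pr[\dtree(x)=1],\qquad r_0(x):=\Pr[\dtree(x)=0]$$
each as a $t$-conical junta of weight at most $2^{t}$; these are the ``raw'' accept/reject juntas that I then intend to amplify.

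The key move is a cube-and-rescale step: set
$$a(x)\ :=\ c\cdot a_0(x)^{3},\qquad r(x)\ :=\ c\cdot r_0(x)^{3},\qquad c\ :=\ \bigl(3/(2\alpha)\bigr)^{3}.$$
The first thing to check is that cubing a $t$-conical junta yields a $3t$-conical junta: expanding $\bigl(\sum_{i}\alpha_{i}C_{i}\bigr)^{3}$ produces a non-negative combination of triple products $C_{i}C_{j}C_{k}$, and each such product is either identically $0$ (when two of its literals conflict on some variable) or a genuine $\le 3t$-term on the union of the involved variables. Consequently $a$ and $r$ are $3t$-conical juntas, hence in particular $5T$-conical juntas as required.

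Verifying the ``computes $f$'' conditions will then be mechanical. When $f(x)=1$, the two-to-one gap $a_0(x)\ge 2r_0(x)$ gives $a(x)\ge 8 r(x)\ge 5r(x)$, and combining $a_0(x)+r_0(x)\ge\alpha$ with $r_0(x)\le a_0(x)/2$ yields $a_0(x)\ge 2\alpha/3$, so $a(x)\ge c\cdot(2\alpha/3)^{3}=1$; the case $f(x)=0$ is symmetric in $a_0$ and $r_0$. For the max-value bound I will use $\max_{x}a(x)\le \weight(a)\le c\cdot\weight(a_0)^{3}\le c\cdot 2^{3t}$, and then plug in $\alpha\ge 2^{-(T-t)}$ to obtain $c\le (3/2)^{3}\cdot 2^{3(T-t)}<2^{3(T-t)+2}$, so that $\max_{x}a(x)<2^{3T+2}\le 2^{5T+1}$ (using $T\ge 1$), with the same bound for $r$.

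I expect the only subtle point to be this degree-preserving expansion of $a_0(x)^{3}$ into a bona-fide $3t$-conical junta. The thing to watch is that each triple $(i,j,k)$ contributes a non-negative coefficient times a product of literals over at most $3t$ distinct variables, and when two literals land on the same variable they either collapse (e.g.\ $x_\ell\cdot x_\ell=x_\ell$) or annihilate (e.g.\ $x_\ell\cdot(1-x_\ell)=0$), so no spurious negative coefficients ever appear in the expansion. Everything else is routine accounting; the cubing exponent $3$ is essentially forced, since we must push the $2{:}1$ gap past $5{:}1$, and the comfortable slack $3T+2$ vs.\ $5T+1$ shows we are not close to losing the bound.
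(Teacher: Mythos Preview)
Your argument is correct. It differs from the paper's proof in how the amplification is carried out: the paper amplifies at the level of the decision tree (run $\dtree$ five times and take a majority, then read off the accept/reject probabilities of the amplified tree as $5T$-conical juntas scaled by $2^{5T+1}$), whereas you amplify directly at the level of conical juntas by cubing $a_0$ and $r_0$ and rescaling. Your route is more elementary---there is no need to analyze what ``majority of five'' does to the $*$ outcome or to the postselection mass---and it even yields sharper parameters: you produce $3t$-conical juntas (hence $3T$-conical, comfortably inside $5T$) with $\max_x a(x) < 2^{3T+2}$ rather than $2^{5T+1}$. The paper's approach, on the other hand, stays closer to the operational picture of $\PostBPP$ and makes the connection to standard error-reduction more visible; but for the purposes of this lemma your direct polynomial amplification is the cleaner of the two.
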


\begin{proof}
	Amplifying the probability gap by taking the majority of $5$ independent runs, we get a randomized decision tree $\dtree$ such that
	\begin{itemize}
		\item $\PostBPPdt(\dtree;f) \le 5T$.
		\item 	$
		\Pr[\dtree(x) = 1] \ge 5 \cdot \Pr[\dtree(x) = 0] \text{ and } \Pr[\dtree(y) = 0] \ge 5 \cdot \Pr[\dtree(y) = 1]
		$ for all $x \in f^{-1}(1)$ and $y \in f^{-1}(0)$. 
	\end{itemize}
	
	Then we simply define $a(x)$ ($r(x)$) as $2^{5T+1}$ multiplies the acceptance (reject) probability of $\dtree$. $a(x)$ and $r(x)$ can be represented by $5T$-conical juntas by Lemma~\ref{lm:conical-junta}.
	
	Now we show $a(x)$ and $r(x)$ satisfy our conditions. The second condition follows directly from their definitions. 
	For the first condition, when $f(x)=1$, we have $a(x) \ge 5 \cdot r(x)$ by their definitions, and since $\Pr[\dtree(x) \in \{0,1\}] \ge 2^{-5T}$ for all $x \in D$, $a(x) \ge 2^{5T+1} \cdot \frac{5}{6} \cdot 2^{-5T} \ge 1$. The case when $f(x)=0$ can be verified in the same way, and this completes the proof.
\end{proof}
	
Our proof proceed by a similar fashion as in Section~\ref{sec:quantum}, it again relies on the following two key lemmas.

\begin{lem}\label{lm:dist-exist-SBP}
Let $f:D \to \{0,1\}$ with $D \subseteq \{0,1\}^M$ be a partial function with $\SBPdt(f) > T$. Then for each integer $d$, there exist two distributions $\distr_1^d$ and $\distr_0^d$ supported on $\Ada_{f,d}^{-1}(1)$ and $\Ada_{f,d}^{-1}(0)$ respectively, such that $r(\distr_0^d)> 2^{2^d} \cdot a(\distr_1^d)$ for any $T$-conical juntas $a(x)$ and $r(x)$ computing $\Ada_{f,d}$.
\end{lem}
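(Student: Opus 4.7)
The plan is to mirror the three-step hybrid interpolation from the proof of Lemma~\ref{lm:dist-exist-mdega}, adapted to the classical setting where we must track the pair of conical juntas $(a,r)$ rather than a single signed polynomial. The key new subtlety is that the roles of $a$ and $r$ swap in the middle transition, and this is where the argument differs substantively from the quantum case.

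For the base case $d=0$, I would set $\distr_0^0=\distr_0$ and $\distr_1^0=\distr_1$, with $\distr_0,\distr_1$ obtained from Lemma~\ref{lm:equivalent-cond}. Since $a$ is a $T$-conical junta and every $T$-term $C$ satisfies $C(\distr_1)\le 2 C(\distr_0)$, linearity gives $a(\distr_1)\le 2 a(\distr_0)$. Combining with $a(x)\le r(x)/5$ on $\support(\distr_0)\subseteq f^{-1}(0)$ yields $r(\distr_0^0)\ge (5/2) a(\distr_1^0) > 2 a(\distr_1^0)$; strictness survives even when $a(\distr_1^0)=0$ because $r(x)\ge 1$ on $\support(\distr_0^0)$ forces $r(\distr_0^0)\ge 1>0$.

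For $d\ge 1$, define $\distr_i^d := \distr_i \times \distr_i^{d-1}\times \distr_i^{d-1}$ for $i\in\{0,1\}$ and interpolate through
\[
\mu_0=(\distr_0,\distr_0^{d-1},\distr_0^{d-1}),\;\mu_1=(\distr_0,\distr_1^{d-1},\distr_0^{d-1}),\;\mu_2=(\distr_1,\distr_1^{d-1},\distr_0^{d-1}),\;\mu_3=(\distr_1,\distr_1^{d-1},\distr_1^{d-1}),
\]
noting $\mu_0,\mu_2\subseteq \Ada_{f,d}^{-1}(0)$ and $\mu_1,\mu_3\subseteq \Ada_{f,d}^{-1}(1)$ by case analysis on $f(w)$. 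The essential fact is that fixing any two of the three coordinates of a $T$-conical junta on $(w,x,y)$ produces a $T$-conical junta on the remaining coordinate (a $T$-term factors as a product of sub-terms whose evaluations contribute non-negative multiplicative constants). Steps 1 and 3 are then direct applications of the inductive hypothesis: fixing $(W,Y)\in \support(\distr_0)\times \support(\distr_0^{d-1})$ makes $f_d(W,x,Y)=f_{d-1}(x)$, so $(a|_{W,Y},\,r|_{W,Y})$ is a $T$-conical junta pair in $x$ computing $f_{d-1}$; the IH yields $r|_{W,Y}(\distr_0^{d-1})>2^{2^{d-1}} a|_{W,Y}(\distr_1^{d-1})$, and integrating gives $r(\mu_0)>2^{2^{d-1}} a(\mu_1)$. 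The symmetric fixing of $(W,X)\in \support(\distr_1)\times \support(\distr_1^{d-1})$ yields Step 3: $r(\mu_2)>2^{2^{d-1}} a(\mu_3)$.

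The main obstacle is Step 2, where the role-swap appears. Fixing $(X,Y)\in \support(\distr_1^{d-1})\times \support(\distr_0^{d-1})$, one has $f_d(w,X,Y)=f_{d-1}(X)=1$ when $f(w)=0$ and $f_d(w,X,Y)=f_{d-1}(Y)=0$ when $f(w)=1$, so it is the \emph{swapped} pair $(r|_{X,Y},\,a|_{X,Y})$ (not $(a|_{X,Y},\,r|_{X,Y})$) that qualifies as a $T$-conical junta pair computing $f$. Applying the base case argument to this swapped pair gives $a|_{X,Y}(\distr_0)\ge (5/2) r|_{X,Y}(\distr_1)>2 r|_{X,Y}(\distr_1)$, which integrates to $a(\mu_1)>2 r(\mu_2)$. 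Because $a,r$ are non-negative and the positive scalars $2^{2^{d-1}}$ preserve strictness of inequalities between non-negative quantities, the three strict inequalities chain into
\[
r(\distr_0^d)\;>\;2^{2^{d-1}}\, a(\mu_1)\;>\;2^{2^{d-1}+1}\, r(\mu_2)\;>\;2^{2^d+1}\, a(\distr_1^d)\;\ge\;2^{2^d}\, a(\distr_1^d),
\]
completing the induction. This role-swap is precisely what the quantum proof of Lemma~\ref{lm:dist-exist-mdega} sidesteps by working with the single signed polynomial $-p(w,X,Y)$, where the sign flip is automatic.
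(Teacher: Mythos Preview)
Your argument is correct and follows the same three-step hybrid interpolation $(\mu_0 \to \mu_1 \to \mu_2 \to \mu_3)$ as the paper, with the same inductive construction $\distr_i^d = \distr_i \times \distr_i^{d-1} \times \distr_i^{d-1}$.

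The one substantive difference is in Step~2, and here your version is in fact the correct one. The paper asserts that $\mu_2 = (\distr_1,\distr_1^{d-1},\distr_0^{d-1})$ is supported on $f_d^{-1}(1)$ and therefore uses $a(\mu_2) \ge 5\,r(\mu_2)$; but since $f(w)=1$ forces $f_d(w,x,y)=f_{d-1}(y)=0$ on $\support(\mu_2)$, one actually has $\mu_2 \subseteq f_d^{-1}(0)$, so the paper's inequality points the wrong way. Your ``swapped pair'' observation --- that $(r|_{X,Y},a|_{X,Y})$ computes $f$ because $f_d(w,X,Y)=\neg f(w)$ for $(X,Y)\in\support(\distr_1^{d-1})\times\support(\distr_0^{d-1})$ --- is exactly the right repair, yielding $a(\mu_1)\ge \tfrac{5}{2}\,r(\mu_2)$. (Equivalently: use $a(\mu_1)\ge 5\,r(\mu_1)$ from $\mu_1\subseteq f_d^{-1}(1)$, then apply the conical-junta inequality to $r$ rather than $a$ to get $r(\mu_1)\ge \tfrac{1}{2} r(\mu_2)$.) The final chain $r(\mu_0) > 2^{2^{d-1}} a(\mu_1) > 2^{2^{d-1}+1} r(\mu_2) > 2^{2^d+1} a(\mu_3)$ then goes through, with strictness guaranteed because $a(\mu_1)\ge 1$ and $r(\mu_2)\ge 1$.
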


\begin{lem}\label{lm:dist-exist-coSBP}
Let $f:D \to \{0,1\}$ with $D \subseteq \{0,1\}^M$ be a partial function with $\coSBPdt(f) > T$. Then for each integer $d$, there exist two distributions $\distr_1^d$ and $\distr_0^d$ supported on $\Ada_{f,d}^{-1}(1)$ and $\Ada_{f,d}^{-1}(0)$ respectively, such that $a(\distr_1^d) > 2^{2^d} \cdot a(\distr_1^d)$ for any $T$-conical juntas $a(x)$ and $r(x)$ computing $\Ada_{f,d}$.
\end{lem}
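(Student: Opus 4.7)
The plan is to mirror the inductive construction in the proof of Lemma~\ref{lm:dist-exist-mdega}, but replace the polynomial-based tools with their conical-junta analogues. I would apply Corollary~\ref{cor:equivalent-cond-coSBP} to $f$ to produce base distributions $\distr_0, \distr_1$ supported on $f^{-1}(0), f^{-1}(1)$ such that $h(\distr_1) \ge \tfrac{1}{2} h(\distr_0)$ for every $T$-conical junta $h$ (the inequality lifts from $T$-terms to $T$-conical juntas by linearity, since the coefficients are non-negative). Then, exactly as in the quantum case, inductively set $\distr_0^d := \distr_0 \times \distr_0^{d-1} \times \distr_0^{d-1}$ and $\distr_1^d := \distr_1 \times \distr_1^{d-1} \times \distr_1^{d-1}$ on $D \times D_{d-1} \times D_{d-1}$, and claim that $a(\distr_1^d) > 2^{2^d} \cdot r(\distr_0^d)$ for every pair $(a,r)$ of $T$-conical juntas computing $\Ada_{f,d}$.

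Two small structural facts drive the argument: (a) partial assignments of a $T$-conical junta remain $T$-conical juntas in the free coordinates, and (b) restricting a pair $(a,r)$ that computes $\Ada_{f,d}$ according to the branch structure in the definition of $\Ada$ yields a pair that computes either $\Ada_{f,d-1}$ or the negation $\neg f$. The base case $d=0$ then follows by combining $a(x) \ge 5 r(x)$ for $x \in f^{-1}(1)$ (so $a(\distr_1) \ge 5 r(\distr_1)$) with $r(\distr_1) \ge \tfrac{1}{2} r(\distr_0)$, giving $a(\distr_1) \ge \tfrac{5}{2} r(\distr_0) > 2 \cdot r(\distr_0)$.

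For the inductive step, I would chain $a(\distr_1^d)$ down to $r(\distr_0^d)$ through the intermediate triples $(\distr_1, \distr_1^{d-1}, \distr_0^{d-1})$ and $(\distr_0, \distr_1^{d-1}, \distr_0^{d-1})$ in three moves. The two outer moves change the $y$-coordinate and then the $x$-coordinate: after fixing the other two variables appropriately, the branch structure ensures the restricted pair computes $\Ada_{f,d-1}$, so the induction hypothesis contributes a factor of $2^{2^{d-1}}$ per step. The middle move changes the $w$-coordinate from $\distr_1$ to $\distr_0$; here, fixing $X \sim \distr_1^{d-1}$ and $Y \sim \distr_0^{d-1}$ makes the restriction equal to $\neg f$, which contributes a factor of $2$ from the base case for $f$. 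Multiplying the three inequalities and using linearity of expectation throughout yields $a(\distr_1^d) > 2^{2^{d-1}} \cdot 2 \cdot 2^{2^{d-1}} \cdot r(\distr_0^d) = 2^{2^d + 1} r(\distr_0^d) > 2^{2^d} r(\distr_0^d)$.

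The one delicate point, and the main difference from the polynomial setting, is the middle step. Polynomials can be negated freely to swap the roles of the $1$- and $0$-inputs, but conical juntas cannot. The remedy is to notice that when the restricted function is $\neg f$, the $a$ and $r$ components literally swap their ``accept'' and ``reject'' roles with respect to $f$, so that the swapped pair $(r(\cdot,X,Y), a(\cdot,X,Y))$ computes $f$; feeding this swapped pair into the base-case inequality for $f$ (which is precisely what Corollary~\ref{cor:equivalent-cond-coSBP} supplies) produces exactly the inequality $r(\distr_1, X, Y) > 2 \cdot a(\distr_0, X, Y)$ needed to bridge the two outer steps. Apart from this reinterpretation, the proof is a line-by-line transposition of the inductive chaining from Lemma~\ref{lm:dist-exist-mdega} into the conical-junta world.
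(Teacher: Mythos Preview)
Your proposal is correct and matches the paper's intended argument: the paper does not write out a separate proof for Lemma~\ref{lm:dist-exist-coSBP} but simply states it is ``completely symmetric'' to the proof of Lemma~\ref{lm:dist-exist-SBP}, and your inductive construction (using Corollary~\ref{cor:equivalent-cond-coSBP} for the base distributions, product distributions for $\distr_i^d$, and the same three-step chaining through the intermediate triples) is exactly that symmetric argument. Your packaging of the middle step---observing that fixing $X\in f_{d-1}^{-1}(1)$, $Y\in f_{d-1}^{-1}(0)$ makes the restriction compute $\neg f$, so that the swapped pair $(r(\cdot,X,Y),a(\cdot,X,Y))$ computes $f$ and the $d=0$ inequality applies verbatim---is a clean way to handle the fact that conical juntas cannot be negated, and it is mathematically equivalent to the paper's two-part treatment (first apply the $5{:}1$ ratio at the appropriate triple, then the dual inequality from Corollary~\ref{cor:equivalent-cond-coSBP}).
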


Before proving Lemma~\ref{lm:dist-exist-SBP} and Lemma~\ref{lm:dist-exist-coSBP}, we show they imply Theorem~\ref{theo:hard-adapt-postbpp}.

\begin{proofof}{Theorem~\ref{theo:hard-adapt-postbpp}}
We first prove the case $\SBPdt(f) > T$. Suppose  $\PostBPPdt(\Ada_{f,d}) \le \min(T/5,(2^{d}-1)/5)$, by Lemma~\ref{lm:postbpp-cons}, there is a pair of $T$-conical juntas $a(x)$ and $r(x)$ computing $\Ada_{f,d}$ such that $\max_{x} r(x) \le 2^{2^d}$.

By Lemma~\ref{lm:dist-exist-SBP}, we have $r(\distr_0^d) > 2^{2^d} \cdot a(\distr_1^d) \ge 2^{2^d}$. Hence there must exist an $x$ such that $r(x) > 2^{2^d}$, contradiction.

Then case for $\coSBPdt(f) > T$ follows from exactly the same argument and Lemma~\ref{lm:dist-exist-coSBP}.
\end{proofof}

\subsection{Proof for Lemma~\ref{lm:dist-exist-SBP}}

Finally we prove Lemma~\ref{lm:dist-exist-SBP}, the proof for Lemma~\ref{lm:dist-exist-coSBP} is completely symmetric.

\begin{proofof}{Lemma~\ref{lm:dist-exist-SBP}}
	We are going to construct those distributions by an induction on $d$. Let $f_d := \Ada_{f,d}$ and $D_d$ be the domain of $f_d$.
	
	{\bf Construction of $\distr_0$ and $\distr_1$ from Lemma~\ref{lm:equivalent-cond}.}
	As $\SBPdt(f) > T$, by Lemma~\ref{lm:equivalent-cond}, there exist two distributions $\distr_0$ and $\distr_1$ supported on $f^{-1}(0)$ and $f^{-1}(1)$ respectively, such that
	$$
	h(\distr_0) \ge \frac{1}{2} \cdot h(\distr_1) \text{ for any $T$-conical junta $h$.}
	$$
	
	{\bf The base case: construction of $\distr_0^0$ and $\distr_1^0$.}	
	The base case $d=0$ is very simple. $f_0$ is just the $f$ itself. We let $\distr_1^0 = \distr_1$ and $\distr_0^0=\distr_0$. Then we have $a(\distr_0) \ge \frac{1}{2} \cdot a(\distr_1)$. Also,  $r(\distr_0)\ge 5 \cdot a(\distr_0)$ as $\distr_0$ is supported on $f^{-1}(0)$. Putting these facts together, we have $r(\distr_0) > 2 \cdot a(\distr_1)$, which completes the case for $d = 0$.
	
	{\bf Construction of $\distr_0^0$ and $\distr_1^0$ for $d >　0$.}	
	For $d>0$, suppose that we have already constructed distributions $\distr_0^{d-1}$ and $\distr_1^{d-1}$ on inputs of $f_{d-1}$. Decompose the input to $f_d$ as $(w,x,y) \in D \times D_{d-1} \times D_{d-1}$ as in the definition, we claim that 
	$$\distr_{0}^d = (\ddd{0}{0}{0})\footnote{recall that ($\ddd{0}{0}{0})$ is interpreted as the product distribution $\distr_0 \times \distr_{0}^{d-1} \times \distr_0^{d-1}$ on $ D \times D_{d-1} \times D_{d-1}$.}$$ and $$\distr_{1}^d = (\ddd{1}{1}{1})$$ satisfy our conditions.
	
	{\bf Analysis of $\distr_0^{d}$ and $\distr_1^{d}$.}
	Note that $D_{i}^d$ is supported on $f_d^{-1}(i)$ for $i \in \{0,1\}$ from the definition. For a conical juntas $h(w,x,y)$ on $D \times D_{d-1} \times D_{d-1}$, We let 
	$$h(\distr_w,\distr_x,\distr_y) := \Ex_{w\sim \distr_w, x \sim \distr_x, y\sim \distr_y}[h(w,x,y)]$$ for simplicity, where $\distr_w,\distr_x,\distr_y$ are distributions over $D,D_{d-1},D_{d-1}$ respectively. 
	
	We have to verify that
	$$
	r(\distr_0^d) = r(\distr_0,\distr_0^{d-1},\distr_0^{d-1}) > 2^{2^d} \cdot a(\distr_1,\distr_1^{d-1},\distr_1^{d-1}) = 2^{2^d} \cdot a(\distr_1^d).
	$$
	
	In the same way as in Section~\ref{sec:quantum}, we proceed by incrementally changing $(\ddd{0}{0}{0})$ into $(\ddd{1}{1}{1})$.

	{\bf Step 1: $(\ddd{0}{0}{0}) \Rightarrow (\ddd{0}{1}{0})$.}
	Consider the following two $T$-conical juntas on $x$: 
	$$
	a_L(x) := a(\distr_0,x,\distr_0^{d-1}) =\Ex_{w \sim \distr_0,y \sim \distr_0^{d-1}} [a(w,x,y)],
	$$ and
	$$
	r_L(x) := r(\distr_0,x,\distr_0^{d-1}) = \Ex_{w \sim \distr_0,y \sim \distr_0^{d-1}} [r(w,x,y)].
	$$
	
	Note that for any fixed $W \in \support(\distr_0)$ and any $Y \in \support(\distr_0^{d-1})$, by the definition of $f_d$, the $T$-conical junta pair $a(W,x,Y)$ and $r(W,x,Y)$ must compute $f_{d-1}$. It is not hard to verify by linearity, that their expectations $a_L(x)$ and $r_L(x)$ also compute $f_{d-1}$.
	
	Therefore, plugging in $\distr_0^{d-1}$ and $\distr_1^{d-1}$, we have $r_L(\distr_0^{d-1}) > 2^{2^{d-1}} \cdot a_L(\distr_0^{d-1})$, which means
	$$r(\distr_0,\distr_0^{d-1},\distr_0^{d-1}) > 2^{2^{d-1}} \cdot a(\distr_0,\distr_1^{d-1},\distr_0^{d-1}).$$
	
	{\bf Step 2: $(\ddd{0}{1}{0}) \Rightarrow (\ddd{1}{1}{0})$.} Then, for each fixed $X,Y$, the polynomial $a_M(w) := a(w,X,Y)$ is a $T$-conical junta, so we have $a_M(\distr_0) \ge \frac{1}{2} \cdot a_M(\distr_1)$. Hence by linearity, 
	$$
	a(\distr_0,\distr_1^{d-1},\distr_0^{d-1}) \ge \frac{1}{2} \cdot a(\distr_1,\distr_1^{d-1},\distr_0^{d-1}).
	$$
	
	Now, notice that $\distr_1$ is supported on $f^{-1}(1)$, and $\distr_0^{d-1}$ is supported on $f_{d-1}^{0}$, so $(\distr_1,\distr_1^{d-1},\distr_0^{d-1})$ is supported on $f_d^{-1}(1)$, therefore 
	$$
	a(\distr_1,\distr_1^{d-1},\distr_0^{d-1}) \ge 5 \cdot r(\distr_1,\distr_1^{d-1},\distr_0^{d-1}).
	$$ 
	
	{\bf Step 3: $(\ddd{1}{1}{0}) \Rightarrow (\ddd{1}{1}{1})$.} Finally, consider the polynomials on $y$ defined by
	$$
	a_{R}(y) := a(\distr_1,\distr_1^{d-1},y)
	\text{ and } 
	r_{R}(y) := r(\distr_1,\distr_1^{d-1},y).
	$$
	By the same augment as above, they are also a pair of $T$-conical juntas computing $f_{d-1}$, so plugging in $\distr_0^{d-1}$ and $\distr_1^{d-1}$ again, we have $r_{R}(\distr_0^{d-1}) > 2^{2^{d-1}} \cdot a_{R}(\distr_1^{d-1})$, which means 
	$$
	r(\distr_1,\distr_1^{d-1},\distr_0^{d-1}) > 2^{2^{d-1}} \cdot a(\distr_1,\distr_1^{d-1},\distr_1^{d-1}).
	$$
	
	Putting everything together, we have
	$$
	r(\distr_0,\distr_0^{d-1},\distr_0^{d-1}) > (2^{2^{d-1}} \cdot \frac{1}{2} \cdot 5 \cdot 2^{2^{d-1}}) \cdot a(\distr_1,\distr_1^{d-1},\distr_1^{d-1}) > 2^{2^d} \cdot a(\distr_1,\distr_1^{d-1},\distr_1^{d-1}).
	$$
	
	This completes the proof.
\end{proofof}

	\section{Formal proofs for the oracle separations}\label{sec:oracle}

\paragraph*{$\PTIME^{\NP^{\Oracle}} \not\subset \PP^{\Oracle}$}
We begin with a famous lower bound on $\dega(\AND_n)$ by Nisan and Szegedy.

\begin{theo}[\cite{nisan1994degree}]
$\dega(\AND_n) \ge \Omega(\sqrt{n})$.
\end{theo}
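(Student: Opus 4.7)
The plan is to adapt the classical Nisan--Szegedy argument for two-sided approximate degree to the one-sided setting. Both ingredients---Minsky--Papert symmetrization and a Markov-type univariate polynomial inequality---go through with only cosmetic changes.

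First I would symmetrize. Given a polynomial $p$ of degree $d$ one-sided $\tfrac{1}{2}$-approximating $\AND_n$, consider $q(x) := \Ex_{\sigma}[p(x_{\sigma(1)},\ldots,x_{\sigma(n)})]$, where the expectation is uniform over permutations $\sigma \in S_n$. Then $q$ has degree $\le d$ and still one-sided $\tfrac{1}{2}$-approximates $\AND_n$. Because $q$ is symmetric on the Boolean cube, the standard Minsky--Papert trick yields a univariate polynomial $Q$ of degree $\le d$ with $Q(|x|) = q(x)$, and the one-sided conditions translate to
\[
Q(k) \in [0,\tfrac{1}{2}] \text{ for } k = 0,1,\ldots,n-1, \qquad Q(n) \ge 1.
\]
Passing to $R(x) := 1 - 2Q(x)$ turns this into the condition that $R(k) \in [0,1]$ for $k = 0,\ldots,n-1$ and $R(n) \le -1$, i.e., a jump of size at least $1$ between the integer points $n-1$ and $n$.

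Second, I would lower bound the degree of any univariate $R$ satisfying these conditions. Since $R(n-1) \ge 0$ and $R(n) \le -1$, the mean value theorem gives a point $\xi \in (n-1,n)$ with $|R'(\xi)| \ge 1$. Markov's inequality on the interval $[0,n]$ yields $\|R'\|_\infty \le (2d^2/n)\,\|R\|_{[0,n]}$, so $\|R\|_{[0,n]} \ge n/(2d^2)$.

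The main obstacle is controlling $\|R\|_{[0,n]}$ from above: a priori $R(n)$ itself could be very negative and inflate the sup-norm. For this I would invoke the Coppersmith--Rivlin (equivalently Ehlich--Zeller) inequality, which states that a polynomial of degree $d$ bounded by $1$ on the integer grid $\{0,1,\ldots,n-1\}$ satisfies $\|R\|_{[0,n-1]} = O(1)$ provided $d \le c\sqrt{n}$ for a suitable absolute constant $c$. A short Markov bootstrap then extends the bound to the full interval: writing $M := \|R\|_{[0,n]}$, we have $\|R\|_{[n-1,n]} \le \|R\|_{[0,n-1]} + 2d^2 M/n$, and a fixed-point argument gives $M = O(1)$ as long as $d$ is a small enough multiple of $\sqrt{n}$. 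Combined with the lower bound $M \ge n/(2d^2)$ derived above, this forces $d = \Omega(\sqrt{n})$, completing the proof. The same argument in fact goes through for one-sided $(1 - 1/\poly(n))$-approximation, which is really all the downstream oracle separation needs.
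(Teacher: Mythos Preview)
Your argument is correct and is essentially the original Nisan--Szegedy proof, carefully adapted to the one-sided notion $\dega$ used here: the symmetrization step is unchanged, and the only new wrinkle---that $Q(n)$ (hence $R(n)$) is not bounded a priori---you handle properly via the Ehlich--Zeller/Coppersmith--Rivlin bound on $[0,n-1]$ together with a Markov bootstrap to control the extra unit interval. The fixed-point computation $M \le \|R\|_{[0,n-1]} + (2d^2/n)M$ does give $M = O(1)$ once $d \le c\sqrt{n}$ for small enough $c$, and that contradicts $M \ge n/(2d^2)$, so the conclusion follows.

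That said, note that the paper itself does \emph{not} prove this statement: it is quoted verbatim as a result of Nisan and Szegedy and used as a black box in the oracle-separation section. So there is no ``paper's proof'' to compare against; you have supplied a self-contained proof where the paper simply cites the literature. One small remark: the cited source proves the bound for the two-sided approximate degree $\adeg(\AND_n)$, and the one-sided version you need does require the extra care you take (controlling $\|R\|_{[0,n]}$ when $R(n)$ may be unbounded below). Your handling of this is fine; alternatively one can observe that any one-sided approximator $p$ can be capped at the value $1$ via a low-degree univariate polynomial to yield a two-sided approximator of comparable degree, reducing directly to the two-sided statement.
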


Then we consider the problem $\adaand_n := \Ada_{\AND_n,\log_2 n}$.

By Theorem~\ref{theo:hard-adapt-postbqp}, we have

$$
\PPdt(\adaand_n) \ge \Omega(\sqrt{n}).
$$

On the other hand, there is a simple $\polylog(n)$-time $\PTIME^{\NP}$ algorithm for $\adaand$. By a standard diagonalization argument, we have the following corollary.

\begin{cor}
There exists an oracle $\Oracle$ such that $\PTIME^{\NP^{\Oracle}} \not\subset \PP^{\Oracle}$.
\end{cor}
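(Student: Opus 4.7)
My plan is to upgrade the query separation $\PPdt(\adaand_n) = \Omega(\sqrt{n})$ to an oracle separation via the standard padding-and-diagonalization recipe. First I would set up the encoding: for each $n$, reserve the oracle strings of length exactly $n$ (a block of $2^n$ bits) to encode an input to $\adaand_{m(n)}$, choosing $m(n) = 2^{\Theta(n)}$ large enough that the input length $(2m-1)\cdot m$ of $\adaand_m$ fits inside $2^n$ bits. Define
\[
L(\Oracle) := \bigl\{\,1^n : \text{the length-}n \text{ block of } \Oracle \text{ encodes a } 1\text{-input of } \adaand_{m(n)}\,\bigr\}.
\]

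Next I would verify $L \in \PTIME^{\NP^{\Oracle}}$. The obvious deterministic algorithm for $\adaand_m$ walks the adaptive tree top-down, making $d+1 = \log_2 m(n) + 1 = O(n)$ adaptive queries of the form ``is $\AND_m$ of some prescribed $m$-bit segment equal to $1$?''. Each such question is a single $\coNP^{\Oracle}$ (equivalently $\NP^{\Oracle}$) call asking whether some bit of the segment is zero, and the description of the segment has bit-length $\poly(n)$. The whole simulation is therefore $\poly(n)$-time.

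Finally I would handle the diagonalization $L \notin \PP^{\Oracle}$, which is the main technical step. Enumerate clocked polynomial-time $\PP$ oracle machines $M_1, M_2, \ldots$, where $M_i$ runs in time at most $n^{c_i}$; on input $1^n$ it thus makes at most $n^{c_i}$ oracle queries with probability gap at least $2^{-n^{c_i}}$ and, once the oracle is fixed outside the length-$n$ block, it induces a $\PPdt$ algorithm on that block of complexity at most $2 n^{c_i}$. Process the $M_i$'s in order. Having committed the blocks for $n_1 < \cdots < n_{i-1}$, pick $n_i$ large enough that $2 n_i^{c_i}$ is below the lower bound $\Omega(\sqrt{m(n_i)})$, which is exponential in $n_i$ for our choice of $m$. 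Fix everything outside the length-$n_i$ block consistently with prior commitments (arbitrary elsewhere); the resulting $\PPdt$ algorithm on the block must, by our lower bound, err on some input $x_i$, so commit the block to $x_i$. The only delicate point, rather than a deep obstacle, is keeping commitments non-conflicting across the $M_i$'s: dedicating one oracle length per $n_i$ makes this automatic, since $M_i$'s queries at lengths $\ne n_i$ are already settled by prior commitments or free to fix later. Any extension of the resulting partial oracle then decides $L$ in $\PTIME^{\NP^{\Oracle}}$ while escaping every $\PP^{\Oracle}$ machine.
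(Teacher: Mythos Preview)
Your proposal is correct and is exactly the ``standard diagonalization argument'' the paper invokes without further detail (the paper's entire proof is the one-line observation that $\PPdt(\adaand_n)=\Omega(\sqrt n)$ while $\adaand$ has a $\polylog$ $\PTIME^{\NP}$ algorithm, followed by the words ``by a standard diagonalization argument''). One small tightening: for your final ``any extension of the resulting partial oracle'' claim to hold, you must also \emph{commit} at stage $i$ the oracle bits at all lengths $\le n_i^{c_i}$ other than $n_i$ (say to zero) and then choose $n_{i+1}>n_i^{c_i}$, so that $M_i$'s behavior on $1^{n_i}$ is genuinely pinned down rather than depending on bits that are merely ``free to fix later''; otherwise a different extension could un-fool $M_i$.
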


\paragraph*{$\PTIME^{\SZK^{\Oracle}} \not\subset \PP^{\Oracle}$}
In order to establish the oracle separation between $\PTIME^{\SZK}$ and $\PP$, we need the following results in \cite{aaronson2012impossibility}.

\begin{theo}[Essentially Theorem~8 in~\cite{aaronson2012impossibility}]\label{theo:PTP-lowb}
	$\dega(\PTP_n) \ge \Omega(n^{1/3})$.
\end{theo}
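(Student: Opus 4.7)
The plan is to prove this lower bound by the polynomial method with symmetrization, following Aaronson's original argument~\cite{aaronson2012impossibility}. Suppose for contradiction that there is a polynomial $p$ of degree $d$ that one-sided $1/3$-approximates $\PTP_n$; the goal is to derive $d = \Omega(n^{1/3})$.

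First, I would encode each input $f : [n] \to [n]$ as the $n \times n$ indicator matrix $X$ with $X_{ij} = 1$ iff $f(i) = j$, and translate $p$ into a polynomial $\tilde p$ of degree $O(d)$ in the variables $X_{ij}$ (the bit-to-indicator encoding costs at most a $\log n$ factor, which is absorbed into constants). Since $\PTP_n$ is invariant under independent permutations of the rows and columns of $X$, I would symmetrize $\tilde p$ over $S_n \times S_n$, obtaining a polynomial of the same degree whose value depends only on the partition $\lambda(f)$ of $n$ recording the multiset of preimage sizes.

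Next, I would reduce to a univariate polynomial inequality. For each integer $k \in [1, n]$, let $\mathcal{D}_k$ be the uniform distribution over functions that are surjective onto a uniformly random $k$-subset of $[n]$, and set $P(k) := \mathbb{E}_{f \sim \mathcal{D}_k}[\tilde p(f)]$. A Minsky--Papert argument shows that $P$ is a polynomial in $k$ of degree $O(d)$. The approximation hypothesis then yields the constraints $P(n) \ge 1$ (since $\mathcal{D}_n$ is supported on permutations) and $0 \le P(k) \le 1/3$ for every integer $k \le 7n/8$ (since any such $f$ misses at least $n/8$ values and hence lies in the NO-instance of $\PTP_n$). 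To amplify this into an $\Omega(n^{1/3})$ lower bound rather than the easy $\Omega(\sqrt n)$ that follows from the Markov brothers' inequality alone, one needs effective boundedness of $P$ across the intermediate range $k \in (7n/8, n)$; this is obtained via a two-parameter refinement of the symmetrization (using that $\tilde p$ is itself bounded on all of $\{0,1\}^{n \times n}$ after an appropriate shift) combined with the Markov-type polynomial inequality due to Aaronson and Shi.

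The main obstacle is this last boundedness step: because one-sided approximation places no constraint on $p$ at non-promise inputs, the intermediate range $(7n/8, n)$ takes genuine work to control. A cleaner alternative I would try first is a reduction from the Collision problem on functions $[n] \to [n]$: any two-to-one $f$ has image of size $n/2$, missing $n/2 \ge n/8$ values, and so lies in the NO-instance of $\PTP_n$; any one-to-one $f$ is a permutation and lies in the YES-instance. Restricting any one-sided $1/3$-approximator for $\PTP_n$ to these inputs therefore yields a one-sided $1/3$-approximator for Collision, and the Aaronson--Shi $\Omega(n^{1/3})$ lower bound on the (one-sided) approximate degree of Collision then gives the claim.
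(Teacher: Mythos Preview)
The paper does not give its own proof of this statement; it simply quotes the bound from~\cite{aaronson2012impossibility} and uses it as a black box for the oracle separation in Appendix~\ref{sec:oracle}. So there is no ``paper's proof'' to compare against, and your proposal already attempts considerably more than the paper does.

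That said, your second route---the reduction from Collision---does not work under this paper's definition of $\dega$. Definition~\ref{defi:deg-approx} places \emph{no} constraint on a one-sided approximator outside the promise. For the Collision promise (one-to-one versus two-to-one) this makes the problem easy: letting $C(x)=\sum_{i<j}[f(i)=f(j)]$ be the number of colliding pairs, the polynomial $p(x)=1-\tfrac{2}{n}\,C(x)$ takes the value $1$ on every one-to-one input and exactly $0$ on every two-to-one input, so it one-sided $0$-approximates Collision with degree $O(\log n)$ in the bit encoding. Hence $\dega(\mathrm{Collision}_n)=O(\log n)$, and restricting a hypothetical low-degree approximator for $\PTP_n$ to Collision inputs yields no contradiction whatsoever. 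The $\Omega(n^{1/3})$ Aaronson--Shi bound you invoke is for acceptance probabilities of quantum algorithms, which are automatically bounded in $[0,1]$ on \emph{all} inputs; that global boundedness is precisely what $\dega$ does not impose.

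Your first route is much closer to what Aaronson actually does, and you correctly isolate the crux (controlling the symmetrized polynomial on the intermediate parameter range). But the fix you propose, that ``$\tilde p$ is itself bounded on all of $\{0,1\}^{n\times n}$ after an appropriate shift'', is again unavailable for exactly the reason above. What distinguishes $\PTP_n$ from Collision here is that the no-promise of $\PTP_n$ is far more permissive: \emph{every} function with image size at most $7n/8$---in particular every $r$-to-one function with $r\ge 2$---is a no-instance. This supplies enough in-promise constraints on $p$ to run Aaronson's bivariate symmetrization and Markov-type argument without ever appealing to off-promise boundedness; working out those details is the actual content of~\cite{aaronson2012impossibility}, and it is the missing piece in your sketch.
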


\begin{prop}[Proposition 2 in~\cite{aaronson2012impossibility}]\label{theo:PTP-easy}
	$\PTP_n$ has an $O(\log n)$ time $\SZK$ protocol.
\end{prop}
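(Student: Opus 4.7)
\begin{proofsketch}
The plan is to reduce $\PTP_n$ to the Statistical Difference promise problem $\mathrm{SD}$, which is $\SZK$-complete by the theorem of Sahai and Vadhan. Given oracle access to $f:[n]\to[n]$, I would define two samplers: let $D_0$ be the distribution on $[n]$ obtained by drawing $x\in[n]$ uniformly at random and outputting $f(x)$, and let $D_1$ be the uniform distribution on $[n]$. Each sampler uses $\log_2 n$ random bits and at most one query to $f$, so both have circuit description size and running time $O(\log n)$.

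Next I would verify the gap on the two promise cases of $\PTP_n$. In the YES case $f$ is a permutation, so the pushforward of the uniform distribution under $f$ is again uniform; hence $D_0 = D_1$ and their statistical distance is exactly $0$. In the NO case, $f$ disagrees with every permutation on at least $n/8$ coordinates, which is equivalent to $|\mathrm{image}(f)| \le 7n/8$: if the image has size $k$, one can retain a single preimage per image element and extend arbitrarily on the remaining positions to obtain a permutation agreeing with $f$ on exactly $k$ coordinates, so the Hamming distance from $f$ to the permutation set is $n-k$. Consequently $D_0$ is supported on at most $7n/8$ elements of $[n]$, and $\Delta(D_0, D_1) \ge 1 - 7/8 = 1/8$.

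The final step is to invoke the polarization lemma of Sahai--Vadhan to amplify the constant gap $(\beta,\alpha)=(0,1/8)$ — which satisfies $\alpha^2 > \beta$ — up to the standard $(1/3, 2/3)$ thresholds, and then run their public-coin honest-verifier $\SZK$ protocol for $\mathrm{SD}$. Since the YES case of $\PTP_n$ corresponds to the NO case of $\mathrm{SD}$, I would appeal to closure of $\SZK$ under complement (Okamoto) to flip the protocol. Because the samplers have description size $O(\log n)$ and the verifier in the $\mathrm{SD}$ protocol runs in time polynomial in the sampler description size, the resulting $\SZK$ protocol runs in time $O(\log n)$ overall.

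The main technical obstacle is controlling the running time through the polarization step, since it combines several copies of the samplers. However, because the initial gap is already a constant, only $O(1)$ rounds of the standard polarization construction are needed to reach the $(1/3,2/3)$ thresholds, so the blow-up is polylogarithmic in $n$ and the overall $O(\log n)$ time bound is preserved.
\end{proofsketch}
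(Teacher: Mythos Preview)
The paper does not supply its own proof here; the proposition is quoted directly from \cite{aaronson2012impossibility} and used as a black box in the oracle-separation argument. Your sketch is a correct reconstruction of the standard route via the Sahai--Vadhan $\mathrm{SD}$ problem: the key combinatorial step, that the Hamming distance from $f$ to the nearest permutation equals $n-|\mathrm{image}(f)|$, is right (agreement on a set $S$ forces $f|_S$ injective, and conversely one preimage per image point extends to a permutation), the $1/8$ total-variation lower bound follows by taking the event $[n]\setminus\mathrm{image}(f)$, and the polarization hypothesis $\alpha^2>\beta$ is trivially satisfied since $\beta=0$.

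One small inconsistency worth flagging: after polarization and the generic $\SZK$ protocol for $\mathrm{SD}$, the verifier's running time is $\poly(\log n)$, not literally $O(\log n)$; you acknowledge the polylogarithmic blow-up but then assert the $O(\log n)$ bound survives. This is harmless for the paper's application, which only needs a $\polylog(n)$-time $\PTIME^{\SZK}$ algorithm for $\adaptp_n$. If you want to match the stated $O(\log n)$ bound without the $\mathrm{SD}$ detour, a direct two-message honest-verifier $\SZK$ protocol works: the verifier sends a uniformly random $y\in[n]$, the prover returns a claimed preimage $x$, and the verifier makes a single oracle query to check $f(x)=y$. Completeness is perfect when $f$ is a permutation, soundness error is at most $7/8$ when $|\mathrm{image}(f)|\le 7n/8$, and the honest-verifier simulator samples $x$ uniformly and outputs $(f(x),x)$. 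A constant number of parallel repetitions then suffices, and the appeal to complement-closure is avoided.
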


Then for the problem $\adaptp_n := \Ada_{\PTP_n,\log_2 n}$, by Theorem~\ref{theo:PTP-lowb} and Theorem~\ref{theo:ada-postbqp}, we have
$$
\PPdt(\adaptp_n) \ge \Omega(n^{1/3}).
$$

By Proposition~\ref{theo:PTP-easy}, we can see $\adaptp$ admits a $\polylog(n)$-time $\PTIME^{\SZK}$ algorithm, hence again by a standard diagonalization argument, we have the following corollary.

\begin{cor}
	There exists an oracle $\Oracle$ such that $\PTIME^{\SZK^{\Oracle}} \not\subset \PP^{\Oracle}$.
\end{cor}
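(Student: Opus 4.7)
The plan is to combine the quantum hardness-amplification machinery (Theorem~\ref{theo:hard-adapt-postbqp}) with the one-sided approximate-degree lower bound for the permutation testing problem and the cheap $\SZK$ protocol for $\PTP_n$, then promote the resulting query-complexity gap to a relativized separation by a standard diagonalization argument.

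First I would observe that $\dega(\PTP_n) \ge \Omega(n^{1/3})$ by Theorem~\ref{theo:PTP-lowb}, and since $\mdega(f) \ge \dega(f)$ for every partial function $f$, we also have $\mdega(\PTP_n) \ge \Omega(n^{1/3})$. Setting $d := \lceil \log_2 n \rceil$ and $\adaptp_n := \Ada_{\PTP_n,d}$, Theorem~\ref{theo:hard-adapt-postbqp} with $T = \Omega(n^{1/3})$ yields
$$\PPdt(\adaptp_n) \;>\; \min\!\bigl(T/4,\; 2^{d-1}\bigr) \;=\; \Omega(n^{1/3}).$$
Since the input length of $\adaptp_n$ is only $O(n^2 \log n)$, this bound is superpolylogarithmic in the input length, so $\{\adaptp_n\}_n \notin \PPdt$.

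For the matching upper bound, by Proposition~\ref{theo:PTP-easy} the problem $\PTP_n$ admits an $O(\log n)$-time $\SZK$ protocol. A $\PTIME^{\SZK}$ machine therefore evaluates $\adaptp_n$ in the obvious top-down fashion: issue an $\SZK$ query on the root copy of $\PTP_n$, descend into the appropriate one of the two depth-$(d-1)$ subinstances, and repeat for $d+1 = O(\log n)$ levels. The whole procedure runs in $\polylog(n)$ time and makes $\polylog(n)$ oracle accesses, so $\{\adaptp_n\}_n \in \PTIME^{\SZK^{\dt}}$, and this inclusion plainly relativizes.

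Finally, I would lift the query-complexity gap to an oracle separation by diagonalization. Enumerate the polynomial-time $\PP^{\Oracle}$ machines $M_1, M_2, \dotsc$ together with their running-time polynomials; at stage $i$, pick an input length $n$ large enough that the polynomial query bound of $M_i$ is dwarfed by the $\Omega(n^{1/3})$ lower bound, encode a fresh instance of $\adaptp_n$ in a hitherto-untouched portion of $\Oracle$, and invoke the query lower bound to find oracle bits on which $M_i$ cannot distinguish yes- from no-instances with the required inverse-polynomial probability gap. The expected main obstacle is merely bookkeeping: ensuring that the $\PPdt$ bound on $\adaptp_n$ translates to an unconditional bound against every polynomial-time oracle machine with $\poly$ many adaptive oracle queries and inverse-polynomial acceptance gap. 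This follows because a $\PP^{\Oracle}$ machine of running time $p_i(n)$ induces a randomized decision tree on the oracle bits of depth at most $p_i(n)$ with the same gap, and Definition~\ref{defi:PPdt} measures exactly that quantity (depth plus $\log_2(1/\alpha)$). Combining this with the relativized $\PTIME^{\SZK^{\Oracle}}$ algorithm for $\adaptp_n$ yields the desired oracle $\Oracle$.
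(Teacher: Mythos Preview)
Your proposal is correct and follows essentially the same route as the paper: invoke the $\Omega(n^{1/3})$ one-sided approximate-degree lower bound for $\PTP_n$ (Theorem~\ref{theo:PTP-lowb}), pass through $\mdega \ge \dega$, apply Theorem~\ref{theo:hard-adapt-postbqp} with $d \approx \log_2 n$ to get $\PPdt(\adaptp_n) = \Omega(n^{1/3})$, pair this with the $\polylog(n)$-time $\PTIME^{\SZK}$ algorithm coming from Proposition~\ref{theo:PTP-easy}, and finish by standard diagonalization. In fact you spell out several details (the $\mdega \ge \dega$ step, the input-length accounting, and the diagonalization bookkeeping) that the paper simply leaves implicit under ``by a standard diagonalization argument.''
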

	
\end{document}